\documentclass[a4paper, 11pt]{article}
\usepackage{amsmath, amsthm, bbm, booktabs, multirow, threeparttable, soul}
\usepackage[margin=1in]{geometry}

\usepackage{float, subcaption, standalone}
\usepackage{tikz, pgfplots}
\usetikzlibrary{arrows.meta, shapes, patterns, calc, positioning, plotmarks, pgfplots.groupplots}
\pgfplotsset{compat=1.18}

\usepackage{algorithm, algpseudocode}
\algnewcommand\algorithmicinput{\textbf{Input:}}
\algnewcommand\algorithmicoutput{\textbf{Output:}}
\algnewcommand\Input{\item[\algorithmicinput]}
\algnewcommand\Output{\item[\algorithmicoutput]}

\newtheorem{theorem}{Theorem}

\newtheorem{lemma}{Lemma}

\newtheorem{remark}{Remark}
\newtheorem{corollary}{Corollary}

\usepackage{apacite}
\usepackage{natbib}
\usepackage{hyperref}
\usepackage{cleveref}
\usepackage[labelfont=bf]{caption}
\usepackage{authblk}

\usepackage{tabularx, environ}
\makeatletter
\newcommand{\problemtitle}[1]{\gdef\@problemtitle{#1}}
\newcommand{\probleminstance}[1]{\gdef\@probleminstance{#1}}
\newcommand{\problemquestion}[1]{\gdef\@problemquestion{#1}}
\NewEnviron{dproblem}{
    \problemtitle{}\probleminstance{}\problemquestion{}
    \BODY
    \par\addvspace{.5\baselineskip}
    \noindent
    \begin{tabularx}{\textwidth}{@{\hspace{\parindent}} l X c}
        \multicolumn{2}{@{\hspace{\parindent}}l}{\textbf{\@problemtitle}} \\
        \text{INSTANCE:} & \@probleminstance \\
        \text{QUESTION:} & \@problemquestion
    \end{tabularx}
    \par\addvspace{.5\baselineskip}
}
\makeatother

\title{Designing Efficient and Reachable Routes:\\ The $k$-Step-Central Shortest Path Problem}
\author[1,2]{Johnson Phosavanh}
\author[2]{Dmytro Matsypura\thanks{Corresponding author: dmytro.matsypura@sydney.edu.au}}
\affil[1]{The Hong Kong Polytechnic University}
\affil[2]{The University of Sydney}

\begin{document}

\maketitle

\begin{abstract}
Designing rapid transportation routes requires balancing efficiency and reachability. Shortest-path models ensure direct, cost-efficient routes but ignore coverage, while centrality-based approaches maximize accessibility but do not enforce operational constraints. We study the problem of selecting a shortest path that maximizes reachability, measured as the number of nodes within a fixed distance of the path. To do this, we introduce the \emph{$k$-step-central shortest path} problem and analyse its structural properties. We show that optimal solutions on unweighted graphs can be found in polynomial time and propose an algorithm with a novel pruning rule. We also prove that the problem becomes NP-hard when edge weights are introduced. Additionally, we show that our algorithm can be used to solve the NP-hard problem of finding the closeness-central shortest path in a graph. We demonstrate the efficiency and scalability of our algorithm on synthetic and real-world networks with up to 2,000 nodes. Our results show that improving reachability can substitute for route expansion: increasing the reach of transit lines drastically increases their coverage with shorter routes. This suggests that investments in active transport infrastructure that improve reachability can be more effective than extending primary routes, providing a data-driven basis for allocating resources in network design.
\end{abstract}

\section{Introduction}

\subsection{Motivation}

When planning mass transit routes in dense urban areas, planners must maximise ridership to maximise the return on investment. Directness, convenience, proximity, and destination choice are a natural set of spatial concepts that encapsulate basic travel needs and should be considered simultaneously when planning transport options \citep{TransportChoice}. Broadly speaking, these concepts can be grouped into two goals: direct routing and accessibility/connectivity. Indirect routes increase travel times, prompting commuters to utilise alternative forms of transport. Similarly, the location of stations and stops affects public transport uptake, with dense, high-population catchment areas near stations leading to higher potential ridership. However, these goals may not be simultaneously satisfiable, and ``there may need to be trade-offs between, for example, directness and linking destinations'' \citep[p.~5]{TransportChoice}. Direct routes tend to bypass peripheral areas, whereas routes designed to maximise coverage often become longer and slower. We model this problem as a most-central shortest path problem. The goal is to find a shortest path in a network that is most central with respect to a measure of centrality; in our case, the $k$-step reach centrality, which counts the number of nodes at most $k$ hops from the selected path. We use the path to represent the planned route and the $k$-step reach neighbourhood to model its catchment area. We restrict the feasible set to the shortest paths between all pairs of nodes in the graph, ensuring that our proposed routes are direct and cost-efficient to construct. 

We focus on the $k$-step reach centrality measure because it offers flexibility. It can serve as both a local and a global measure of centrality, depending on the choice of $k$. A small $k$ emphasises the immediate neighbourhood of a path, while a sufficiently large $k$ can cover the entire network. In the transport setting, $k$ can be interpreted as a proxy for the connectivity of an area. Varying $k$ allows urban planners to model different commuter behaviours: a higher $k$ corresponds to cities with better accessibility standards and larger catchment areas. On a grid network, $k$ corresponds directly to the number of blocks that commuters are willing to walk to reach their nearest station. In practice, the value of $k$ can be calibrated for a specific context. For example, urban planners have proposed guidelines calling for railway stations and bus stops to be within 800 and 400 metres of transport users, respectively \citep{Taylor2024Distances, NSWTransportAccessibility}. In these cases, $k$ can be set based on the mode of transport and the number of blocks a person can travel by active modes of transport within a given distance or time. Studying the route-planning problem within this framework also allows us to quantify the cost-benefit trade-off between improving accessibility and optimising route design by varying $k$. For example, planners can assess the effects of implementing lower-cost active transport projects, such as establishing dedicated bike lanes, designing pedestrian-friendly infrastructure (e.g., covering, lighting, and/or paving pathways, establishing shared traffic zones), and promoting active lifestyles, compared with more expensive mass transit options.

Our problem also arises in many other applications, such as last-mile logistics, social network analysis, and security. In last-mile logistics, planners are interested in establishing distribution centres that serve consumers. Our model can be used to select the locations of distribution centres that can be efficiently linked while maximising their coverage. In social network analysis, our proposed model can be used to determine the most efficient way to send a message between users while maximising its potential reach. In security applications, actors with nefarious intentions could use this model to distribute malicious information or compromise software or hardware to achieve maximum impact with minimal chance of detection. On the other hand, government agencies can also use this model to develop effective intervention plans. During disasters, our model can be used to identify locations that can be readily linked to expedite the delivery of relief and emergency supplies.

\subsection{Related literature and problem complexity}

In this subsection, we outline key literature and show that the problem's complexity depends strongly on the centrality measure, restrictions on the feasible set, and the family of graphs under consideration.

Naturally, the concept of centrality has been studied on street networks to identify the most important streets. For example, \citet{CrucittiPaolo2006} conducted a spatial analysis of eighteen urban street networks, examining the distributions of node-based centrality measures, including closeness, betweenness, straightness, and information centrality, to cluster similar cities. Some of these node-based measures can be extended to groups of nodes \citep{Everett1999}, providing tools to answer questions such as which set of nodes is most important in a network \citep{Borgatti2006}. Inherently, these problems fall under the umbrella of facility-location problems, in which central planners seek to develop central distribution hubs to meet demand across the network. However, in many transportation applications, additional constraints may need to be placed on the chosen locations. For example, if planners require a route for a road or railway, the selected nodes must be connected by a path. This has been studied under the `hub line location problem' \citep{deSa2015}. In this paper, the authors aim to find a path that minimizes the total weighted travel time, whereas in our study, we require the path to be the shortest and employ a different centrality measure. More generally, the problem of finding the most central set of nodes in a network has been widely studied. We refer the reader to a survey by \citet{Camus2024}.

The problem we consider is closely related to the classical NP-hard minimum vertex cover problem \citep{GareyJohnson1979}, in which the goal is to find the smallest set of nodes that covers all vertices in the graph. Similarly, the $k$-centre problem, which seeks to select $k$ ``centre'' nodes in a graph to minimize the maximum distance from any vertex to its nearest centre, is NP-hard \citep[p.~47]{Vazirani2003}.

Broader optimisation studies of group centrality measures have been popular in the last decade. For instance, \citet{dawande2012} introduce the portal problem, in which the objective is to find a set of $k$ nodes that maximises the normalised betweenness centrality. The authors show that the portal problem is strongly NP-hard on an arbitrary graph. Later, \citet{Vogiatzis2014} study the problem of finding the most and least central cliques in a graph with respect to group degree, betweenness and closeness centralities. The authors show that this problem is NP-hard regardless of the centrality measure employed, and they provide linear binary programs for all three centrality measures. In a later paper, \citet{Vogiatzis2019} study the problem of finding the most degree-central induced star and show that it is NP-hard. They provide two binary programs and approximation algorithms, and evaluate their performance in protein-protein interaction networks. This work is later extended by \citet{Camur2021}, who utilise decomposition methods to establish faster binary programs.

The first studies in which the set of selected nodes was restricted to form a path were conducted by \citet{hedetniemi1981jordan} and \citet{Slater1982}. In both studies, the authors restricted their analysis to trees and the eccentricity centrality measure. \citet{hedetniemi1981jordan} showed that the Jordan centre (the centre of a graph: either a single node or two connected nodes) can be found in linear time, and \citet{Slater1982} later showed that the most central path can also be found in linear time. \citet{Gomez2023} studies the same problem on different families of graphs for which finding the most central path is known to be NP-hard (families of graphs for which the Hamiltonian path problem is known to be NP-complete), and provides upper bounds for the minimum eccentricity for arbitrary and $k$-connected graphs. In a similar line of research, \citet{Current1989coveringTSP} introduce the NP-hard covering salesman problem, where the goal is to find a minimum-cost tour such that all cities not on the tour are within a certain distance of the tour. \citet{Zeng2019coveringpath} later studied a variant of this problem, termed the covering path problem, on a grid. Instead of finding a tour, they aimed to find the minimum-cost path such that a predetermined subset of nodes lies within a certain distance of the path. The authors provide polynomial-time approximation algorithms based on simple construction rules.

In the studies outlined above, the feasible set is restricted to paths between any pair of nodes. \citet{Matsypura2023} later studied the problem of finding the most degree-central set of nodes under various restrictions on the feasible set. In this paper, the authors restrict the feasible set to shortest paths, induced paths, simple paths, trails, and walks, and show that all cases are NP-hard, except for the polynomial-time solvable case of determining the most degree-central shortest path. \citet{PhosavanhMatsypura2025} extended the latter setting of finding the most central shortest paths in a network to three centrality measures, namely the degree, betweenness, and closeness centrality. The authors provide a faster algorithm for finding the most degree-central shortest path and show that the same problem is NP-hard on graphs with weighted edges. It was also shown that finding the most betweenness-central shortest path is polynomial-time solvable, while finding the most closeness-central shortest path is NP-hard regardless of whether the graph has weighted edges or not.

\subsection{Our contributions and structure of the paper}

In this paper, we restrict the feasible set to all pairwise shortest paths and use $k$-step reach centrality as our centrality measure. The remainder of the paper is organised as follows. In Section~\ref{sec:problem-definition}, we introduce the notation and formally define the problem. Section~\ref{sec:2-step} explores properties of the $2$-step-central shortest path. We generalise these properties in Section~\ref{sec:k-step} for the $k$-step-central shortest path and present a polynomial-time algorithm in Section~\ref{sec:algorithm}. Our theoretical analysis continues with a discussion of the problem's complexity on graphs with weighted edges in Section~\ref{sec:weighted}. We then show how our analysis can be extended to find the closeness-central shortest path in Section~\ref{sec:closeness}. In Section~\ref{sec:numerical}, we conduct extensive numerical experiments with our algorithm on synthetic and real-world urban street networks and provide insights into the results. Finally, Section~\ref{sec:conclusion} concludes the paper.

\section{Problem definition and notation}\label{sec:problem-definition}

Let $G$ be a simple unweighted graph with vertex set $V$ and edge set $E$, which we sometimes write as $G = (V, E)$. Our goal is to find a shortest path in $G$ with the highest $k$-step reach centrality. 

We begin by defining the $k$-step reach centrality of a path. We use $P$ to denote a path in $G$, i.e., a finite sequence of distinct, adjacent vertices in $G$, and write $P = \langle p_0, p_1, \dots, p_\ell\rangle$ for a path starting at $p_0$ and visiting $p_1, \dots, p_\ell$. Let $\mathcal{N}[P]$ denote the \emph{closed neighbourhood} of $P$, i.e., 
\[
\mathcal{N}[P] := \{v \colon (u, v) \in E, u \in P\}\cup P.
\]

Then, the \emph{closed $k$-step neighbourhood} of $P$, $\mathcal{N}_k[P]$, can be defined recursively as 
\[
\mathcal{N}_k[P] = \mathcal{N}[\mathcal{N}_{k-1}[P]],
\]
with $\mathcal{N}_1[P] = \mathcal{N}[P]$. Similarly, let $\mathcal{N}(P)$ denote the \emph{open neighbourhood} of $P$, i.e., 
\[
\mathcal{N}(P) := \mathcal{N}[P] \setminus P.
\]
In words, the open neighbourhood of $P$ is the closed neighbourhood of $P$ minus the path $P$ itself.  To obtain the \emph{open $k$-step neighbourhood} of $P$, we remove the path from its closed $k$-step neighbourhood:
\[
\mathcal{N}_k(P) = \mathcal{N}_k[P] \setminus P.
\]
It then follows that the \emph{open $k$-step reach centrality} of $P$ is the size of its open $k$-step neighbourhood, i.e.,
\[
C_{k}(P) = |\mathcal{N}_k (P)|.
\]
We focus our attention on the open neighbourhood because it is more intuitive. To streamline the exposition, we will drop the word \emph{open} and refer to $\mathcal{N}_k(P)$ and $C_{k}(P)$ as $k$-step neighbourhood and $k$-step reach centrality, respectively. 

Note that for unweighted graphs, the $k$-step neighbourhood of $P$ can also be defined in terms of distance. Let $d(u,v)$ be the distance between nodes $u$ and $v$. Similarly, let $d(u, P)$ be the distance between node $u$ and path $P$, which we define as
\[
d(u, P) = \min_v\{d(u, v) \colon v \in P\}.
\]
Then, the $k$-step neighbourhood of $P$ can be equivalently expressed as
\[
\mathcal{N}_k(P) = \{u \in V \colon d(u, P) \leq k\} \setminus P.
\]
The equivalence implies that the definition of $k$-step reach centrality remains the same.

Then, the problem of finding the $k$-step-central shortest path in $G$ can be written as
\begin{equation}\label{eq:problem}
    \max_P \left\{C_{k}(P) : P \in \mathcal{SP}(G)\right\},
\end{equation}
where $\mathcal{SP}(G)$ is the set of all pairwise shortest paths in $G$. This problem is closely related to the central path problem \citep{Slater1982} and we discuss this connection in detail in Section~\ref{sec:closeness}. 

\section{The 2-step-central shortest path}\label{sec:2-step}

In this section, we consider Problem~\eqref{eq:problem} with $k=2$, i.e., finding the 2-step-central shortest path in a graph. All results in this section can be generalised to arbitrary $k$, as discussed in Section~\ref{sec:k-step}.

Let $P = \langle p_0, \dots, p_\ell\rangle$ be a 2-step-central shortest path of length $\ell$ from $p_0$ to $p_\ell$. For $m \leq \ell$, define the subpaths $P_{m-} := \langle p_0, \dots, p_m\rangle$ and $P_{m+} := \langle p_m, \dots, p_\ell\rangle$. Recall that $d(s,t)$ denotes the distance between $s$ and $t$.

\begin{lemma}\label{lem:2-neighbour-disjoint}
    If $d(s,t) \geq 5$, then the 2-step neighbourhoods of $s$ and $t$ do not overlap.
\end{lemma}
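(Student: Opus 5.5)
The plan is a proof by contradiction that uses only the triangle inequality for the graph metric $d$. Here the $2$-step neighbourhood of a single vertex is read via the distance characterisation from Section~\ref{sec:problem-definition}. Taking $P=\langle s\rangle$ as the trivial path gives $\mathcal{N}_2(s)=\{u\in V\colon 1\le d(u,s)\le 2\}$, and likewise for $t$. Since these open neighbourhoods are contained in the closed ones, $\mathcal{N}_2[s]=\{u\colon d(u,s)\le 2\}$, it suffices to show that the closed neighbourhoods are disjoint. That statement is the stronger one and is the form most useful later.

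Suppose, for contradiction, that some vertex $u$ lies in both $\mathcal{N}_2[s]$ and $\mathcal{N}_2[t]$, so that $d(s,u)\le 2$ and $d(u,t)\le 2$. Because $G$ is unweighted, $d$ is the shortest-path metric and satisfies the triangle inequality. Concatenating a shortest $s$--$u$ walk with a shortest $u$--$t$ walk then gives $d(s,t)\le d(s,u)+d(u,t)\le 4$, which contradicts the hypothesis $d(s,t)\ge 5$. So no such $u$ exists and the neighbourhoods are disjoint.

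No step here is a real obstacle. The only point needing care is the equivalence between the recursive definition $\mathcal{N}_k[\cdot]$ and the distance-ball description, and the paper has already stated that equivalence for unweighted graphs. If it had to be rederived, a short induction on $k$ would do it: $u\in\mathcal{N}_k[\{s\}]$ if and only if $u$ is in $\mathcal{N}_{k-1}[\{s\}]$ or is adjacent to a vertex there, which is the same as $d(s,u)\le k$. I will also note that the bound is tight. If $d(s,t)=4$, the midpoint of a shortest $s$--$t$ path lies in both neighbourhoods, which explains the threshold $5$ and anticipates the generalisation $d(s,t)\ge 2k+1$ in Section~\ref{sec:k-step}.
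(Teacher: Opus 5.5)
Your proof is correct and follows essentially the same route as the paper, which also argues by contradiction that a shared vertex would give an $s$--$t$ connection of at most four hops. Your triangle-inequality form $d(s,t)\le d(s,u)+d(u,t)\le 4$ is slightly cleaner than the paper's explicit path $\langle s, p_i, p_j, p_\ell, t\rangle$: it also covers a shared vertex at distance less than $2$ and a concatenation that repeats vertices, and proving disjointness of the closed neighbourhoods is exactly the form needed for Corollary~\ref{cor:2-path-disjoint}.
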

\begin{proof}
    Suppose the 2-step neighbourhoods of $s$ and $t$ overlap. Then $s$ and $t$ are at most four hops apart. That is, if $p_j$ is reachable from $s$ via $p_i$ and from $t$ via $p_{\ell}$, then $\langle s, p_i, p_j, p_\ell, t\rangle$ is a path of length 4 between $s$ and $t$, leading to a contradiction.
\end{proof}
\begin{corollary}\label{cor:2-path-disjoint}
    If $P = \langle p_0, \dots, p_\ell\rangle$ is a shortest path with $\ell \geq 5$ and $0 \leq m \leq \ell-5$, then the 2-step neighbourhoods of $P_{m-}$ and $P_{(m+5)+}$ do not overlap.
\end{corollary}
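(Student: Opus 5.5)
The plan is to reduce the statement to Lemma~\ref{lem:2-neighbour-disjoint}, using the fact that subpaths of shortest paths are shortest paths. The 2-step neighbourhood of a path is contained in the union of the closed 2-step neighbourhoods of its vertices, since $\mathcal{N}_2(P) \subseteq \{u : d(u,P)\le 2\}$. Therefore, if some vertex $u$ lay in both $\mathcal{N}_2(P_{m-})$ and $\mathcal{N}_2(P_{(m+5)+})$, there would be indices $i \le m$ and $j \ge m+5$ with $d(u,p_i)\le 2$ and $d(u,p_j)\le 2$.

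Next I use that $P$ is a shortest path. Every subpath of a shortest path is itself a shortest path, so $d(p_i,p_j) = j - i \ge (m+5) - m = 5$. Applying Lemma~\ref{lem:2-neighbour-disjoint} with $s = p_i$ and $t = p_j$ shows that the 2-step neighbourhoods of $p_i$ and $p_j$ do not overlap. This contradicts the existence of $u$.

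There is a small subtlety because the lemma is phrased for open neighbourhoods, while $u$ might coincide with $p_i$ or $p_j$. I would avoid this by arguing directly with the triangle inequality:
\[
d(p_i,p_j) \le d(p_i,u) + d(u,p_j) \le 4 < 5 .
\]
This covers the closed neighbourhoods as well, so in particular the open ones.

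The only real content is the observation that $d(p_i,p_j)=j-i$ along a shortest path, since otherwise replacing the segment would give a shorter $p_0$--$p_\ell$ path. I do not expect any genuine obstacle.
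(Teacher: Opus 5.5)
Your proof is correct and matches the paper's intended argument. The paper states this corollary without proof, as an immediate consequence of Lemma~\ref{lem:2-neighbour-disjoint} applied to each pair $p_i$ ($i\le m$), $p_j$ ($j\ge m+5$), using $d(p_i,p_j)=j-i\ge 5$ along a shortest path; the open/closed subtlety you flag is actually vacuous, because a common vertex $u$ of the two open neighbourhoods lies off both subpaths and so cannot equal $p_i$ or $p_j$, but your triangle-inequality argument handles the closed case regardless.
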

\begin{lemma}\label{lem:sp2}
    Consider two nodes $s$ and $t$ such that $d(s,t) = \ell$ and $\ell \geq 5$. Let $m$ be a fixed positive integer such that $m \in \{5, \dots, \ell\}$. Assume that there exist two shortest paths between nodes $s$ and $t$, $P'$ and $P''$, that may differ only in the first $m-4$ nodes starting from the node adjacent to $s$. That is,
\begin{align*}
    P' &= \langle p'_0, \ldots, p'_{m-4}, p_{m-3}, p_{m-2}, p_{m-1}, p_m, \dots, p_\ell \rangle, \\
    P'' &= \langle p''_0, \ldots, p''_{m-4}, p_{m-3}, p_{m-2}, p_{m-1}, p_m, \dots, p_\ell \rangle,
\end{align*}
where $s = p'_0 = p''_0$ and $t = p_k$.
Then the shortest path $P'$ is at least as central as the shortest path $P''$ if and only if the
subpath $P'_{m-}$ is at least as central as the subpath $P''_{m-}$, i.e.,
$|\mathcal{N}_2(P')|\geq |\mathcal{N}_2(P'')|$ if and only if $|\mathcal{N}_2(P'_{m-})| \geq |\mathcal{N}_2(P''_{m-})|$.
\end{lemma}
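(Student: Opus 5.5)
We will show that the two neighbourhood sizes differ by exactly the same amount as the prefix neighbourhood sizes, namely
\[
|\mathcal{N}_2(P')| - |\mathcal{N}_2(P'')| = |\mathcal{N}_2(P'_{m-})| - |\mathcal{N}_2(P''_{m-})|,
\]
from which the equivalence is immediate. Throughout, write $Q := P'_{(m+1)+} = P''_{(m+1)+} = \langle p_{m+1},\dots,p_\ell\rangle$ for the common tail beyond index $m$; if $m=\ell$ the tail is empty and the claim is trivial. Note also that $p_{m-3},\dots,p_m$ lie in both prefixes.

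The first step is a decomposition of the closed neighbourhoods. For any path $R$ and any split $R = A\cup B$ of its vertex set, $\mathcal{N}_2[R] = \mathcal{N}_2[A]\cup\mathcal{N}_2[B]$, because $d(u,R)=\min\{d(u,A),d(u,B)\}$. Applying this to $P'$,
\[
\mathcal{N}_2[P'] = \mathcal{N}_2[P'_{m-}] \cup \mathcal{N}_2[Q],
\]
so by inclusion--exclusion
\[
|\mathcal{N}_2[P']| = |\mathcal{N}_2[P'_{m-}]| + |\mathcal{N}_2[Q]| - |\mathcal{N}_2[P'_{m-}]\cap\mathcal{N}_2[Q]|.
\]
The same identity holds with $P''$ in place of $P'$.

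The key step is to show that the intersection term is identical for the two paths. Split the prefix as $P'_{m-} = \langle p'_0,\dots,p'_{m-4}\rangle \cup \langle p_{m-3},\dots,p_m\rangle$. The first piece, $P'_{(m-4)-}$, is a subpath of the shortest path $P'$ whose last index is $m-4$, while $Q$ starts at index $m+1 = (m-4)+5$. Corollary~\ref{cor:2-path-disjoint} therefore gives $\mathcal{N}_2[P'_{(m-4)-}]\cap\mathcal{N}_2[Q]=\emptyset$. (This holds for the closed neighbourhoods as well: Lemma~\ref{lem:2-neighbour-disjoint} is really the statement that $d(s,t)\ge5$ forbids a common vertex within distance $2$ of both.) Hence
\[
\mathcal{N}_2[P'_{m-}]\cap\mathcal{N}_2[Q] = \mathcal{N}_2[\langle p_{m-3},\dots,p_m\rangle]\cap \mathcal{N}_2[Q],
\]
and the right-hand side does not depend on the differing nodes. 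The same argument applies to $P''$, so the intersection terms coincide. The term $|\mathcal{N}_2[Q]|$ is common to both expansions as well. Subtracting the two expansions gives
\[
|\mathcal{N}_2[P']| - |\mathcal{N}_2[P'']| = |\mathcal{N}_2[P'_{m-}]| - |\mathcal{N}_2[P''_{m-}]|.
\]

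The last step converts closed neighbourhoods to open ones. Since $P'$ and $P''$ have the same number of vertices, $\ell+1$, and their prefixes both have $m+1$ vertices, subtracting path sizes shifts both sides of the identity by equal amounts. This yields the displayed identity for $\mathcal{N}_2(\cdot)$ and hence the lemma.

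The main obstacle is the index bookkeeping in the disjointness step. The gap between index $m-4$ and $m+1$ is exactly $5$, which is precisely what Corollary~\ref{cor:2-path-disjoint} requires. One must also confirm that the corollary applies to closed neighbourhoods, including path vertices themselves. Handling closed neighbourhoods is what avoids the subtlety that a vertex of the tail $Q$ could lie in the open neighbourhood of the prefix, or vice versa. Working with closed sets and converting to open ones only at the end sidesteps this issue.
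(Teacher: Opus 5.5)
Your proof is correct and follows essentially the same route as the paper: split off the common tail $P_{(m+1)+}$, use Corollary~\ref{cor:2-path-disjoint} to show that its overlap with the prefix depends only on the shared window $\langle p_{m-3},\dots,p_m\rangle$, and conclude by inclusion--exclusion. The one difference is that you work with closed neighbourhoods and subtract the (equal) path sizes at the end. This spares you the paper's explicit correction constants ($-8$ and $-4$) for path vertices that lie in the open neighbourhoods of adjacent segments.
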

This lemma allows us to reduce the size of the feasible set when constructing 2-step-central shortest paths. For any set of paths that share the same four last nodes, we need to keep only the one with the largest 2-step neighbourhood.
\begin{proof}
    Let $P$ be an arbitrary shortest path of length $\ell$. We can decompose the path into three segments: $P_{(m-4)-}$, $P_{(m+1)+}$ and $\langle p_{m-3}, p_{m-2}, p_{m-1}, p_m\rangle$.  By Corollary~\ref{cor:2-path-disjoint}, note that $\mathcal{N}_2(P_{(m-4)-}) \cap \mathcal{N}_2(P_{(m+1)+}) = \emptyset$.
    
    Observe that we can express the 2-step neighbourhood of $P$ as
    \[
    \mathcal{N}_2(P) = (\mathcal{N}_2(P_{(m-4)-}) \cup \mathcal{N}_2(\langle p_{m-3}, p_{m-2}, p_{m-1},p_{m}\rangle) \cup \mathcal{N}_2(P_{(m+1)+}))\setminus P.
    \]    
    Then, we have
    \begin{align*}
        |\mathcal{N}_2(P)| &= |\mathcal{N}_2(P_{(m-4)-})| + |\mathcal{N}_2(\langle p_{m-3}, p_{m-2}, p_{m-1}, p_{m}\rangle)| + |\mathcal{N}_2(P_{(m+1)+})| \\
        & \quad - |\mathcal{N}_2(P_{(m-4)-}) \cap \mathcal{N}_2(\langle p_{m-3}, p_{m-2}, p_{m-1}, p_{m}\rangle)| \\
        & \quad - |\mathcal{N}_2(P_{(m+1)+}) \cap \mathcal{N}_2(\langle p_{m-3}, p_{m-2}, p_{m-1}, p_{m}\rangle)| - 8,
    \end{align*}
    and as
    \begin{align*}
        \mathcal{N}_2(P_{m-}) &= \left(\mathcal{N}_2(P_{(m-4)-})\setminus \{p_{m-3}, p_{m-2}\} \right) \cup \left(\mathcal{N}_2(\langle p_{m-3}, p_{m-2}, p_{m-1}, p_{m}\rangle)\setminus \{p_{m-5},p_{m-4}\}\right),
    \end{align*}
    we also have
    \begin{align*}
        |\mathcal{N}_2(P_{m-})| &= |\mathcal{N}_2(P_{(m-4)-})| + |\mathcal{N}_2(\langle p_{m-3}, p_{m-2}, p_{m-1}, p_{m}\rangle)| \\
        & \quad - |\mathcal{N}_2(P_{(m-4)-})\cap \mathcal{N}_2(\langle p_{m-3}, p_{m-2}, p_{m-1}, p_{m}\rangle)| - 4.
    \end{align*}

    Combining these two expressions, we have
    \begin{align*}
        |\mathcal{N}_2(P)| &= |\mathcal{N}_2(P_{m-})| + |\mathcal{N}_2(P_{(m+1)+})| - |\mathcal{N}_2(P_{(m+1)+}) \cap \mathcal{N}_2(\langle p_{m-3}, p_{m-2}, p_{m-1}, p_{m}\rangle)| - 4.
    \end{align*}

    Since this holds for an arbitrary shortest path $P$, replacing $P$ with $P'$ and $P''$ yields the desired result.
\end{proof}

The above result allows us to build a polynomial-time algorithm to find the 2-step-central shortest path, which we present in Section~\ref{sec:algorithm}.

\section{\texorpdfstring{The $k$-step-central shortest path}{The k-step-central shortest path}}\label{sec:k-step}

Before we present an algorithm for solving Problem~\eqref{eq:problem} for any $k$, we provide generalisations of the lemmas and corollaries from Section~\ref{sec:2-step}. For brevity, we omit the proofs, as they are direct extensions of the previous section.

\begin{lemma}\label{lem:k-neighbour-disjoint}
    If $d(s,t) \geq 2k+1$, then the $k$-step neighbourhoods of $s$ and $t$ do not overlap.
\end{lemma}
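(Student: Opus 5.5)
We argue by contraposition, exactly as in the proof of Lemma~\ref{lem:2-neighbour-disjoint}: we show that if the $k$-step neighbourhoods of $s$ and $t$ share a vertex, then $d(s,t) \leq 2k$. We work with the distance characterisation of the neighbourhood given in Section~\ref{sec:problem-definition}. For a single vertex $s$ (viewed as a trivial path $\langle s \rangle$), it reads
\[
\mathcal{N}_k(s) = \{u \in V \colon d(u,s) \leq k\} \setminus \{s\}.
\]

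First, suppose some vertex $u$ lies in $\mathcal{N}_k(s) \cap \mathcal{N}_k(t)$. Then $d(s,u) \leq k$ and $d(u,t) \leq k$. Concatenating a shortest $s$--$u$ walk with a shortest $u$--$t$ walk gives an $s$--$t$ walk of length at most $2k$. Any walk contains a path between its endpoints of no greater length, so $d(s,t) \leq d(s,u) + d(u,t) \leq 2k$. This is just the triangle inequality for the graph metric, and it contradicts $d(s,t) \geq 2k+1$.

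Second, we check the one definitional subtlety. The neighbourhoods are open, so $s \notin \mathcal{N}_k(s)$. Could $t$ itself lie in $\mathcal{N}_k(s)$, or $s$ in $\mathcal{N}_k(t)$? That would only matter for a notion of overlap that includes the centres. It cannot happen anyway, since $d(s,t) \geq 2k+1 > k$. Hence even the closed neighbourhoods $\mathcal{N}_k[s]$ and $\mathcal{N}_k[t]$ are disjoint, and the statement holds under either reading.

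There is no real obstacle. The only care needed is to invoke the equivalence between the recursive definition $\mathcal{N}_k[\cdot] = \mathcal{N}[\mathcal{N}_{k-1}[\cdot]]$ and the distance-based one. If one prefers to avoid that equivalence, a short induction on $k$ suffices: every vertex of $\mathcal{N}_k[s]$ is joined to $s$ by a walk of at most $k$ edges. With that in hand, the argument is one line.
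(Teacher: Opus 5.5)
Your proof is correct and takes essentially the same approach as the paper. The paper omits the general proof and argues the $k=2$ case by joining $s$ and $t$ through a common neighbourhood vertex to get a connection of length at most $4$, contradicting $d(s,t)\geq 5$. Your version is slightly more careful: you note that the concatenation is only a walk and invoke the triangle inequality, whereas the paper calls it a path.
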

\begin{corollary}\label{cor:k-path-disjoint}
    If $P = \langle p_0, \dots, p_\ell\rangle$ is a shortest path with $\ell \geq 2k+1$ and $0 \leq m \leq \ell-2k-1$, then the $k$-step neighbourhoods of $P_{m-}$ and $P_{(m+2k+1)+}$ do not overlap.
\end{corollary}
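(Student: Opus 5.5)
The plan is to reduce Corollary~\ref{cor:k-path-disjoint} to Lemma~\ref{lem:k-neighbour-disjoint}, applied to every pair of vertices drawn from the two subpaths. The reduction rests on two facts: the $k$-step neighbourhood of a path is covered by the closed $k$-step neighbourhoods of its vertices, and subpaths of a shortest path are themselves shortest paths.

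\textbf{Step 1: unions.} By the distance characterisation $\mathcal{N}_k(P) = \{u \in V \colon d(u,P)\le k\}\setminus P$, every $u \in \mathcal{N}_k(P_{m-})$ satisfies $d(u,p_i)\le k$ for some $i \le m$. Likewise, every $u \in \mathcal{N}_k(P_{(m+2k+1)+})$ satisfies $d(u,p_j)\le k$ for some $j \ge m+2k+1$. So it suffices to show that for every such pair $i,j$, no vertex lies within distance $k$ of both $p_i$ and $p_j$.

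\textbf{Step 2: the distance bound.} Since $P$ is a shortest path, its subpath $\langle p_i,\dots,p_j\rangle$ is also a shortest path. Hence
\[
d(p_i,p_j) = j - i \ge (m+2k+1) - m = 2k+1.
\]
Lemma~\ref{lem:k-neighbour-disjoint} then says that the $k$-step neighbourhoods of $p_i$ and $p_j$ do not overlap.

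The same conclusion also follows directly from the triangle inequality: a common vertex $u$ would give
\[
d(p_i,p_j) \le d(p_i,u) + d(u,p_j) \le 2k,
\]
a contradiction. This direct form has the advantage that it also excludes $u$ coinciding with a path vertex.

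\textbf{Step 3: conclude.} Steps 1 and 2 together give $\mathcal{N}_k(P_{m-}) \cap \mathcal{N}_k(P_{(m+2k+1)+}) = \emptyset$.

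The only point needing care, and the main obstacle, is the open-versus-closed neighbourhood distinction. The open neighbourhood of a single vertex excludes the vertex itself, but the open neighbourhood of the subpath $P_{m-}$ can contain vertices of $P$ outside $P_{m-}$, such as $p_{m+1}$. The argument therefore has to be phrased with closed balls $\{u \colon d(u,p_i)\le k\}$ rather than by invoking Lemma~\ref{lem:k-neighbour-disjoint} literally on open sets. The triangle-inequality version in Step 2 does exactly this, and it shows the stronger fact that even the closed $k$-step neighbourhoods of the two subpaths are disjoint.
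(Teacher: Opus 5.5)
Your proof is correct and is the argument the paper intends (it omits the proof as a direct extension of Corollary~\ref{cor:2-path-disjoint}): apply Lemma~\ref{lem:k-neighbour-disjoint} to each pair $p_i \in P_{m-}$, $p_j \in P_{(m+2k+1)+}$, using $d(p_i,p_j)=j-i\ge 2k+1$ because subpaths of a shortest path are shortest; your triangle-inequality step is just that lemma's proof written inline. Your open-versus-closed worry is unfounded, though harmless, since any $u$ in the open neighbourhood of $P_{m-}$ is distinct from every $p_i$ with $i\le m$ and so already lies in the open neighbourhood of the relevant $p_i$, meaning the lemma could also have been invoked literally.
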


\begin{lemma}\label{lem:spk}
    Consider two nodes $s$ and $t$ such that $d(s,t) = \ell$ and $\ell \geq 2k+1$. Let $m$ be a fixed positive integer such that $m \in \{2k+1, \dots, \ell\}$. Assume that there exist two shortest paths between nodes $s$ and $t$, $P'$ and $P''$, that may differ only in the first $m-2k$ nodes starting from the node adjacent to $s$. That is,
    \begin{align*}
        P' &= \langle p'_0, \ldots, p'_{m-2k}, p_{m-2k+1}, \dots, p_m, \dots, p_\ell \rangle, \\
        P'' &= \langle p''_0, \ldots, p''_{m-2k}, p_{m-2k+1}, \dots, p_m, \dots, p_\ell \rangle,
    \end{align*}
    where $s = p'_0 = p''_0$ and $t = p_k$.
    Then the shortest path $P'$ is at least as central as the shortest path $P''$ if and only if the
    subpath $P'_{m-}$ is at least as central as the subpath $P''_{m-}$, i.e.,
    $|\mathcal{N}_k(P')|\geq |\mathcal{N}_k(P'')|$ if and only if $|\mathcal{N}_k(P'_{m-})| \geq |\mathcal{N}_k(P''_{m-})|$.
\end{lemma}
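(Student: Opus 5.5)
The plan is to copy the proof of Lemma~\ref{lem:sp2}, replacing the window of four common nodes by a window of $2k$ common nodes. For an arbitrary shortest path $P$ from $s$ to $t$ of length $\ell$, I will split it into three consecutive segments:
\[
A := P_{(m-2k)-}, \qquad W := \langle p_{m-2k+1}, \dots, p_m\rangle, \qquad B := P_{(m+1)+}.
\]
Then $\mathcal{N}_k[P] = \mathcal{N}_k[A] \cup \mathcal{N}_k[W] \cup \mathcal{N}_k[B]$. The last index of $A$ is $m-2k$ and the first index of $B$ is $m+1$, a gap of $2k+1$. So Corollary~\ref{cor:k-path-disjoint}, applied with its $m$ taken as $m-2k \ge 1$, gives that the neighbourhoods of $A$ and $B$ do not overlap.

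To avoid the bookkeeping of removed path vertices (the ``$-8$'' and ``$-4$'' terms in the $k=2$ proof), I will work with closed neighbourhoods throughout. Since $|\mathcal{N}_k(Q)| = |\mathcal{N}_k[Q]| - |Q|$, and $P'$, $P''$ (and likewise $P'_{m-}$, $P''_{m-}$) have equal numbers of vertices, comparing open centralities is the same as comparing closed ones. The closed neighbourhoods of $A$ and $B$ are disjoint, because any common vertex would give a walk of length at most $2k$ between $A$ and $B$, while the shortest-path property forces their distance to be at least $2k+1$. By inclusion--exclusion,
\[
|\mathcal{N}_k[P]| = |\mathcal{N}_k[P_{m-}]| + |\mathcal{N}_k[B]| - |\mathcal{N}_k[B]\cap \mathcal{N}_k[P_{m-}]|.
\]
Using the disjointness once more, $\mathcal{N}_k[B]\cap \mathcal{N}_k[P_{m-}] = \mathcal{N}_k[B]\cap\mathcal{N}_k[W]$.

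The conclusion then follows by checking which terms depend only on the common part. The terms $|\mathcal{N}_k[B]|$ and $|\mathcal{N}_k[B]\cap\mathcal{N}_k[W]|$ depend only on $p_{m-2k+1},\dots,p_\ell$, which are shared by $P'$ and $P''$. Subtracting the two identities gives
\[
|\mathcal{N}_k[P']| - |\mathcal{N}_k[P'']| = |\mathcal{N}_k[P'_{m-}]| - |\mathcal{N}_k[P''_{m-}]|,
\]
which yields the equivalence, and in fact equality of the differences.

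The main obstacle is the disjointness step. I have to confirm that $d(a,b)\ge 2k+1$ holds for every $a\in A$ and $b\in B$, not just for the endpoints. This follows because subpaths of shortest paths are shortest, so $d(p_i,p_j) = j-i \ge 2k+1$ whenever $i \le m-2k$ and $j \ge m+1$; Lemma~\ref{lem:k-neighbour-disjoint} then applies pairwise. I also need to handle the edge cases $m=\ell$, where $B$ is empty and the claim is trivial, and $m = 2k+1$, where $A = \langle p_0, p_1\rangle$. I would additionally note that the typo $t=p_k$ should read $t=p_\ell$.
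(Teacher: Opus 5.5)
Your proof is correct and is essentially the paper's argument, namely the $k$-analogue of the proof of Lemma~\ref{lem:sp2}: the same split into $P_{(m-2k)-}$, the shared window $\langle p_{m-2k+1},\dots,p_m\rangle$ and $P_{(m+1)+}$, disjointness of the two outer neighbourhoods, and inclusion--exclusion leaving only terms that depend on the shared suffix. Passing to closed neighbourhoods via $|\mathcal{N}_k(Q)| = |\mathcal{N}_k[Q]| - |Q|$ is a tidy way to avoid the constant correction terms, and it is the same simplification the paper itself notes in Section~\ref{sec:closeness}.
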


Note that in the analysis above, the choice of $k$ plays a significant role. For a given graph, increasing $k$ can quickly lead to oversaturation of the $k$-step neighbourhood. To see this, consider the graph in Figure~\ref{fig:saturated-graphs}. For any $k\geq 2$, the path (the singular blue square node) will be a $k$-step-central shortest path. 

\begin{figure}[!htb]
    \centering
    \begin{tikzpicture}[scale=1.5, square/.style={rectangle, rounded corners}]
    \footnotesize
    \begin{scope}[circle,minimum size=4mm]
        \node at (0,  0) [draw=red!50,fill=red!20] (1)  {};
        \node at (1,  0) [draw=red!50,fill=red!20] (2)  {};
        \node at (2,  0) [draw=blue!50,fill=blue!20, square] (3)  {};
        \node at (3,  0) [draw=red!50,fill=red!20] (4)  {};
        \node at (4,  0) [draw=red!50,fill=red!20] (5)  {};
        \node at (1,  1) [draw=red!50,fill=red!20] (6)  {};
        \node at (2,  1) [draw=red!50,fill=red!20] (7)  {};
        \node at (3,  1) [draw=red!50,fill=red!20] (8)  {};
        \node at (1, -1) [draw=red!50,fill=red!20] (9)  {};
        \node at (2, -1) [draw=red!50,fill=red!20] (10) {};
        \node at (3, -1) [draw=red!50,fill=red!20] (11) {};
    \end{scope}
    
    \path 
    (1) edge (2)
    (2) edge (3)
    (3) edge (4)
    (4) edge (5)
    (2) edge (6)
    (2) edge (9)
    (3) edge (7)
    (3) edge (10)
    (4) edge (8)
    (4) edge (11);
\end{tikzpicture}
    \caption{Example of a graph where the $k$-step-central shortest path is the same when $k$ is increased. For any $k\geq 2$, the blue square node is the $k$-step-central shortest path.}
    \label{fig:saturated-graphs}
\end{figure}

However, this observation does not imply tighter bounds on the length of the $k$-step-central shortest path in a graph. Regardless of the choice of $k$, we can construct a graph in which the $k$-step-central shortest path is a diametral path. To demonstrate this, for a given value of $k$, we first create a wheel graph with $4(k + 1)$ spokes. We then split each edge incident to the central node into $k + 1$ edges. Examples of these graphs are shown in Figure~\ref{fig:diameter-solutions}, with Figure~\ref{fig:diameter-solutions-1} and \ref{fig:diameter-solutions-2} depicting the graphs for $k = 1$ and $k = 2$, respectively. In these graphs, the blue square nodes indicate a $k$-step-central shortest path, which is also a diametral path. The red circle nodes are in the $k$-step neighbourhood, while the orange triangle nodes are not covered. It is clear that the path cannot be extended to additional nodes, as doing so would no longer yield the shortest path. Additionally, the path cannot be shortened, as removing a node from either end would lose two neighbours while gaining only the removed node, resulting in a net loss of one neighbour.

\begin{figure}[!htb]
    \centering
    \begin{subfigure}[b]{0.4\textwidth}
        \centering
        \begin{tikzpicture}[scale=0.5, square/.style={rectangle, rounded corners}, triangle/.style = {regular polygon, regular polygon sides=3},]
    \begin{scope}[circle,minimum size=4mm]
        \draw (0, 0)           node[draw=blue!50, fill=blue!20, square]       (0)  {}; 
        \draw (2.0, 0.0)       node[draw=blue!50, fill=blue!20, square]       (0-0){}; 
        \draw (1.414, 1.414)   node[draw=red!50, fill=red!20]                 (0-1){}; 
        \draw (0.0, 2.0)       node[draw=red!50, fill=red!20]                 (0-2){}; 
        \draw (-1.414, 1.414)  node[draw=red!50, fill=red!20]                 (0-3){}; 
        \draw (-2.0, 0.0)      node[draw=blue!50, fill=blue!20, square]       (0-4){}; 
        \draw (-1.414, -1.414) node[draw=red!50, fill=red!20]                 (0-5){}; 
        \draw (-0.0, -2.0)     node[draw=red!50, fill=red!20]                 (0-6){}; 
        \draw (1.414, -1.414)  node[draw=red!50, fill=red!20]                 (0-7){}; 
        \draw (4.0, 0.0)       node[draw=blue!50, fill=blue!20, square]       (1-0){}; 
        \draw (2.828, 2.828)   node[draw=red!50, fill=red!20]                 (1-1){}; 
        \draw (0.0, 4.0)       node[draw=orange!50, fill=orange!20, triangle] (1-2){}; 
        \draw (-2.828, 2.828)  node[draw=red!50, fill=red!20]                 (1-3){}; 
        \draw (-4.0, 0.0)      node[draw=blue!50, fill=blue!20, square]       (1-4){}; 
        \draw (-2.828, -2.828) node[draw=red!50, fill=red!20]                 (1-5){}; 
        \draw (-0.0, -4.0)     node[draw=orange!50, fill=orange!20, triangle] (1-6){}; 
        \draw (2.828, -2.828)  node[draw=red!50, fill=red!20]                 (1-7){}; 
    \end{scope}
    \begin{scope}[-]
        \draw (0)   to (0-0);
        \draw (0)   to (0-1);
        \draw (0)   to (0-2);
        \draw (0)   to (0-3);
        \draw (0)   to (0-4);
        \draw (0)   to (0-5);
        \draw (0)   to (0-6);
        \draw (0)   to (0-7);
        \draw (0-0) to (1-0);
        \draw (0-1) to (1-1);
        \draw (0-2) to (1-2);
        \draw (0-3) to (1-3);
        \draw (0-4) to (1-4);
        \draw (0-5) to (1-5);
        \draw (0-6) to (1-6);
        \draw (0-7) to (1-7);
        \draw (1-0) to (1-1);
        \draw (1-0) to (1-7);
        \draw (1-1) to (1-2);
        \draw (1-2) to (1-3);
        \draw (1-3) to (1-4);
        \draw (1-4) to (1-5);
        \draw (1-5) to (1-6);
        \draw (1-6) to (1-7);
    \end{scope}
\end{tikzpicture}
        \caption{Example of a graph where a diametral path has the maximum $1$-step reach (degree) centrality.}
        \label{fig:diameter-solutions-1}
    \end{subfigure}%
    \qquad 
    \begin{subfigure}[b]{0.4\textwidth}
        \centering
        \begin{tikzpicture}[scale=0.5, square/.style={rectangle, rounded corners}, triangle/.style = {regular polygon, regular polygon sides=3},]
    \begin{scope}[circle,minimum size=4mm]
        \draw (0, 0)         node[draw=blue!50, fill=blue!20, square]       (0)   {};
        \draw (2.0, 0.0)     node[draw=blue!50, fill=blue!20, square]       (0-0) {};
        \draw (1.732, 1.0)   node[draw=red!50, fill=red!20]                 (0-1) {};
        \draw (1.0, 1.732)   node[draw=red!50, fill=red!20]                 (0-2) {};
        \draw (0.0, 2.0)     node[draw=red!50, fill=red!20]                 (0-3) {};
        \draw (-1.0, 1.732)  node[draw=red!50, fill=red!20]                 (0-4) {};
        \draw (-1.732, 1.0)  node[draw=red!50, fill=red!20]                 (0-5) {};
        \draw (-2.0, 0.0)    node[draw=blue!50, fill=blue!20, square]       (0-6) {};
        \draw (-1.732, -1.0) node[draw=red!50, fill=red!20]                 (0-7) {};
        \draw (-1.0, -1.732) node[draw=red!50, fill=red!20]                 (0-8) {};
        \draw (-0.0, -2.0)   node[draw=red!50, fill=red!20]                 (0-9) {};
        \draw (1.0, -1.732)  node[draw=red!50, fill=red!20]                 (0-10){};
        \draw (1.732, -1.0)  node[draw=red!50, fill=red!20]                 (0-11){};
        \draw (4.0, 0.0)     node[draw=blue!50, fill=blue!20, square]       (1-0) {};
        \draw (3.464, 2.0)   node[draw=red!50, fill=red!20]                 (1-1) {};
        \draw (2.0, 3.464)   node[draw=red!50, fill=red!20]                 (1-2) {};
        \draw (0.0, 4.0)     node[draw=red!50, fill=red!20]                 (1-3) {};
        \draw (-2.0, 3.464)  node[draw=red!50, fill=red!20]                 (1-4) {};
        \draw (-3.464, 2.0)  node[draw=red!50, fill=red!20]                 (1-5) {};
        \draw (-4.0, 0.0)    node[draw=blue!50, fill=blue!20, square]       (1-6) {};
        \draw (-3.464, -2.0) node[draw=red!50, fill=red!20]                 (1-7) {};
        \draw (-2.0, -3.464) node[draw=red!50, fill=red!20]                 (1-8) {};
        \draw (-0.0, -4.0)   node[draw=red!50, fill=red!20]                 (1-9) {};
        \draw (2.0, -3.464)  node[draw=red!50, fill=red!20]                 (1-10){};
        \draw (3.464, -2.0)  node[draw=red!50, fill=red!20]                 (1-11){};
        \draw (6.0, 0.0)     node[draw=blue!50, fill=blue!20, square]       (2-0) {};
        \draw (5.196, 3.0)   node[draw=red!50, fill=red!20]                 (2-1) {};
        \draw (3.0, 5.196)   node[draw=red!50, fill=red!20]                 (2-2) {};
        \draw (0.0, 6.0)     node[draw=orange!50, fill=orange!20, triangle] (2-3) {};
        \draw (-3.0, 5.196)  node[draw=red!50, fill=red!20]                 (2-4) {};
        \draw (-5.196, 3.0)  node[draw=red!50, fill=red!20]                 (2-5) {};
        \draw (-6.0, 0.0)    node[draw=blue!50, fill=blue!20, square]       (2-6) {};
        \draw (-5.196, -3.0) node[draw=red!50, fill=red!20]                 (2-7) {};
        \draw (-3.0, -5.196) node[draw=red!50, fill=red!20]                 (2-8) {};
        \draw (-0.0, -6.0)   node[draw=orange!50, fill=orange!20, triangle] (2-9) {};
        \draw (3.0, -5.196)  node[draw=red!50, fill=red!20]                 (2-10){};
        \draw (5.196, -3.0)  node[draw=red!50, fill=red!20]                 (2-11){};
    \end{scope}
    \begin{scope}[-]
        \draw (0)    to (0-0);
        \draw (0)    to (0-1);
        \draw (0)    to (0-2);
        \draw (0)    to (0-3);
        \draw (0)    to (0-4);
        \draw (0)    to (0-5);
        \draw (0)    to (0-6);
        \draw (0)    to (0-7);
        \draw (0)    to (0-8);
        \draw (0)    to (0-9);
        \draw (0)    to (0-10);
        \draw (0)    to (0-11);
        \draw (0-0)  to (1-0);
        \draw (0-1)  to (1-1);
        \draw (0-2)  to (1-2);
        \draw (0-3)  to (1-3);
        \draw (0-4)  to (1-4);
        \draw (0-5)  to (1-5);
        \draw (0-6)  to (1-6);
        \draw (0-7)  to (1-7);
        \draw (0-8)  to (1-8);
        \draw (0-9)  to (1-9);
        \draw (0-10) to (1-10);
        \draw (0-11) to (1-11);
        \draw (1-0)  to (2-0);
        \draw (1-1)  to (2-1);
        \draw (1-2)  to (2-2);
        \draw (1-3)  to (2-3);
        \draw (1-4)  to (2-4);
        \draw (1-5)  to (2-5);
        \draw (1-6)  to (2-6);
        \draw (1-7)  to (2-7);
        \draw (1-8)  to (2-8);
        \draw (1-9)  to (2-9);
        \draw (1-10) to (2-10);
        \draw (1-11) to (2-11);
        \draw (2-0)  to (2-1);
        \draw (2-0)  to (2-11);
        \draw (2-1)  to (2-2);
        \draw (2-2)  to (2-3);
        \draw (2-3)  to (2-4);
        \draw (2-4)  to (2-5);
        \draw (2-5)  to (2-6);
        \draw (2-6)  to (2-7);
        \draw (2-7)  to (2-8);
        \draw (2-8)  to (2-9);
        \draw (2-9)  to (2-10);
        \draw (2-10) to (2-11);
    \end{scope}
\end{tikzpicture}
        \caption{Example of a graph where a diametral path has the maximum $2$-step reach centrality.}
        \label{fig:diameter-solutions-2}
    \end{subfigure}
    \caption{For any $k$, a diametral path may be a $k$-step-central shortest path. The blue square nodes indicate the $k$-step-central shortest path, the red circular nodes indicate the neighbourhood of the path, and the orange triangle nodes indicate nodes that are not covered.}
    \label{fig:diameter-solutions}
\end{figure}

\section{\texorpdfstring{Finding the $k$-step-central shortest path}{Finding the k-step-central shortest path.}}\label{sec:algorithm}

We now present a polynomial-time algorithm to solve Problem~\eqref{eq:problem}. Algorithm~\ref{alg:algorithm} generalises the algorithm provided by \citet{Matsypura2023}, which relies on breadth-first search. We prefer this as the basis of our algorithm because it is easier to generalise to higher values of $k$ than the algorithm provided by \citet{PhosavanhMatsypura2025}, even though the latter is more efficient.

We first compute all pairwise distances in $G$. Next, given a source node $s$ and a destination node $t$, we compute $d(s, t)-1$ layers. Each layer $L(\ell)$, for $\ell = 1, \dots, d(s, t) - 1$, contains nodes at distance $\ell$ from $s$ and $d(s, t) - \ell$ from $t$.

The next stage of the algorithm uses breadth-first search. At each stage, we extend existing shortest paths of length $\ell-1$ by one node to obtain shortest paths of length $\ell$, as shown in lines 10-15 of the pseudocode. Without intervention, this process may generate an exponential number of shortest paths that must be stored in memory. To prevent this, we add a pruning step that relies on Lemma~\ref{lem:spk}. After more than $2k$ layers have been processed, for each set of paths that share the last $2k$ nodes, our pruning procedure retains only the path with the highest $k$-step reach centrality. This is shown in lines 16-26 of the pseudocode.

\begin{algorithm}[!htbt]
    \caption{Finding the $k$-step-central shortest path (kStep-CSP)}\label{alg:algorithm}
    \begin{algorithmic}[1]
        \Input Graph $G = (V, E)$, source node $s \in V$ and destination node $t \in V$.
        \Output $k$-step-central shortest path $P^\star$ between $s$ and $t$.
        \State compute $d(s, q)$ and $d(q, t)$ for all $q\in V$ \Comment{compute pairwise distances}
        \State $m \gets d(s, t)$
        \For{$\ell = 1, \dots, m-1$}\Comment{compute layers}
            \State $L(\ell) \gets \{q\in V:\ d(s, q) = \ell$ and $d(q, t) = m-\ell\}$
        \EndFor
        \State $\mathcal{P}_0\gets\{\langle s\rangle \}$
        \State ${p}_0 \gets s$
        \For{$\ell=1, \dots, m$} \Comment{construct shortest paths}
            \State $\mathcal{P_{\ell}}= \emptyset$
            \For{$P=\langle p_0,p_1,\ldots,p_{\ell-1}\rangle \in \mathcal{P}_{\ell-1}$}
                \For{$q \in \mathcal{N}(p_{\ell-1})\cap L(\ell)$}\Comment{extend shortest paths of length $\ell-1$ to layer $\ell$}
                    \State $\mathcal{P}_\ell \gets \mathcal{P}_\ell \cup \{\langle p_0,p_1,\ldots,p_{\ell-1},q\rangle\}$
                    \State compute and store $\mathcal{N}_k(\langle p_0,p_1,\ldots,p_{\ell-1},q\rangle)$
                \EndFor
            \EndFor
            \If{$\ell \geq 2k + 1$}\Comment{pruning}
                \State $\mathcal{P}^{pr}_{\ell} \gets \emptyset$
                \For{$P \in \mathcal{P}_{\ell}$}
                    \If{$\mathcal{P}^{pr}_{\ell}$ has no path, say $P^{pr}$, that has the same last $2k$ nodes as $P$}
                        \State $\mathcal{P}^{pr}_{\ell}\leftarrow \mathcal{P}^{pr}_{\ell} \cup \{P\}$
                    \ElsIf{$|\mathcal{N}_k(P)|> |\mathcal{N}_k(P^{pr})|$}
                        \State $\mathcal{P}^{pr}_{\ell}\leftarrow \left\{\mathcal{P}^{pr}_{\ell}\setminus\{P^{pr}\}\right\} \cup \{P\}$
                    \EndIf  
                \EndFor
                \State $\mathcal{P}_\ell\leftarrow\mathcal{P}^{pr}_\ell$
            \EndIf
        \EndFor
        \State $P^\star \leftarrow \arg\max\{ |\mathcal{N}_k(P)| \ :\ P\in \mathcal{P}_{m}\}$
    \end{algorithmic}
\end{algorithm}

\begin{lemma}
    Given a pair of nodes $s$ and $t$ with $d(s, t) = m$, Algorithm~\ref{alg:algorithm} returns a $k$-step-central shortest path in $O(m|V|^{2k+2})$ time.
\end{lemma}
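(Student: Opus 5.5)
The argument has two parts: correctness, by induction over the layers using Lemma~\ref{lem:spk}, and running time, by bounding how many partial paths survive pruning.

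\textbf{Correctness.} The invariant is: for every $\ell$ and every shortest path $Q=\langle s,q_1,\dots,q_\ell\rangle$ that can be extended to a shortest $s$--$t$ path, $\mathcal{P}_\ell$ contains a path $Q'$ with the same last $\min(\ell,2k)$ nodes as $Q$ and $|\mathcal{N}_k(Q')|\ge|\mathcal{N}_k(Q)|$.

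First, the layers $L(\ell)$ contain exactly the nodes at position $\ell$ on some shortest $s$--$t$ path. Hence extending only into $\mathcal{N}(p_{\ell-1})\cap L(\ell)$ produces precisely the prefixes of shortest $s$--$t$ paths.

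For $\ell\le 2k$ nothing is pruned, so the invariant holds trivially. For the inductive step, take such a $Q$ of length $\ell$ and apply the hypothesis to its prefix $Q_{(\ell-1)-}$. This gives a stored $R$ ending in the same last $2k$ nodes, which include $q_{\ell-1}$, and with at least as large a neighbourhood. Since $R$ ends at $q_{\ell-1}$, the extension $R+q_\ell$ is generated.

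The key step is to show $|\mathcal{N}_k(R+q_\ell)|\ge|\mathcal{N}_k(Q)|$. This follows from Lemma~\ref{lem:spk}: complete both $R$ and $Q_{(\ell-1)-}$ by the common suffix $\langle q_{\ell-1},q_\ell,\dots,t\rangle$. The two completed paths agree on their last $2k$ nodes up to position $\ell-1$, so Lemma~\ref{lem:spk} (with $m=\ell-1$, or directly when $\ell-1<2k+1$ there is nothing to compare) says their centralities compare exactly as the prefixes do. Applying the same comparison with $m=\ell$ transfers the inequality to the extended prefixes $R+q_\ell$ versus $Q$.

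The pruning step then keeps, for each class of paths sharing the same last $2k$ nodes, a member of maximal centrality, so the invariant survives pruning. At $\ell=m$, taking $Q$ to be an optimal shortest $s$--$t$ path shows that the $\arg\max$ over $\mathcal{P}_m$ attains the optimum.

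The main obstacle is exactly this transfer. Lemma~\ref{lem:spk} is phrased for full $s$--$t$ paths, and its proof identity shows that the difference in full centrality equals the difference in prefix centrality. That identity has to be invoked at two consecutive values of $m$, with care at the boundary $\ell=2k+1$. I would re-derive the identity from the proof of Lemma~\ref{lem:sp2} in its $k$-analogue. It expresses $|\mathcal{N}_k(P)|$ as $|\mathcal{N}_k(P_{m-})|$ plus terms depending only on $p_{m-2k+1},\dots,p_\ell$, which makes the comparison immediate.

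\textbf{Running time.} Computing $d(s,\cdot)$ and $d(\cdot,t)$ takes two BFS runs, and building the layers costs $O(|V|)$ per layer. After pruning, $\mathcal{P}_\ell$ holds at most one path per $2k$-tuple of vertices, so $|\mathcal{P}_\ell|\le|V|^{2k}$. Each path has at most $|V|$ extensions, giving at most $|V|^{2k+1}$ new paths per layer.

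For each new path we compute $\mathcal{N}_k$ incrementally: add the $k$-ball of $q$, obtained by a truncated BFS or from precomputed distances in $O(|V|)$, and remove path nodes. This costs $O(|V|)$ per path. Pruning can be done by bucketing on the last $2k$ nodes in time linear in the number of paths, for example with a hash or array indexed by tuples.

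Each of the $m$ layers therefore costs $O(|V|^{2k+1}\cdot|V|)=O(|V|^{2k+2})$, for a total of $O(m|V|^{2k+2})$. The final $\arg\max$ is dominated by this bound.
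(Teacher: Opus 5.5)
Your proposal is correct. Its running-time half is the paper's own count: the paper also bounds $|\mathcal{P}_{\ell-1}|$ by $O(|V|^{2k})$ via Lemma~\ref{lem:spk}, charges $O(|V|)$ extensions per stored path with an $O(|V|)$ set union for each, and treats pruning as constant work per generated path. This gives $O(|V|^{2k+2})$ per layer.

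The real difference is correctness. The paper's proof says nothing about it beyond invoking Lemma~\ref{lem:spk} for the size bound. Your invariant (every extendable prefix is dominated by a stored prefix with the same last $2k$ nodes) does more. It shows that pruning never discards an optimal completion, it handles the boundary layer $\ell=2k+1$, and it justifies why the final $\arg\max$ is optimal.

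Your two applications of Lemma~\ref{lem:spk}, first at $m=\ell-1$ and then at $m=\ell$, through common completions to $t$, are valid. The identity you mention makes this a one-liner, though. For any shortest path $X$ of length $\ell$, the difference $|\mathcal{N}_k(X)|-|\mathcal{N}_k(X_{(\ell-1)-})|$ depends only on $x_{\ell-2k},\dots,x_\ell$, because the argument uses only $d(x_i,x_j)=|i-j|$. So you can compare $R+q_\ell$ with $Q$ directly, without completing either to $t$.

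One small repair is needed in the timing. A truncated BFS from $q$ costs $O(|V|+|E|)$, not $O(|V|)$. Also, the two BFS runs from $s$ and $t$ do not supply the ``precomputed distances'' you invoke. Since the $k$-ball of $q$ does not depend on the path, precompute all $|V|$ balls once in $O(|V|^3)$, or compute all-pairs distances as the paper does with Floyd--Warshall. This cost is absorbed because $2k+2\ge 4$.
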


\begin{proof}
    Computing all pairwise distances takes $O(|V|^3)$ time using an all-pairs shortest path algorithm such as the Floyd-Warshall algorithm, and the layers can be computed in $O(k|V|)$ time.

    The for loop in line 8 is executed $m$ times. In this for-loop, we need to extend paths in $\mathcal{P}_{\ell - 1}$. 
    
    To evaluate the time complexity of lines 10-15, consider two cases:
    \begin{itemize}
        \item When $\ell < 2k + 1$, $|\mathcal{P}_{\ell - 1}| = O(|V|^{\ell - 1})$ as this set consists of paths containing $\ell$ nodes starting at $s$. 
        \item When $\ell \geq 2k + 1$, $|\mathcal{P}_{\ell - 1}| = O(|V|^{2k})$ as by Lemma~\ref{lem:spk}, we only need to keep one path where the last $2k$ nodes are the same.
    \end{itemize}
    This means that the for-loop in line 10 is executed $O(|V|^{2k})$ times. The for-loop in line 11 is executed $O(|V|)$ times. In this for-loop, we need to compute the $k$-step neighbourhood of the newly extended path. Assuming that the neighbourhoods of paths $p_{\ell - 1}$ are stored, computing $\mathcal{N}_k(\langle p_0, p_1, \dots, p_{\ell-1}, q\rangle)$ can be completed in $O(|V|)$ time because we are taking the union of two sets. This means that the running time of lines 10-15 is $O(|V|^{2k+2})$.

    The pruning procedure in lines 16-26 takes $O(|V|^{2k + 1})$ time. This is because before pruning, $|\mathcal{P}_\ell| = O(|V|^{2k + 1})$ as $|\mathcal{P}_{\ell-1}| = O(|V|^{2k})$, and each path in $\mathcal{P}_{\ell-1}$ can be extended to at most $|V|$ nodes. This means that the for-loop in line 19 is executed $O(|V|^{2k + 1})$ times, and each iteration is completed in constant time, as we are checking the neighbourhood of paths with the same last $2k$ nodes. This implies an overall running time of $O(m|V|^{2k+2})$.
\end{proof}

\begin{theorem}\label{thm:problem-running-time}
    Problem~\eqref{eq:problem} on an unweighted graph can be solved in $O(m|V|^{2k+4})$ time, where $m$ is the diameter of the graph.
\end{theorem}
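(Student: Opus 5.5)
The plan is to run Algorithm~\ref{alg:algorithm} once for every ordered pair of nodes $(s,t)$ in the same connected component, keep the best path found, and add up the running time using the preceding lemma. Every shortest path in $\mathcal{SP}(G)$ joins some pair $s,t$, and the preceding lemma guarantees that for a fixed pair the algorithm returns a path of maximum $k$-step reach among the $s$--$t$ shortest paths. Hence the maximum over all pairs of the returned values equals the optimum of Problem~\eqref{eq:problem}. Single-node paths ($s=t$) can be handled directly by computing $|\mathcal{N}_k(\langle s\rangle)|$ for each $s$; this costs far less than the main term.

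For the running time, first compute all pairwise distances once, e.g.\ by Floyd--Warshall in $O(|V|^3)$ time or by $|V|$ breadth-first searches. This shared preprocessing is not repeated inside each call. For each pair the preceding lemma then gives $O(d(s,t)\,|V|^{2k+2})$, and $d(s,t)\le m$ where $m$ is the diameter. Since there are $O(|V|^2)$ pairs, the total is $O(|V|^2\cdot m|V|^{2k+2}) = O(m|V|^{2k+4})$. The preprocessing term $O(|V|^3)$ is dominated because $k\ge 1$ gives $2k+4\ge 6$. The final comparison of the $O(|V|^2)$ candidate values is negligible.

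The main point to check is that the per-pair bound of the preceding lemma really is uniform in $m$. In particular, the cost of the early layers $\ell<2k+1$, where no pruning happens and $|\mathcal{P}_{\ell-1}|=O(|V|^{\ell-1})$, must stay within $O(|V|^{2k+2})$ per layer. It does, since $\ell-1<2k$.

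One must also confirm that Lemma~\ref{lem:spk} justifies the pruning for every intermediate length $\ell\ge 2k+1$. This amounts to checking that discarding a dominated prefix never removes all optimal completions. Suppose two prefixes end in the same last $2k$ nodes. Any completion of the worse prefix is also a completion of the better one, so the better prefix extends to a path that is at least as central, and that path is again a shortest $s$--$t$ path because the layers enforce it. Once correctness per pair is granted, this argument is taken as established by the preceding lemma, and the theorem reduces to the counting above.
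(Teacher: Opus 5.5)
Your proposal is correct and matches the paper's proof. Both run Algorithm~\ref{alg:algorithm} on each of the $O(|V|^2)$ source--destination pairs, bound each call by $O(m|V|^{2k+2})$ via the preceding lemma using $d(s,t)\le m$, and multiply; your extra remarks (computing distances once, as in the paper's subsequent remark, treating single-node paths separately, and noting that pruning never discards every optimal completion) are sound elaborations rather than a different route.
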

\begin{proof}
    To solve Problem~\eqref{eq:problem}, we need to apply Algorithm~\ref{alg:algorithm} $O(|V|^2)$ times over all possible start and end node pairs, each time taking $O(m|V|^{2k+2})$ time. Overall, this leads to a running time of $O(m|V|^{2k+4})$.
\end{proof}
\begin{remark}
    Note that when solving Problem~\eqref{eq:problem}, line 1 of Algorithm~\ref{alg:algorithm} needs to be executed only once if an all-pairs shortest path algorithm is used and the results are stored.
\end{remark}

\section{The case of weighted graphs}\label{sec:weighted}

Let us now assume that $G$ has weighted edges and show that Problem~\eqref{eq:problem} is NP-hard. To do so, we first define the decision variant of Problem~\eqref{eq:problem} on a graph with weighted edges (kStep-CSP-D). 
\begin{dproblem}
  \problemtitle{kStep-CSP-D}
  \probleminstance{Graph $G = (V, E)$ with weighted edges and positive integers $k$ and $\alpha$.}
  \problemquestion{Is there a shortest path between two vertices in $G$ with a $k$-step reach centrality of at least $\alpha?$}
\end{dproblem}

We also define the satisfiability problem, which is well-known to be NP-complete \citep{GareyJohnson1979}.
\begin{dproblem}
  \problemtitle{SAT}
  \probleminstance{A conjunctive normal form  (CNF) formula consisting of a set of $U$ variables, a collection $C$ of clauses over $U$.}
  \problemquestion{Is there a truth assignment for $U$ that simultaneously satisfies all of the clauses in $C$?}
\end{dproblem}

Note that although the graph under consideration has weighted edges, the definition of $k$-step reach centrality remains unchanged, i.e., the $k$-step reach is measured by hop distance, not by weighted distance.

\begin{lemma}\label{lem:weighted-np-complete}
    kStep-CSP-D is NP-complete.
\end{lemma}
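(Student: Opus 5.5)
Membership in NP is immediate. A certificate is a pair $s,t$ together with a path $P$. Using Dijkstra we can check in polynomial time that the weight of $P$ equals the weighted distance $d(s,t)$. A breadth-first search from $P$ in hop distance then computes $|\mathcal{N}_k(P)|$, which we compare with $\alpha$.

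For hardness we reduce from SAT. We may fix $k=1$, since NP-hardness for one fixed $k$ suffices. One could also handle general $k$ by subdividing pendant structures, but the instance allows arbitrary $k$, so $k=1$ is enough. Let $U=\{x_1,\dots,x_n\}$ and $C=\{c_1,\dots,c_r\}$.

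\textbf{The path gadget.} Build a chain of variable gadgets from a source $s$ to a sink $t$. Gadget $i$ runs from hub $h_{i-1}$ to hub $h_i$ and has two internally disjoint branches of equal weight: a ``true'' branch with internal node $T_i$ and a ``false'' branch with internal node $F_i$. All chain edges get weight $1$. To guarantee that every shortest path of large centrality is an $s$--$t$ path through the chain, we add shortcut edges of large weight only where they do not create shorter routes. Equivalently, we make all other edges heavy (weight $W$ larger than the total chain length) so they never lie on shortest paths. With this choice the $s$--$t$ shortest paths correspond exactly to truth assignments.

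\textbf{Clause nodes.} For each clause $c_j$ add a clause node $c_j$, joined by a heavy edge to $T_i$ if $x_i\in c_j$ and to $F_i$ if $\bar x_i\in c_j$. Heavy edges still count for hop adjacency, because reach is measured in hops. A clause node therefore lies in $\mathcal{N}_1(P)$ exactly when $P$ passes through a literal node satisfying that clause.

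\textbf{Controlling the count.} The difficulty is that the neighbourhood also contains the unused branch nodes, hubs, and other incidental nodes. It must be exactly the same for every assignment path, and no non-assignment shortest path may reach $\alpha$. To achieve this, give each $T_i$ and $F_i$ the same number of private pendant leaves, padding by degree so that both branches contribute equally. Then every assignment path covers a fixed number $B$ of non-clause nodes, and we set $\alpha=B+r$.

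\textbf{The other shortest paths.} These are subpaths of the chain, or single heavy edges, since a heavy edge between its endpoints is a shortest path when no lighter route exists. To control them, attach a large number $M$ of pendant leaves to $s$ and to $t$, so that only paths containing both $s$ and $t$ can reach $\alpha$. We must also check that heavy edges do not create alternative shortest $s$--$t$ routes. This holds because $W$ exceeds the chain length, and every route using a heavy edge is longer. Under these conditions the formula is satisfiable if and only if some shortest path has centrality at least $\alpha$.

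\textbf{Main obstacle.} The main obstacle is this bookkeeping. We need the padding to make the non-clause contribution independent of the assignment, and the $M$-leaves to rule out both short subpaths and single heavy-edge shortest paths. I would verify the inequalities $M>|V_{\text{rest}}|$ and $W>$ chain weight explicitly.
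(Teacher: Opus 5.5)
Your reduction works for the statement as posed, and it takes a different route from the paper's. Since $k$ is part of the instance, hardness for $k=1$ suffices.

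\textbf{Comparison with the paper.} The paper completely joins consecutive literal layers $\{x_i,\overline{x}_i\}$ and $\{x_{i+1},\overline{x}_{i+1}\}$. It hangs tails of length $k$ off both ends ($s$, $\overline{s}$, $t$, $\overline{t}$) and off every clause node, and gives clause edges weight $n$. Its converse is a case analysis claiming that any path reaching the threshold can be extended or trimmed to run from layer $1$ to layer $n$ without losing centrality.

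You instead fix $k=1$, use a diamond chain with hubs, and force the endpoints by counting. Without the padding leaves on $T_i,F_i$ you have $|V|=3n+1+r+2M$ and $\alpha=n+2M+r$.
\begin{itemize}
\item A path avoiding $t$ (or $s$) has none of the $M$ leaves of $t$ in $\mathcal{N}_1(P)$. So $|\mathcal{N}_1(P)|\le 3n+r+M<\alpha$ once $M\ge 2n+1$.
\item A shortest path through both $s$ and $t$ has, for $W>n$, an assignment path as its $s$--$t$ segment, since any detour through a clause node costs at least $2W$. It can be extended by at most one leaf at each end, and each such leaf lowers the count by one.
\end{itemize}
This converse is tighter than the paper's case analysis.

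What you give up is hardness for each fixed $k\ge 2$, the natural counterpart of the fixed-$k$ polynomial bound in Theorem~\ref{thm:problem-running-time}. Your $1$-step case is exactly the weighted degree-central shortest path problem, already known to be NP-hard. Your remark about ``subdividing pendant structures'' does not recover general $k$. For $k\ge 2$, every clause node adjacent to an unchosen literal node is within two hops of an assignment path via a hub. Also, a path stopping just short of $t$ still covers $t$'s leaves. So both the clause gadget and the endpoint-forcing trick would have to be redone, which is what the paper's length-$k$ tails are for.

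\textbf{Two statements to correct.} First, heavy edges do lie on shortest paths that are not single edges, e.g.\ $\langle c_j,T_i,h_i,\dots,t\rangle$, or clause-to-clause paths through the chain. So ``subpaths of the chain, or single heavy edges'' is not an enumeration of $\mathcal{SP}(G)$. This is harmless only because the $M$-leaf count excludes every path missing $s$ or $t$ regardless of its shape, and you should phrase the argument that way.

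Second, the padding leaves on $T_i,F_i$ are unnecessary. For an assignment path, the non-clause part of $\mathcal{N}_1(P)$ is already the constant $n+2M$: the unchosen literal nodes, each adjacent to hubs on $P$, plus the end leaves. If ``padding by degree'' meant equalising $\deg T_i$ and $\deg F_i$ to offset clause edges, it would break the reduction. The leaf count would then depend on the assignment, penalising literals that occur in many clauses, so no constant $B$ with $\alpha=B+r$ would exist.
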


\begin{proof}
   It is clear that kStep-CSP-D is in NP because the $k$-step neighbourhood can be computed in polynomial time.

    Let $n$ be the number of variables and $m$ the number of clauses, i.e., $n = |U|$, $m = |C|$. We provide a reduction of SAT to kStep-CSP-D by constructing a graph $G$ as follows. First, create two directed paths $\langle x_1, \dots, x_{n} \rangle$ and $\langle \overline{x}_1, \dots, \overline{x}_{n} \rangle$. Additionally, add edges $(x_i, \overline{x}_{i+1})$ and $(\overline{x}_i, x_{i+1})$ for $i = 1, \dots, n - 1$. We also construct and connect the following paths:
    \begin{itemize}
        \item $\langle s_1, \dots, s_k\rangle$, with $s_1$ connected to $x_1$ and $\overline{x}_1$,
        \item $\langle \overline{s}_1, \dots, \overline{s}_k\rangle$, with $\overline{s}_1$ connected to $x_1$ and $\overline{x}_1$,
        \item $\langle t_1, \dots, t_k\rangle$, with $t_1$ connected to $x_{n}$ and $\overline{x}_{n}$, and
        \item $\langle \overline{t}_1, \dots, \overline{t}_k\rangle$, with $\overline{t}_1$ connected to $x_{n}$ and $\overline{x}_{n}$.
    \end{itemize}
    In the constructions above, all edges have unit weight.

    Then, we add nodes and edges for each clause. For each clause $c \in C$, we first construct a path $\langle y_{c, 1}, \dots, y_{c, k}\rangle$ with unit-weight edges. We then connect $y_{c, 1}$ to its corresponding variables with an edge of weight $n$. An example of the graph for $k=2$ is shown in Figure~\ref{fig:np-hardness-proof}.
    
    \begin{figure}[!htb]
        \centering
        \begin{tikzpicture}[scale=1.2, square/.style={rectangle, rounded corners}]
    \footnotesize
    \begin{scope}[circle, minimum size=10mm, inner sep=-0.2em]
        \draw
        (0, 0.5) node[draw=blue!50, fill=blue!20] (s){$s_1$}
        (0, -0.5) node[draw=blue!50, fill=blue!20] (-s){$\overline{s}_1$}
        (-1.2, 0.5) node[draw=blue!50, fill=blue!20] (s1){$s_2$}
        (-1.2, -0.5) node[draw=blue!50, fill=blue!20] (-s1){$\overline{s}_2$}
        (1.2, -0.5) node[draw=blue!50, fill=blue!20] (-1){$\overline{x}_{1}$}
        (1.2, 0.5) node[draw=blue!50, fill=blue!20] (1){$x_{1}$}
        (2.4, -0.5) node[draw=blue!50, fill=blue!20] (-2){$\overline{x}_{2}$}
        (2.4, 0.5) node[draw=blue!50, fill=blue!20] (2){$x_{2}$}
        (3.6, -0.5) node[draw=blue!50, fill=blue!20] (-3){$\overline{x}_{3}$}
        (3.6, 0.5) node[draw=blue!50, fill=blue!20] (3){$x_{3}$}
        (4.8, -0.5) node[draw=blue!50, fill=blue!20] (-4){$\overline{x}_{4}$}
        (4.8, 0.5) node[draw=blue!50, fill=blue!20] (4){$x_{4}$}
        (7.2, -0.5) node[draw=blue!50, fill=blue!20] (-6){$\overline{x}_{n-1}$}
        (7.2, 0.5) node[draw=blue!50, fill=blue!20] (6){$x_{n-1}$}
        (8.4, -0.5) node[draw=blue!50, fill=blue!20] (-7){$\overline{x}_{n}$}
        (8.4, 0.5) node[draw=blue!50, fill=blue!20] (7){$x_{n}$}
        (9.6, 0.5) node[draw=blue!50, fill=blue!20] (t){$t_1$}
        (9.6, -0.5) node[draw=blue!50, fill=blue!20] (-t){$\overline{t}_1$}
        (10.8, 0.5) node[draw=blue!50, fill=blue!20] (t1){$t_2$}
        (10.8, -0.5) node[draw=blue!50, fill=blue!20] (-t1){$\overline{t}_2$}
        (2.4, 1.75) node[draw=red!50, fill=red!20, square, inner sep=-0.2em] (1000){$y_{1,1}$}
        (2.4, -1.75) node[draw=red!50, fill=red!20, square, inner sep=-0.2em] (1001){$y_{2,1}$}
        (4.8, 1.75) node[draw=red!50, fill=red!20, square, inner sep=-0.2em] (1002){$y_{3,1}$}
        (4.8, -1.75) node[draw=red!50, fill=red!20, square, inner sep=-0.2em] (1003){$y_{4,1}$}
        (7.2, 1.75) node[draw=red!50, fill=red!20, square, inner sep=-0.2em] (1004){$y_{m-1,1}$}
        (7.2, -1.75) node[draw=red!50, fill=red!20, square, inner sep=-0.2em] (1005){$y_{m,1}$}
        (2.4, 3) node[draw=red!50, fill=red!20, square, inner sep=-0.2em] (10001){$y_{1,2}$}
        (2.4, -3) node[draw=red!50, fill=red!20, square, inner sep=-0.2em] (10011){$y_{2,2}$}
        (4.8, 3) node[draw=red!50, fill=red!20, square, inner sep=-0.2em] (10021){$y_{3,2}$}
        (4.8, -3) node[draw=red!50, fill=red!20, square, inner sep=-0.2em] (10031){$y_{4,2}$}
        (7.2, 3) node[draw=red!50, fill=red!20, square, inner sep=-0.2em] (10041){$y_{m-1,2}$}
        (7.2, -3) node[draw=red!50, fill=red!20, square, inner sep=-0.2em] (10051){$y_{m,2}$};
    \end{scope}
    
    \node at ($(4)!.5!(6)$) {\ldots};
    \node at ($(-4)!.5!(-6)$) {\ldots};
    \node at ($(4)!.5!(-6)$) {\ldots};
    

    \begin{scope}[dashed, -Stealth, thick]
        \draw[red] (-1) to (1004);
        \draw[red] (-2) to (1003);
        \draw[red] (-3) to (1001);
        \draw[red] (-6) to (1002);
        \draw[red] (-6) to (1005);
        \draw[red] (-7) to (1004);
        \draw[red] (-7) to (1005);
        \draw[red] (1) to (1000);
        \draw[red] (1) to (1001);
        \draw[red] (2) to (1003);
        \draw[red] (3) to (1000);
        \draw[red] (-6) to (1000);
        \draw[red] (-6) to (1003);
        \draw[red] (4) to (1002);
        \draw[red] (2) to (1002);
    \end{scope}
    \begin{scope}[-Stealth, thick]
        \draw[blue] (-1) to (2);
        \draw[blue] (-1) to (-2);
        \draw[blue] (1) to (-2);
        \draw[blue] (-2) to (3);
        \draw[blue] (-2) to (-3);
        \draw[blue] (2) to (-3);
        \draw[blue] (-3) to (4);
        \draw[blue] (-3) to (-4);
        \draw[blue] (3) to (-4);
        \draw[blue] (-6) to (7);
        \draw[blue] (-6) to (-7);
        \draw[blue] (6) to (-7);
        \draw[blue] (1) to (2);
        \draw[blue] (2) to (3);
        \draw[blue] (3) to (4);
        \draw[blue] (6) to (7);
        \draw[blue] (6) to (7);
        \draw[blue] (7) to (t);
        \draw[blue] (-7) to (t);
        \draw[blue] (7) to (-t);
        \draw[blue] (-7) to (-t);
        \draw[blue] (1) to (s);
        \draw[blue] (-1) to (s);
        \draw[blue] (1) to (-s);
        \draw[blue] (-1) to (-s);
        \draw[blue] (t) to (t1);
        \draw[blue] (-t) to (-t1);
        \draw[blue] (s) to (s1);
        \draw[blue] (-s) to (-s1);
        \draw[red] (1000) to (10001);
        \draw[red] (1001) to (10011);
        \draw[red] (1002) to (10021);
        \draw[red] (1003) to (10031);
        \draw[red] (1004) to (10041);
        \draw[red] (1005) to (10051);
    \end{scope}
\end{tikzpicture}
        \caption{Example of a graph $G$ used in Lemma~\ref{lem:weighted-np-complete}. The blue circular nodes indicate the base structure, and the red square nodes indicate the nodes associated with the clauses. The dashed lines indicate edges with weight $n$, and the remaining edges have unit weight.}
        \label{fig:np-hardness-proof}
    \end{figure}
    
    We now show that a satisfying truth assignment for $C$ exists if and only if there exists a shortest path between two vertices in $G$ with a $k$-step reach centrality of $k(m + 4) + n$.

    First, note that by construction, any shortest path that starts and ends at a node in $\cup_{i=1}^{n} \{x_i, \overline{x}_i\}$ can only traverse nodes in $\cup_{i=1}^{n} \{x_i, \overline{x}_i\}$.    

    $\Longrightarrow$ Given a satisfying truth assignment, the corresponding path in $G$ will have a $k$-step reach centrality of at least $k(m + 4) + n$ because we have $km$ nodes from the satisfied clauses, $n$ nodes from the negated through assignments, and $4k$ nodes from the $s$ and $t$ nodes.

    $\Longleftarrow$ Given a solution to kStep-CSP-D with $k$-step reach centrality at least $k(m + 4) + n$, we have the following cases:
    \begin{enumerate}
        \item \label{case1} If the path starts at $x_1$ or $\overline{x}_1$ and ends at $x_{n}$ or $\overline{x}_{n}$, then the corresponding path is a solution to SAT. This is because the $s$ and $t$ nodes contribute $4k$ nodes, the negated variables contribute $n$ nodes, and the clauses contribute an additional $km$ nodes.
        \item \label{case2} If the path starts at a node in $\cup_{i=2}^{n-1} \{x_i, \overline{x}_i\}$ and/or ends at a node in $\cup_{i=2}^{n-2} \{x_i, \overline{x}_i\}$, any arbitrary extension of the path such that it starts at $x_1$ or $\overline{x}_1$ and ends at $x_{n}$ or $\overline{x}_{n}$ will not decrease the $k$-step reach neighbourhood of the path, and we arrive at Case~\ref{case1}.
        \item If the path starts and/or ends at a node in $\cup_{i=1}^{m}\{y_{i, 1}, \dots, y_{i, k}\}$, we can remove the $y$ nodes, maintaining or increasing the path's $k$-step reach neighbourhood, and arrive at Case~\ref{case1} or Case~\ref{case2}.
        \item Similarly, if the path starts and/or ends at a node in $\cup_{i=1}^{k}\{s_i, \overline{s}_i, t_i, \overline{t}_i\}$, we can remove these nodes without decreasing the path's $k$-step reach centrality, and we arrive at Case~\ref{case1} or Case~\ref{case2}.
    \end{enumerate}
    In all the above cases, we can form a path that starts at $x_1$ or $\overline{x}_1$ and ends at $x_{n}$ or $\overline{x}_{n}$, corresponding to a satisfying truth assignment.
\end{proof}

\begin{corollary}
    Problem~\eqref{eq:problem} on a graph with weighted edges is NP-hard.
\end{corollary}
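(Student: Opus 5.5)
The corollary follows from Lemma~\ref{lem:weighted-np-complete} by the standard link between an optimisation problem and its decision version. I will show that any algorithm for Problem~\eqref{eq:problem} on weighted graphs also decides kStep-CSP-D with only polynomial overhead. Since Lemma~\ref{lem:weighted-np-complete} shows kStep-CSP-D is NP-complete, and in fact its proof gives a polynomial reduction from SAT, this gives NP-hardness of the optimisation problem.

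\textbf{Reduction.} Take an instance $(G,k,\alpha)$ of kStep-CSP-D. Solve Problem~\eqref{eq:problem} on the weighted graph $G$ with the same $k$, where $\mathcal{SP}(G)$ now means pairwise shortest paths under the edge weights. This returns an optimal path $P^\star$. Compute $C_k(P^\star)$ by a breadth-first search from $P^\star$ in the unweighted (hop) sense, in $O(|V|+|E|)$ time. Answer ``yes'' exactly when $C_k(P^\star)\geq\alpha$.

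\textbf{Correctness.} $P^\star$ maximises $C_k$ over exactly the feasible set that kStep-CSP-D quantifies over. So some shortest path has $k$-step reach at least $\alpha$ if and only if the optimum is at least $\alpha$. Chaining this with the reduction from SAT in the proof of Lemma~\ref{lem:weighted-np-complete}, a polynomial-time algorithm for Problem~\eqref{eq:problem} on weighted graphs would decide SAT in polynomial time.

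\textbf{Point to check.} The only step needing care is that the two problems use the same feasible set and the same objective. Shortest paths are taken with respect to the weights, while $C_k$ is measured in hop distance, as stated before Lemma~\ref{lem:weighted-np-complete}. Both problems use these conventions, so the argument goes through; beyond that the proof is a routine one-line Turing reduction.
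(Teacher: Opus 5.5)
Your proposal is correct and follows the paper's reasoning. The paper states this corollary without a separate proof, as an immediate consequence of Lemma~\ref{lem:weighted-np-complete}, and your Turing reduction (solve the optimisation problem, then compare the optimum with $\alpha$, chaining with the SAT reduction) is the standard argument left implicit there, including the check that both problems use the same weighted feasible set and hop-based objective.
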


\section{Connection to the closeness-central shortest path}\label{sec:closeness}

We now show how to apply Algorithm~\ref{alg:algorithm} to find the closeness-central shortest path in a graph, which is known to be NP-hard \citep{PhosavanhMatsypura2025}. To do so, we rely on the closed $k$-step neighbourhood, $\mathcal{N}_k[P]$, defined in Section~\ref{sec:problem-definition}. Again, note that for unweighted graphs, we have the equivalent distance-based definition of the closed $k$-step neighbourhood:
\[
\mathcal{N}_k[P] = \{u \in V \colon d(u, P) \leq k\}.
\]
Analogously, we can define the \emph{closed $k$-step reach centrality} of $P$ as $C_k[P] = |\mathcal{N}_k[P]|$. Hence, the problem of finding the $k$-step-central shortest path with this centrality measure is
\begin{equation}\label{eq:problem-inclusive}
    \max_P \left\{C_{k}[P] \colon P \in \mathcal{SP}(G)\right\}.
\end{equation}

Observe that Lemmas~\ref{lem:sp2} and \ref{lem:spk} hold for the \emph{closed} $k$-step reach centrality. To adapt the proof, we remove the $-8$ and $-4$ terms from the corresponding lines in the proof of Lemma~\ref{lem:sp2}. This simplification arises because we no longer need to account for the removal of the path nodes when calculating the neighbourhood. This observation allows us to use Algorithm~\ref{alg:algorithm} to solve Problem~\eqref{eq:problem-inclusive}, and we have the following corollary.

\begin{corollary}\label{cor:solve-problem-inclusive}
    Problem~\eqref{eq:problem-inclusive} on an unweighted graph can be solved in $O(m|V|^{2k+4})$ time, where $m$ is the diameter of the graph.
\end{corollary}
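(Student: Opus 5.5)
The proof reduces Corollary~\ref{cor:solve-problem-inclusive} to Theorem~\ref{thm:problem-running-time}. Algorithm~\ref{alg:algorithm} and its running-time analysis never use any property of the objective beyond two facts. First, the neighbourhood of an extended path can be updated from the stored neighbourhood of its prefix in $O(|V|)$ time. Second, the pruning rule is valid, i.e., Lemma~\ref{lem:spk} holds. So the plan is to verify both facts for the closed neighbourhood $\mathcal{N}_k[\cdot]$, run Algorithm~\ref{alg:algorithm} with $|\mathcal{N}_k(\cdot)|$ replaced by $|\mathcal{N}_k[\cdot]|$, and then repeat the counting argument unchanged.

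\textbf{Correctness of pruning for the closed measure.} Take a shortest $s$--$t$ path $P$ and decompose it as $A = P_{(m-2k)-}$, $B = \langle p_{m-2k+1},\dots,p_m\rangle$ and $D = P_{(m+1)+}$. The indices of $A$ and $D$ differ by at least $2k+1$ along a shortest path, so Corollary~\ref{cor:k-path-disjoint} gives $\mathcal{N}_k[A]\cap\mathcal{N}_k[D]=\emptyset$. For the closed version this follows directly from the distance characterisation: a common vertex would give two path vertices at distance at most $2k$. Since $\mathcal{N}_k[P]=\mathcal{N}_k[A]\cup\mathcal{N}_k[B]\cup\mathcal{N}_k[D]$ with no subtraction of path vertices, inclusion--exclusion yields
\[
|\mathcal{N}_k[P]| = |\mathcal{N}_k[P_{m-}]| + |\mathcal{N}_k[D]| - |\mathcal{N}_k[D]\cap\mathcal{N}_k[B]|,
\]
using $|\mathcal{N}_k[P_{m-}]| = |\mathcal{N}_k[A]|+|\mathcal{N}_k[B]|-|\mathcal{N}_k[A]\cap\mathcal{N}_k[B]|$. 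This is the proof of Lemma~\ref{lem:sp2} with the $-8$ and $-4$ correction terms deleted. Now apply the identity to $P'$ and $P''$, which share $B$ and $D$. The last two terms coincide, so the comparison of $|\mathcal{N}_k[P']|$ with $|\mathcal{N}_k[P'']|$ is equivalent to the comparison of $|\mathcal{N}_k[P'_{m-}]|$ with $|\mathcal{N}_k[P''_{m-}]|$. This is the closed analogue of Lemma~\ref{lem:spk}, and it shows that keeping one best path per last-$2k$-node suffix never discards all optimal paths.

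\textbf{Running time.} The closed neighbourhood updates as $\mathcal{N}_k[\langle p_0,\dots,p_{\ell-1},q\rangle]=\mathcal{N}_k[\langle p_0,\dots,p_{\ell-1}\rangle]\cup\mathcal{N}_k[q]$, which is a union of two sets of size $O(|V|)$. Here $\mathcal{N}_k[q]$ is precomputed from the all-pairs distances. With this update, the per-pair analysis of $O(m|V|^{2k+2})$ carries over verbatim. Enumerating all $O(|V|^2)$ pairs $(s,t)$, as in Theorem~\ref{thm:problem-running-time}, gives $O(m|V|^{2k+4})$, where $m$ is the diameter.

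The only real obstacle is the pruning-correctness step, specifically checking the disjointness of the $k$-neighbourhoods of $A$ and $D$ for closed neighbourhoods. Once that holds, everything else is bookkeeping identical to the open case.
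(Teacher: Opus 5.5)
Your proposal is correct and follows the paper's own route. The paper likewise observes that Lemma~\ref{lem:spk} carries over to the closed measure by deleting the $-8$ and $-4$ path-vertex corrections, which is exactly your inclusion--exclusion identity, and then reuses Algorithm~\ref{alg:algorithm} with the running-time count of Theorem~\ref{thm:problem-running-time}; your direct checks of closed-neighbourhood disjointness and of the $O(|V|)$ union update fill in details the paper leaves implicit.
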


We can now state the problem of finding the closeness-central shortest path. First, we define the closeness centrality of path $P$ as
\[
C_c^{\max}(P) := \max\{d(u, P) \colon u \in V\}.
\]
This metric returns the distance to the node that is farthest from any node in $P$ and is sometimes called \emph{eccentricity} \citep{hedetniemi1981jordan}. Using this metric, we can formally define the problem of finding the closeness-central shortest path as
\begin{align}\label{eq:closest-path-problem}
    \min_P \left\{C_c^{\max}(P) \colon P \in \mathcal{SP}(G)\right\}.
\end{align}
In other words, Problem~\eqref{eq:closest-path-problem} seeks a shortest path $P$ between two nodes such that the maximum distance any user must travel to reach $P$ is minimized. A generalisation of this problem to arbitrary paths ($P$ need not be shortest) is known in the literature as the \emph{central path problem} \citep{puerto2018updatedreview}. However, we prefer the term \emph{closeness-central path problem} and reserve the term \emph{central path problem} for a more general class of problems in which the measure of centrality need not be path eccentricity. Similarly, we use the term \emph{central shortest path} to refer to the problem of finding the central shortest path with respect to some measure of centrality and to its solution. We believe this terminology is more elegant and avoids ambiguity. 

To solve Problem~\eqref{eq:closest-path-problem}, we rely on the following observation. If $C_k[P] = |V|$, then all nodes in $G$ are at most $k$ hops away from $P$. Let $C^\star_k[G]$ denote the optimal value of Problem~\eqref{eq:problem-inclusive} for a fixed $k$. Then, Problem~\eqref{eq:closest-path-problem} can be equivalently written as
\begin{equation}\label{eq:closest-path-problem-reformulated}
    \min \left\{k : C^\star_k[G] = |V|\right\}.
\end{equation}
Hence, Problem~\eqref{eq:closest-path-problem} can be solved by iteratively solving Problem~\eqref{eq:problem-inclusive} for increasing values of $k$ until we find a solution with $C^\star_k[G] = |V|$. We formalise this in Algorithm~\ref{alg:algorithm-closeness}. Note that we only need to consider values of $k$ up to and including the diameter of the graph, as this is the maximum distance between any pair of nodes and therefore an upper bound on closeness centrality. Hence, we have the following result.

\begin{corollary}\label{cor:cls-csp}
    Problem~\eqref{eq:closest-path-problem} on an unweighted graph can be solved in $O(m|V|^{2m+4})$ time, where $m$ is the diameter of the graph.
\end{corollary}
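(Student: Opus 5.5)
The plan is to run Algorithm~\ref{alg:algorithm-closeness}, i.e.\ solve Problem~\eqref{eq:problem-inclusive} for $k = 0, 1, 2, \dots$ and stop at the first $k$ with $C^\star_k[G] = |V|$. We then bound the total work by summing the bound of Corollary~\ref{cor:solve-problem-inclusive} over all values of $k$ that can be reached. Correctness rests on the reformulation~\eqref{eq:closest-path-problem-reformulated}, so we verify it first.

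For the reformulation, note that for any shortest path $P$ and integer $k \ge 0$, $C_c^{\max}(P) \le k$ holds if and only if every $u \in V$ satisfies $d(u,P) \le k$. By the distance-based description of the closed $k$-step neighbourhood, this is the same as $\mathcal{N}_k[P] = V$, i.e.\ $C_k[P] = |V|$. Since $C_k[P] \le |V|$ always holds, $C^\star_k[G] = |V|$ exactly when some $P \in \mathcal{SP}(G)$ has $C_c^{\max}(P) \le k$. Moreover, $C^\star_k[G]$ is monotone nondecreasing in $k$, because $\mathcal{N}_k[P] \subseteq \mathcal{N}_{k+1}[P]$. Hence the smallest such $k$ equals the optimal value of Problem~\eqref{eq:closest-path-problem}, and the path attaining $C^\star_k[G] = |V|$ at that $k$ is an optimal closeness-central shortest path. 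We assume $G$ is connected, since otherwise the problem is degenerate.

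For termination, take any single vertex $v$, which is a trivial shortest path. Then $d(u,v) \le m$ for all $u$, so $C_m[\langle v\rangle] = |V|$. Therefore the loop stops at some $k \le m$.

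For the running time, iteration $k$ costs $O(m|V|^{2k+4})$ by Corollary~\ref{cor:solve-problem-inclusive}. Summing over $k = 0,\dots,m$ gives $O\bigl(m\sum_{k=0}^{m}|V|^{2k+4}\bigr)$. This is a geometric series with ratio $|V|^2 \ge 4$ when $|V| \ge 2$, so it is dominated by its last term, giving $O(m|V|^{2m+4})$. The $O(|V|^3)$ all-pairs distance computation only needs to be done once, as noted in the Remark, and is absorbed into this bound.

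The main point needing care is that Corollary~\ref{cor:solve-problem-inclusive} is valid for every $k$, including small ones. For $\ell < 2k+1$ there is no pruning, but the number of stored paths is then at most $O(|V|^{\ell}) \subseteq O(|V|^{2k})$, which the earlier running-time analysis already allows. For $k=0$ the pruning window is empty; this case is trivial because $C^\star_0[G] = |V|$ only if a single shortest path covers $V$. It can also simply be absorbed into the constant of the sum.
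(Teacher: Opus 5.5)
Your proof is correct and follows the same route as the paper: you solve Problem~\eqref{eq:problem-inclusive} for increasing $k$ up to the diameter, justify this by the reformulation~\eqref{eq:closest-path-problem-reformulated}, and add up the per-$k$ costs from Corollary~\ref{cor:solve-problem-inclusive}. Your geometric-series step makes explicit why the up to $m$ repetitions do not cost an extra factor of $m$, which the paper's proof leaves implicit; starting at $k=0$ also covers the degenerate case where $G$ is itself a path (optimal value $0$), which the paper's loop from $k=1$ would not detect.
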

\begin{proof}
    Due to the equivalence of Problem~\eqref{eq:closest-path-problem} and Problem~\eqref{eq:closest-path-problem-reformulated}, we can apply Algorithm~\ref{alg:algorithm} sequentially to solve instances of Problem~\eqref{eq:problem-inclusive} for different values of $k \in \{1,\ldots, m\}$, each with a running time of $O(m|V|^{2k+4})$. This needs to be repeated at most $m$ times, since $m$ is the largest value of $k$ we need to consider.
\end{proof}

\begin{remark}
Corollary~\ref{cor:cls-csp} states that the worst-case running time of Algorithm~\ref{alg:algorithm-closeness} is exponential in the graph diameter. While this poses an obvious scalability challenge, many real-world instances have relatively small diameters, so the problem can be solved in a reasonable time, as we demonstrate in the next section. Moreover, each iteration of the for loop can be executed in parallel.
\end{remark}

\begin{algorithm}
    \caption{Finding the closeness-central shortest path (Cls-CSP)}\label{alg:algorithm-closeness}
    \begin{algorithmic}[1]
        \Input Graph $G = (V, E)$, source node $s \in V$ and destination node $t \in V$.
        \Output Closeness-central shortest path $P^\star$ in $G$.
        \State $m \gets \text{diam}(G)$
        \For{$k = 1, \dots, m$}
            \State Compute $\widetilde{P}$ by solving Problem~\eqref{eq:problem-inclusive} using Algorithm~\ref{alg:algorithm}. 
            \If{$C_k[\widetilde{P}] = |V|$}
                \State $P^\star \gets \widetilde{P}$.
                \State \textbf{break}
            \EndIf
        \EndFor
    \end{algorithmic}
\end{algorithm}

\section{Numerical analysis}\label{sec:numerical}
In this section, we present a comprehensive numerical analysis of the proposed algorithm across a set of real and synthetic graph instances.

\subsection{Data}

We use the following graph instances in our numerical experiments. 
\begin{itemize}
    \item \textit{Watts-Strogatz}: the `small-world' random network generation model proposed by \citet{WattsStrogatz1998}. We generated Watts-Strogatz graphs with 100, 200, 500, 1000 and 2000 nodes. For each graph, the number of initial neighbours was set to 4, and the rewiring probability to 0.1 or 0.2.
    \item \textit{Barab\'{a}si-Albert}: the `scale-free preferential attachment' model proposed by \citet{BarabasiAlbert1999}. This model generates graphs with a power-law degree distribution. We generated graphs with 100, 200, 500, 1000, and 2000 nodes. For each graph, the number of edges added to new nodes was set to 2 to match the density of the Watts-Strogatz graphs.
    \item \textit{Exponential SP}: To analyse the effect of pruning in Algorithm~\ref{alg:algorithm}, we constructed random graphs with a controllable feasible-set size. We describe our graph-generating procedure in Section~\ref{sec:exp-sp-graphs}.
    \item \textit{Urban Streets}: the set of twenty spatial networks, first studied by \citet{CrucittiPaolo2006}. These graphs represent 1-square-mile maps of twenty cities.
\end{itemize}
For the three synthetic instance classes, we generated 30 random instances per parameter set.

\subsection{Generating graphs with exponentially-many shortest paths}\label{sec:exp-sp-graphs}

Here, we describe a procedure for generating graphs with exponentially many shortest paths, building on the observation by \citet{Matsypura2023} that the number of shortest paths between a pair of nodes can be exponentially large. We provide a general procedure that allows control over the number of shortest paths while introducing random structures into the graph.

We first define two parameters, $R$ and $\ell$, which indicate the number of repetitions and the number of linking nodes, respectively, used to construct the backbone of our graphs. This backbone structure enables us to generate a graph with exponentially many shortest paths. For each repetition $r \in \{0, 1, \dots, R-1\}$, we generate a loop composed of two paths of length $\ell + 1$:
\begin{itemize}
	\item $P_1 = \langle r(2 \ell + 1), r(2 \ell + 1) + 1, \dots, r(2 \ell + 1) + \ell, (r+1)(2 \ell + 1) \rangle$,
	\item $P_2 = \langle r(2 \ell + 1), r(2 \ell + 1) + \ell + 1, \dots,  r(2 \ell + 1) + 2\ell, (r+1)(2 \ell + 1)\rangle$.
\end{itemize}
This procedure produces a chain-like graph with $R(2\ell + 1)+ 1$ nodes and $2R(\ell + 1)$ edges. See Figure~\ref{fig:chain-graphs} for an example with $R = 3$ and $\ell = 2$.

\begin{figure}[!htbt]
    \centering
    \begin{tikzpicture}[scale=1.4, square/.style={rectangle, rounded corners}]
    \footnotesize
    \begin{scope}[circle,minimum size=10mm]
        \node at (1, 1) [draw=red!50,fill=red!20]  (1) {1};
        \node at (2, 1) [draw=red!50,fill=red!20]  (2) {2};
        \node at (1, -1) [draw=red!50,fill=red!20] (3) {3};
        \node at (2, -1) [draw=red!50,fill=red!20] (4) {4};
        \node at (4, 1) [draw=red!50,fill=red!20]  (6) {6};
        \node at (5, 1) [draw=red!50,fill=red!20]  (7) {7};
        \node at (4, -1) [draw=red!50,fill=red!20] (8) {8};
        \node at (5, -1) [draw=red!50,fill=red!20] (9) {9};
        \node at (7, 1) [draw=red!50,fill=red!20]  (11) {11};
        \node at (8, 1) [draw=red!50,fill=red!20]  (12) {12};
        \node at (7, -1) [draw=red!50,fill=red!20] (13) {13};
        \node at (8, -1) [draw=red!50,fill=red!20] (14) {14};
    \end{scope}
    \begin{scope}[circle, minimum size=10mm]
        \node at (0, 0) [draw=blue!50,fill=blue!20, square, inner sep=-0.2em]  (0) {0};
        \node at (3, 0) [draw=blue!50,fill=blue!20, square, inner sep=-0.2em]  (5) {5};
        \node at (6, 0) [draw=blue!50,fill=blue!20, square, inner sep=-0.2em]  (10) {10};
        \node at (9, 0) [draw=blue!50,fill=blue!20, square, inner sep=-0.2em]  (15) {15};
    \end{scope}
    \path 
    (0) edge [black, thick] (1)
    (1) edge [black, thick] (2)
    (2) edge [black, thick] (5)
    (0) edge [black, thick] (3)
    (3) edge [black, thick] (4)
    (4) edge [black, thick] (5)
    (5) edge [black, thick] (6)
    (6) edge [black, thick] (7)
    (7) edge [black, thick] (10)
    (5) edge [black, thick] (8)
    (8) edge [black, thick] (9)
    (9) edge [black, thick] (10)
    (10) edge [black, thick] (11)
    (11) edge [black, thick] (12)
    (12) edge [black, thick] (15)
    (10) edge [black, thick] (13)
    (13) edge [black, thick] (14)
    (14) edge [black, thick] (15);
\end{tikzpicture}
    \caption{Example of a chain graph with $R=3$ and $\ell=2$.}
    \label{fig:chain-graphs}
\end{figure}

Note that for a pair of nodes in different loops, the number of shortest paths is $2^{q-1}$, where $q$ is the number of blue square nodes between them. It is easy to see that the graph produced by this procedure has a lot of structure, and for some problems, including ours, this structure is uninteresting. To make it more interesting, we introduce chaotic structures by adding trees to the base chain. To do this, we generate random trees and randomly select non-linking nodes as their roots. In Figure~\ref{fig:chain-graphs}, the non-linking nodes are red circular nodes. We chose only non-linking nodes so that the resulting graphs are relevant to the central shortest path problem; otherwise, there could be exponentially many solutions.

For the purposes of our paper, we will analyse graphs with approximately 500 nodes. To generate these instances, we first construct a base chain structure comprising about 250 nodes. To choose appropriate values of $\ell$ and $R$, we fix $\ell$ and compute $R = \frac{250}{2\ell+1}$, then round to obtain an integer. To obtain the additional 250 nodes, we generate 25 trees, each with 11 nodes, since each tree contributes 10 additional nodes to the graph when its root is connected to the base structure.

\subsection{Experimental results}

We now present the results of our numerical experiments. Table~\ref{tab:ba} reports results for the Barab\'{a}si-Albert graphs, while Table~\ref{tab:ws} reports results for the Watts-Strogatz instances. Table~\ref{tab:urban-streets} reports results for selected Urban Streets instances. Results for the remaining Urban Streets instances are provided in Appendix~\ref{sec:appendix-tables}. For Tables~\ref{tab:ba} and \ref{tab:ws}, results are averaged over 30 instances. Across all these tables, Algorithm~\ref{alg:algorithm} was applied with pruning. The experiments were conducted on a Mac Studio (2023) with an Apple M2 Ultra and 128 GB of RAM, running macOS Sequoia 15.7. Algorithm~\ref{alg:algorithm} was implemented in Python 3.12. 

In these tables, we report the graph size, the graph diameter, and the relative size of the feasible set $|\mathcal{SP}(G)|/U(G)$, where $U(G)$ is the number of shortest paths in $G$ if each pair of nodes had a single shortest path between them, i.e., the number of pairwise connectable nodes in $G$. The row \textit{diam centrality} gives the maximum centrality of a diametral path, while the row \textit{path length} gives the length of the shortest path with the highest centrality, which is recorded in the row \textit{path centrality}. Finally, the run time of Algorithm~\ref{alg:algorithm} is reported in the row \textit{time}, given in seconds. For the synthetic instances, we test our algorithm for $k=1$ and $k=2$, whereas for the Urban Streets instances, we consider $k$ up to $ 9$ but report results only up to $k=5$ for brevity.

In all the result tables, we see that increasing $k$ quickly increases the neighbourhood of the path. For most of our instances, as $k$ grows, the centrality of the optimal path increases, while the length of the $k$-step reach-central shortest path decreases. In most cases, when $k$ increases from 1 to 2, the neighbourhood of the optimal path doubles or even triples. We also observe that the length of the optimal path is shorter than the diameter of the graphs, implying that, in most graphs, the most influential path is not necessarily the longest.

Tables~\ref{tab:ba} and \ref{tab:ws} demonstrate the scalability of our algorithm on synthetic graphs that mimic social networks. Our algorithm can solve Problem~\ref{eq:problem} on graphs with 2000 nodes in about half an hour. We also find that our algorithm runs faster on networks with smaller diameters, consistent with Theorem~\ref{thm:problem-running-time}. We observe a 10-20\% increase in running time when $k$ increases from 1 to 2, as the algorithm must consider extending more paths.

Our algorithm solved the largest Urban Streets instance (Ahmedabad, 2870 nodes) in under six hours when $k=1$. As expected, the runtime of our algorithm increases with $k$. On the same Ahmedabad instance, when $k=5$, the runtime is just over 17.5 hours. For the Venice graph, our algorithm solves the $k=1$ case in around 100 minutes, but when $k=5$, the running time approximately triples.

\begin{table}[!htb]
    \centering
    \small
    \begin{tabular}{lrrrrr}
\toprule
\multicolumn{1}{c}{} & \multicolumn{5}{c}{Barab\'{a}si-Albert}              \\ \cmidrule(lr){2-6}
$|V|$                      & 100   & 200    & 500    & 1000   & 2000    \\
$|E|$                      & 196   & 396    & 996    & 1996   & 3996    \\
$|\mathcal{SP}(G)|/U(G)$ & 2.13  & 2.39   & 2.67   & 2.87   & 3.06    \\
diam                       & 5.57  & 6.13   & 7.03   & 7.27   & 8.00    \\ \midrule
\multicolumn{1}{c}{}       & \multicolumn{5}{c}{$k=1$}                  \\ \cmidrule(lr){2-6}
diam centrality            & 46.17 & 74.87  & 120.10 & 179.30 & 244.73  \\
path length                & 3.87  & 4.40   & 4.67   & 5.07   & 5.47    \\
path centrality            & 51.87 & 79.97  & 138.33 & 199.83 & 291.47  \\
time                       & 0.12  & 0.86   & 13.20  & 110.19 & 926.01  \\
\midrule
\multicolumn{1}{c}{}       & \multicolumn{5}{c}{$k=2$}                  \\ \cmidrule(lr){2-6}
diam centrality            & 90.10 & 176.63 & 383.77 & 679.63 & 1127.10 \\
path length                & 2.83  & 3.97   & 4.67   & 5.23   & 5.63    \\
path centrality            & 93.93 & 181.87 & 409.20 & 721.53 & 1245.43 \\
time                       & 0.18  & 1.27   & 18.52  & 146.71 & 1197.97 \\
\bottomrule
\end{tabular}
    \caption{Results for Barab\'{a}si-Albert graphs, averaged over 30 instances.}
    \label{tab:ba}
\end{table}

\begin{table}[!htb]
    \centering
    \small
\begin{tabular}{lrrrrrrrrrr}
\toprule
\multicolumn{1}{c}{}       & \multicolumn{5}{c}{Watts-Strogatz (4, 0.1)} & \multicolumn{5}{c}{Watts-Strogatz (4, 0.2)} \\ \cmidrule(lr){2-6}\cmidrule(lr){7-11}
$|V|$                      & 100   & 200   & 500   & 1000   & 2000    & 100   & 200   & 500   & 1000   & 2000          \\
$|E|$                      & 200   & 400   & 1000  & 2000   & 4000    & 200   & 400   & 1000  & 2000   & 4000          \\
$|\mathcal{SP}(G)|/U(G)$   & 2.46  & 2.70  & 3.19  & 3.53   & 3.89    & 2.04  & 2.10  & 2.25  & 2.35   & 2.45          \\
diam                       & 10.43 & 12.47 & 15.40 & 17.73  & 19.80   & 8.43  & 9.53  & 11.50 & 12.97  & 14.23         \\ \midrule
\multicolumn{1}{c}{}       & \multicolumn{10}{c}{$k=1$}                                                                \\ \cmidrule(lr){2-11}
diam centrality            & 22.23 & 26.30 & 31.20 & 35.37  & 38.50   & 21.67 & 24.97 & 28.80 & 30.60  & 34.17         \\
path length                & 9.03  & 10.87 & 12.97 & 14.73  & 16.23   & 7.03  & 8.00  & 9.33  & 10.17  & 11.03         \\
path centrality            & 23.17 & 27.77 & 33.87 & 37.60  & 41.77   & 23.33 & 26.27 & 31.73 & 35.17  & 39.00         \\
time                       & 0.22  & 1.61  & 26.61 & 230.13 & 2015.32 & 0.17  & 1.27  & 20.98 & 181.66 & 1590.98       \\ \midrule
\multicolumn{1}{c}{}       & \multicolumn{10}{c}{$k=2$}                                                                \\ \cmidrule(lr){2-11} 
diam centrality            & 47.80 & 60.57 & 74.00 & 82.20  & 91.77   & 47.80 & 60.57 & 74.00 & 82.20  & 91.77         \\
path length                & 8.30  & 10.13 & 11.83 & 12.67  & 14.37   & 6.60  & 7.37  & 8.70  & 9.70   & 10.27         \\
path centrality            & 52.03 & 67.30 & 84.73 & 96.93  & 109.90  & 58.50 & 74.80 & 97.00 & 109.63 & 125.00        \\
time                       & 0.28  & 2.06  & 32.14 & 260.96 & 2180.42 & 0.21  & 1.49  & 24.06 & 198.97 & 1685.64       \\
\bottomrule
\end{tabular}
    \caption{Results for Watts-Strogatz graphs with a rewiring probability of 0.1 and 0.2, averaged over 30 instances.}
    \label{tab:ws}
\end{table}

\begin{table}[!htb]
    \centering
    \small
\begin{tabular}{lrrrrr}
\toprule
  ~                        & Ahmedabad  & Cairo    & London   & San Francisco & Venice   \\ \cmidrule(lr){2-6}
  $|V|$                    & 2870       & 1496     & 488      & 169           & 1840     \\
  $|E|$                    & 4375       & 2252     & 729      & 271           & 2397     \\
  diam                     & 86         & 51       & 33       & 21            & 102      \\
  $|\mathcal{SP}(G)|/U(G)$ & 63.39      & 12.26    & 14.60    & 105.80        & 36.66    \\ \midrule   \multicolumn{1}{c}{} & \multicolumn{5}{c}{$k=1$} \\ \cmidrule(lr){2-6}
  diam centrality          & 108        & 76       & 48       & 39            & 116      \\
  path length              & 84         & 49       & 29       & 19            & 94       \\
  path centrality          & 110        & 78       & 51       & 41            & 120      \\
  time                     & 20568.23   & 2141.02  & 53.81    & 2.78          & 6096.04  \\ \midrule   \multicolumn{1}{c}{} & \multicolumn{5}{c}{$k=2$} \\ \cmidrule(lr){2-6}
  diam centrality          & 244        & 174      & 101      & 76            & 227      \\
  path length              & 82         & 48       & 30       & 18            & 93       \\
  path centrality          & 247        & 177      & 104      & 79            & 237      \\
  time                     & 23216.90   & 2473.90  & 78.90    & 9.17          & 7080.23  \\ \midrule   \multicolumn{1}{c}{} & \multicolumn{5}{c}{$k=3$} \\ \cmidrule(lr){2-6}
  diam centrality          & 387        & 283      & 153      & 108           & 320      \\
  path length              & 81         & 47       & 30       & 17            & 90       \\
  path centrality          & 392        & 286      & 161      & 113           & 328      \\
  time                     & 29380.25   & 3235.07  & 134.71   & 28.36         & 9096.59  \\ \midrule   \multicolumn{1}{c}{} & \multicolumn{5}{c}{$k=4$} \\ \cmidrule(lr){2-6}
  diam centrality          & 537        & 394      & 194      & 125           & 408      \\
  path length              & 78         & 45       & 29       & 17            & 65       \\
  path centrality          & 544        & 398      & 211      & 134           & 432      \\
  time                     & 40717.58   & 4563.87  & 241.38   & 71.16         & 12579.93 \\ \midrule   \multicolumn{1}{c}{} & \multicolumn{5}{c}{$k=5$} \\ \cmidrule(lr){2-6}
  diam centrality          & 677        & 515      & 237      & 135           & 492      \\
  path length              & 76         & 45       & 29       & 15            & 63       \\
  path centrality          & 686        & 527      & 261      & 144           & 544      \\
  time                     & 62415.42   & 6691.38  & 397.10   & 132.54        & 18580.46 \\ 
\bottomrule
\end{tabular}
    \caption{Results for selected Urban Streets graphs.}
    \label{tab:urban-streets}
\end{table}

\subsection{The effect of pruning}
We now examine the effect of pruning. We apply Algorithm~\ref{alg:algorithm} with $k=1$ and $k=2$ to graphs generated by the procedure in Section~\ref{sec:exp-sp-graphs}. We report the runtime of Algorithm~\ref{alg:algorithm} as a function of the relative size of the feasible set, averaged over 30 randomly generated instances. In these figures, low values of $|\mathcal{SP}(G)|/U(G)$ correspond to graphs with few but long chains, while high values correspond to graphs with many short chains. Again, note that in all instances, the graphs analysed in this subsection have approximately 500 nodes.

\begin{figure}[!htb]
    \centering
    \begin{tikzpicture}[font=\small,]
    \begin{groupplot}[
        group style={
            group size=2 by 1,
            horizontal sep=1em,
            yticklabels at=edge left,
            ylabels at=edge left,
            xlabels at=edge bottom,
        },
        x grid style={darkgray},
        xlabel={$|\mathcal{SP}(G)|/U(G)$},
        xmin=1, xmax=500,
        xmode=log,
        xtick style={color=black},
        y grid style={darkgray},
        ylabel={Time (s)},
        ymin=0, ymax=3700,
        ytick style={color=black},
        ytick={0,500,...,3500},
        legend columns=-1,
        legend style={/tikz/every even column/.append style={column sep=0.5cm}}
    ]
    \nextgroupplot[title={$k=1$},legend to name=legendpos]
        \addplot [draw=red, fill=red, mark=o, only marks]
        table{%
        x  y
        1.2703821711804766 151.66177195130004
        1.530260517634255 169.52882159726724
        1.87813047833709 168.7187363792332
        2.335262396064603 193.78740276526733
        2.9483489530500084 236.41440450970063
        3.667302389705882 232.08257332630066
        4.8762799999999995 214.7513326345339
        6.284536457345222 222.22258788460022
        8.27823103336984 224.4082213097669
        10.790730124800275 212.44874889026735
        18.18775487797227 239.3727983306007
        33.591162661814835 241.84058506800017
        57.66273427041499 227.50142862636665
        203.0379935285314 253.3787610944329
        966.2781939969207 258.6172575279013
        };
    \addlegendentry{With pruning}
    \addplot [draw=blue, fill=blue, mark=*, only marks]
        table{%
        x  y
        1.2703821711804766 108.43496827499912
        1.530260517634255 118.93597676666735
        1.87813047833709 119.55591048476781
        2.335262396064603 139.4818928376015
        2.9483489530500084 177.10518133620013
        3.667302389705882 188.9952760985975
        4.8762799999999995 203.4066966542324
        6.284536457345222 254.65647025546642
        8.27823103336984 342.3658105471996
        10.790730124800275 450.24343375700215
        18.18775487797227 1286.3773675595332
        33.591162661814835 3597.9342334611674
        };
        \addlegendentry{Without pruning}                \coordinate (c1) at (rel axis cs:0,1);
    \nextgroupplot[title={$k=2$}]
    \addplot [draw=red, fill=red, mark=o, only marks]
        table{%
        x  y
        1.27038217118048 157.55795744102
        1.53026051763425 182.190307193034
        1.87813047833709 184.8663469306
        2.3352623960646 216.220941404198
        2.94834895305001 269.518811191631
        3.66730238970588 269.8604486792
        4.87628 254.032340245833
        6.28453645734522 267.5935276806
        8.27823103336984 275.884363806899
        10.7907301248003 264.379686555601
        18.1877548779723 309.602144544468
        33.5911626618148 320.738562211066
        57.662734270415 308.626150184834
        203.037993528531 362.863380805465
        966.278193996921 392.157437240261
        };
    \addplot [draw=blue, fill=blue, mark=*, only marks]
        table{%
        x  y
        1.27038217118048 88.589809256924
        1.53026051763425 127.520664892998
        1.87813047833709 129.156905181931
        2.3352623960646 157.225418433252
        2.94834895305001 202.466580454214
        3.66730238970588 223.670797848593
        4.87628 258.918182909861
        6.28453645734522 352.10070725133
        8.27823103336984 508.983297738743
        10.7907301248003 719.157225355529
        18.1877548779723 2265.2935575584
        };
        \coordinate (c2) at (rel axis cs:1,1);
    \end{groupplot}
    \coordinate (c3) at ($(c1)!.5!(c2)$);
    \node[below] at (c3 |- current bounding box.south)
      {\pgfplotslegendfromname{legendpos}};
\end{tikzpicture}
    \caption{Running time of Algorithm~\ref{alg:algorithm} with $k=1$ and $k=2$ over different sets of graphs generated with exponentially many shortest paths, averaged over 30 instances.}
    \label{fig:pruning-comparison}
\end{figure}

Figure~\ref{fig:pruning-comparison} clearly shows the importance of pruning in keeping the algorithm's running time polynomially bounded. As the number of unique shortest paths increases with the number of loops in the graph’s backbone, the running time without pruning grows exponentially, as expected. In contrast, with pruning enabled, the running time plateaus even as the size of the feasible set grows. However, when the relative size of the feasible set is small, e.g., when $|\mathcal{SP}(G)|/U(G)$ is less than 5, the cost of pruning exceeds the runtime when all results are kept in memory. Although the above results focus on time complexity, the space complexity also grows in a similar manner. With pruning enabled, the required memory remains stable even as the relative size of the feasible set increases. By contrast, when the algorithm runs without pruning, the required memory grows exponentially, and our machine cannot handle instances where the ratio $|\mathcal{SP}(G)|/U(G)$ exceeds 50.

This finding is particularly important for lattice-structured networks, such as street networks in grid-based cities. To see this, consider an $n \times n$ square grid graph. There are $(2n)!/(n!)^2$ shortest paths between the two corner nodes. This number grows exponentially with $n$. We also observe this empirically in the reported relative sizes of the Urban Streets instances in Table~\ref{tab:urban-streets}. For example, in the San Francisco instance, each pair of connectable nodes has, on average, 105.80 distinct shortest paths.

\subsection{Analysis on Urban Streets}

We now turn our attention to the Urban Streets instances. These spatial graphs allow us to visualise the impact of varying $k$ on graphs with physical representations. We plot the results for London and San Francisco in Figures~\ref{fig:london} and \ref{fig:san-francisco}, respectively. We chose these cities because they represent different topologies: London is a naturally sprawling city, whereas San Francisco is a planned city with grid-like structures. In these figures, we retain the cardinal directions, but the spacing has been adjusted for presentation. Note that $|P|$ and the length reported in Table~\ref{tab:urban-streets} differ, because the length of the path is one less than the number of nodes traversed. In addition to these figures, we summarise how the optimal path and neighbourhood size change as $k$ increases in Figure~\ref{fig:streets-analysis} for the two cities outlined above, as well as the three largest instances. In our experiments, we computed the $k$-step reach-central shortest paths for $k=1$ to $k=9$ across all our instances, but for graphs with over 1000 nodes, we considered only $k=1$ to $k=7$ due to the computational cost. Results for additional cities can be found in Figure~\ref{fig:streets-analysis-appendix} in the appendix.
\begin{figure}[!htb]
    \centering
    \captionsetup{justification=centering}
    \begin{subfigure}[c]{0.32\textwidth}
        \resizebox{\textwidth}{!}{\includegraphics{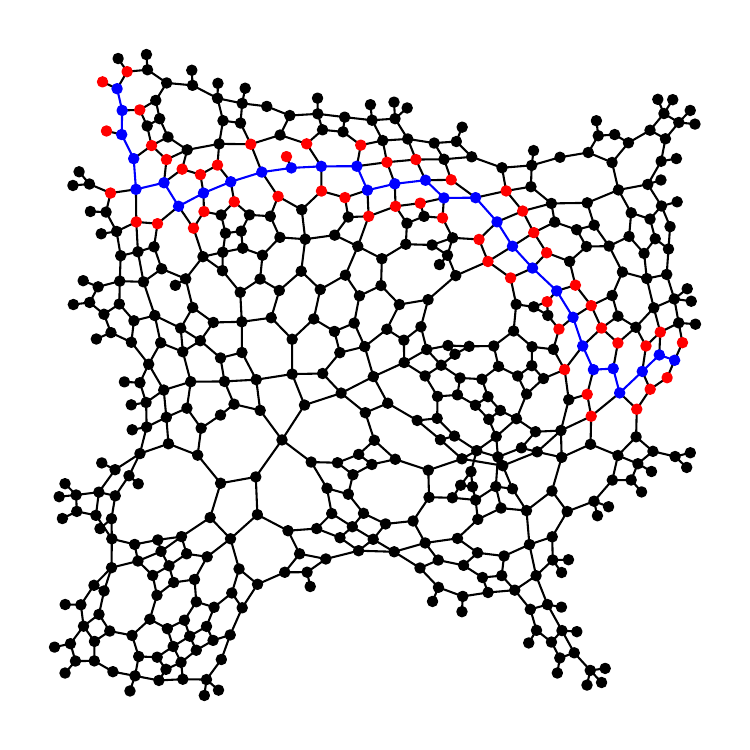}}
        \caption{$k=1$\\$C_1(P^\star) = 51$, $|P^\star| = 30$}
        \label{fig:london-1}
    \end{subfigure}%
    ~ 
    \begin{subfigure}[c]{0.32\textwidth}
        \resizebox{\textwidth}{!}{\includegraphics{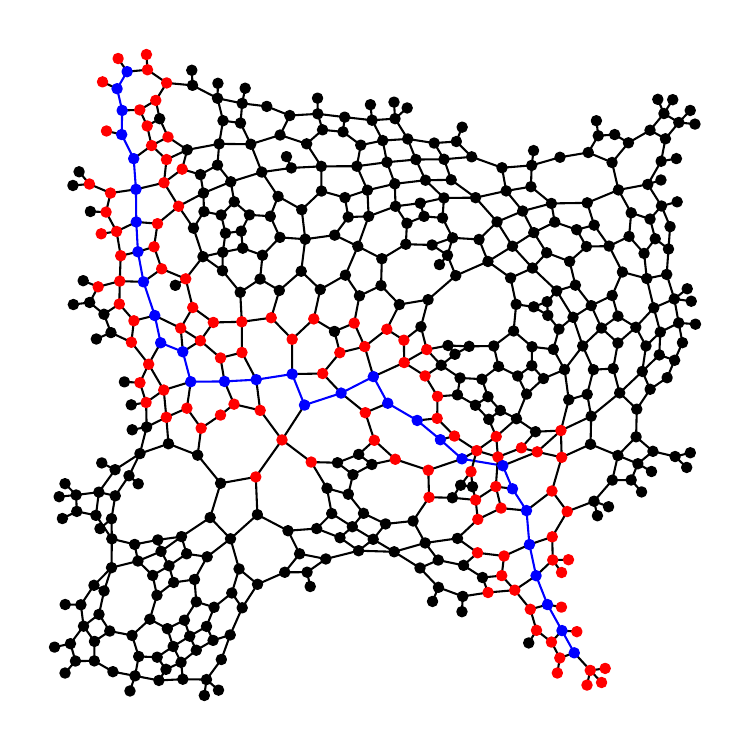}}
        \caption{$k=2$\\$C_2(P^\star) = 104$, $|P^\star| = 31$}
        \label{fig:london-2}
    \end{subfigure}%
    ~ 
    \begin{subfigure}[c]{0.32\textwidth}
        \resizebox{\textwidth}{!}{\includegraphics{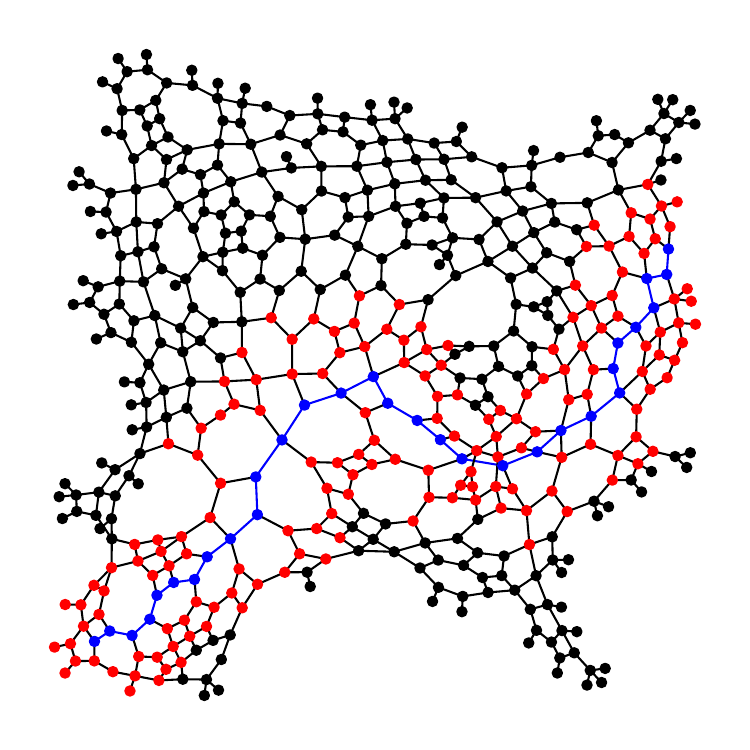}}
        \caption{$k=3$\\$C_3(P^\star) = 161$, $|P^\star| = 31$}
        \label{fig:london-3}
    \end{subfigure}%
    \vspace{1em}
    \begin{subfigure}[c]{0.32\textwidth}
        \resizebox{\textwidth}{!}{\includegraphics{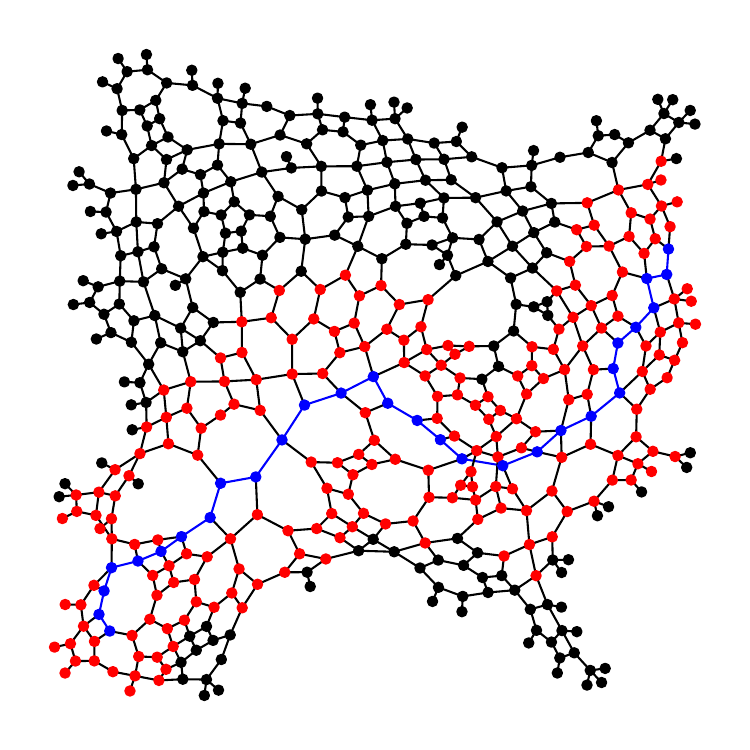}}
        \caption{$k=4$\\$C_4(P^\star) = 211$, $|P^\star| = 30$}
        \label{fig:london-4}
    \end{subfigure}%
    ~ 
    \begin{subfigure}[c]{0.32\textwidth}
        \resizebox{\textwidth}{!}{\includegraphics{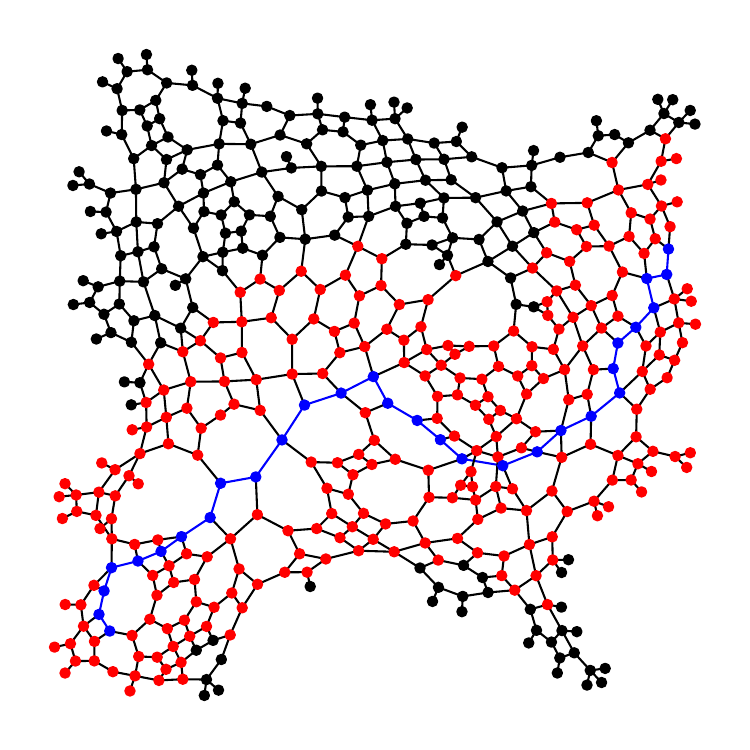}}
        \caption{$k=5$\\$C_5(P^\star) = 261$, $|P^\star| = 30$}
        \label{fig:london-5}
    \end{subfigure}%
    ~ 
    \begin{subfigure}[c]{0.32\textwidth}
        \resizebox{\textwidth}{!}{\includegraphics{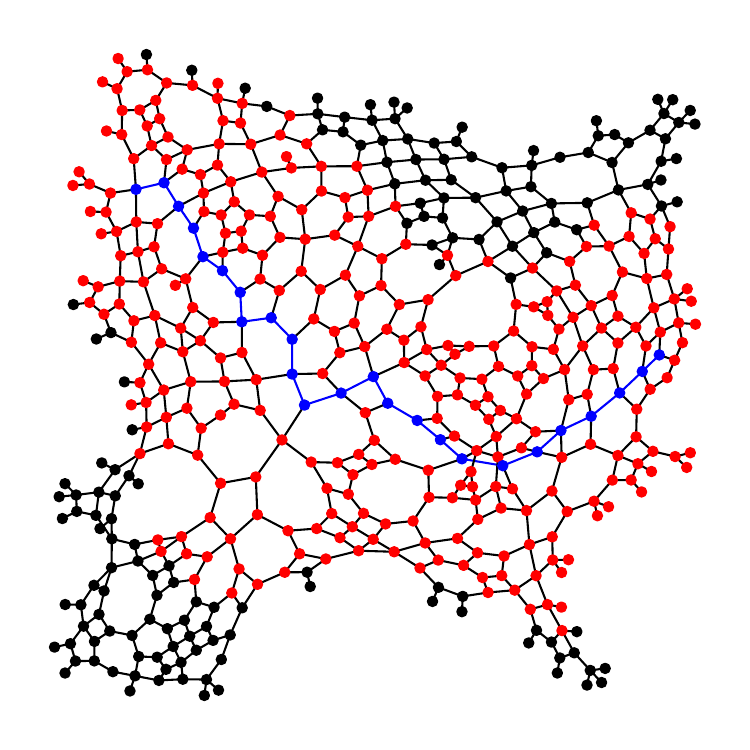}}
        \caption{$k=6$\\$C_6(P^\star) = 306$, $|P^\star| = 25$}
        \label{fig:london-6}
    \end{subfigure}%
    \vspace{1em}
    \begin{subfigure}[c]{0.32\textwidth}
        \resizebox{\textwidth}{!}{\includegraphics{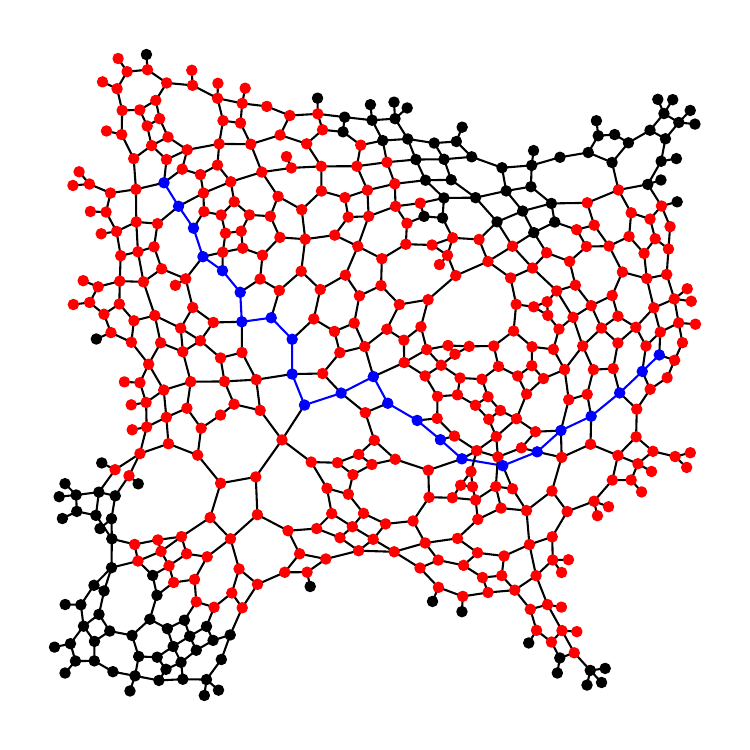}}
        \caption{$k=7$\\$C_7(P^\star) = 348$, $|P^\star| = 24$}
        \label{fig:london-7}
    \end{subfigure}%
    ~ 
    \begin{subfigure}[c]{0.32\textwidth}
        \resizebox{\textwidth}{!}{\includegraphics{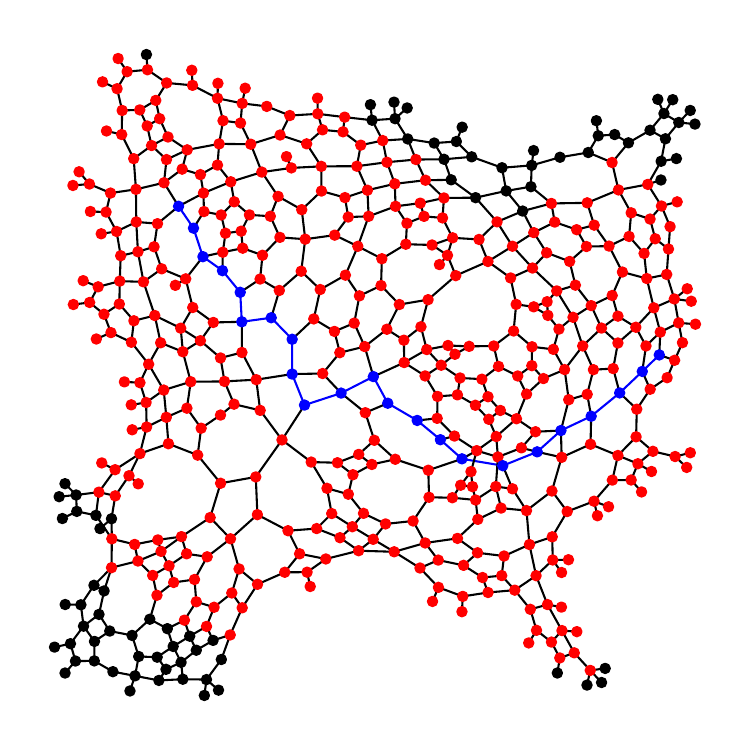}}
        \caption{$k=8$\\$C_8(P^\star) = 383$, $|P^\star| = 23$}
        \label{fig:london-8}
    \end{subfigure}%
    ~ 
    \begin{subfigure}[c]{0.32\textwidth}
        \resizebox{\textwidth}{!}{\includegraphics{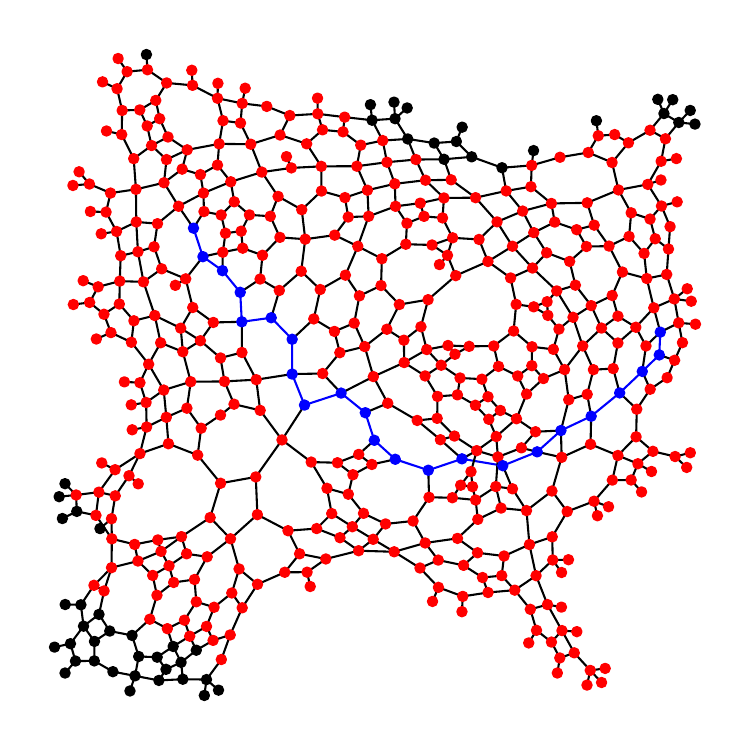}}
        \caption{$k=9$\\$C_9(P^\star) = 413$, $|P^\star| = 23$}
        \label{fig:london-9}
    \end{subfigure}
    \caption{$k$-step-central shortest paths for London (diameter: 33).}
    \label{fig:london}
\end{figure}

\begin{figure}[!htb]
    \centering
    \captionsetup{justification=centering}
    \begin{subfigure}[c]{0.32\textwidth}
        \resizebox{\textwidth}{!}{\includegraphics{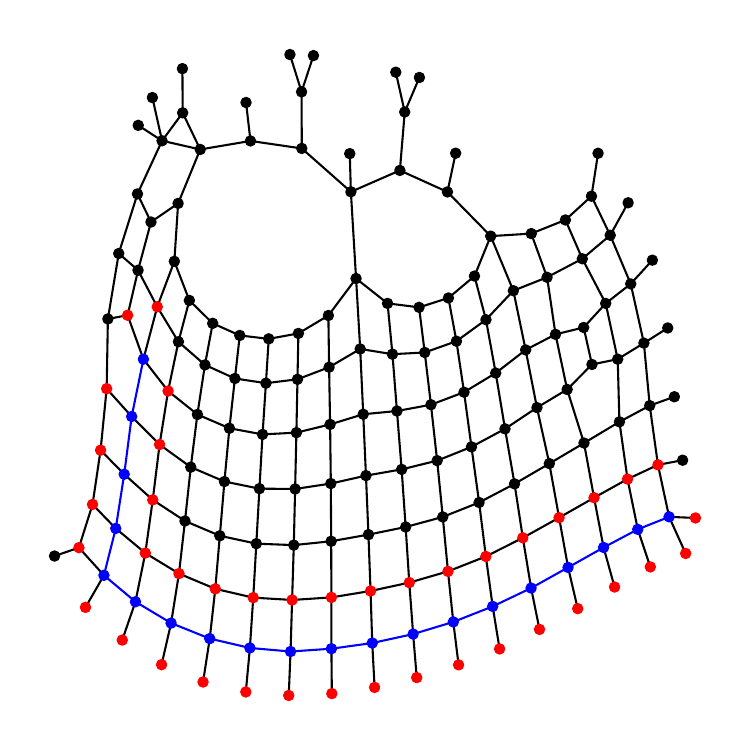}}
        \caption{$k=1$\\$C_1(P^\star) = 41$, $|P^\star| = 20$}
        \label{fig:san-francisco-1}
    \end{subfigure}%
    ~ 
    \begin{subfigure}[c]{0.32\textwidth}
        \resizebox{\textwidth}{!}{\includegraphics{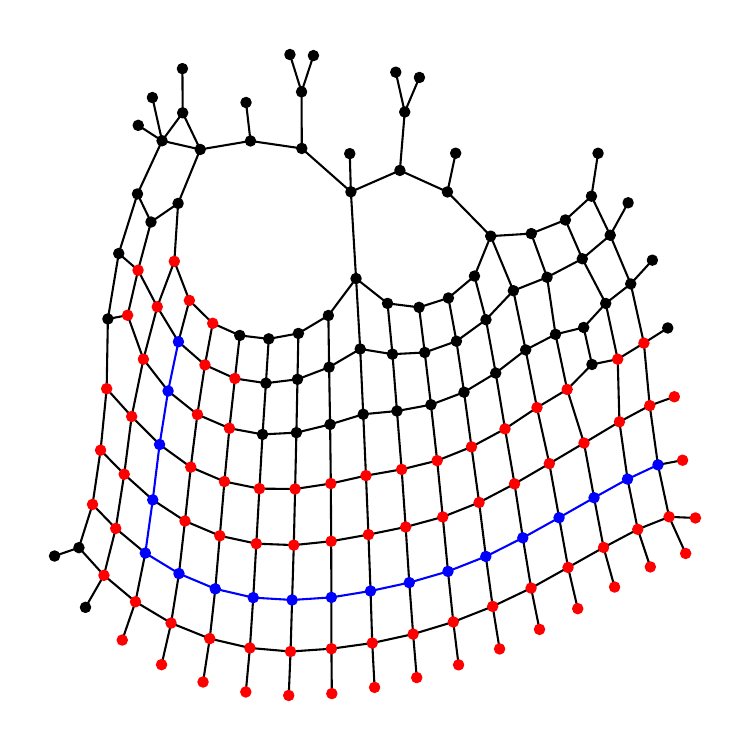}}
        \caption{$k=2$\\$C_2(P^\star) = 79$, $|P^\star| = 19$}
        \label{fig:san-francisco-2}
    \end{subfigure}%
    ~ 
    \begin{subfigure}[c]{0.32\textwidth}
        \resizebox{\textwidth}{!}{\includegraphics{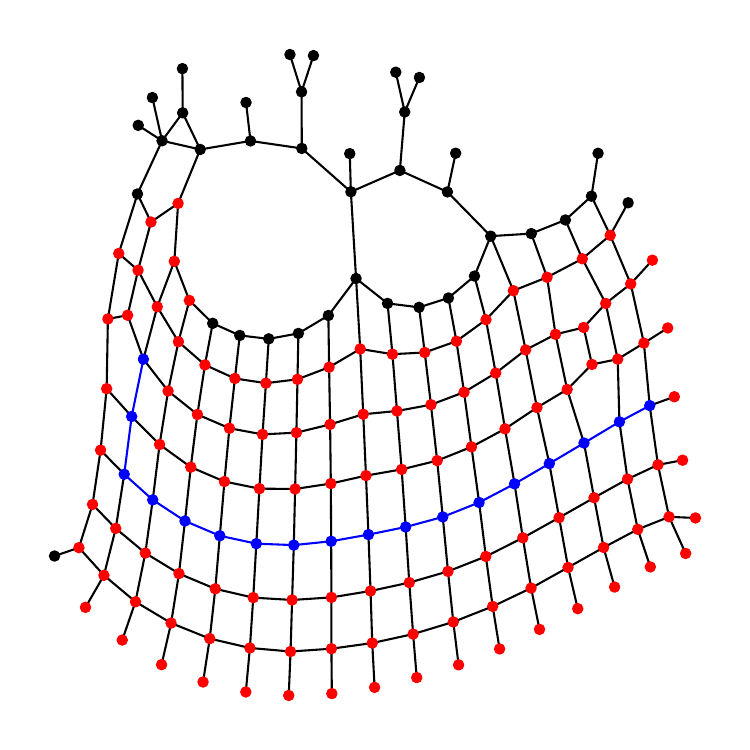}}
        \caption{$k=3$\\$C_3(P^\star) = 113$, $|P^\star| = 18$}
        \label{fig:san-francisco-3}
    \end{subfigure}%
    \vspace{1em}
    \begin{subfigure}[c]{0.32\textwidth}
        \resizebox{\textwidth}{!}{\includegraphics{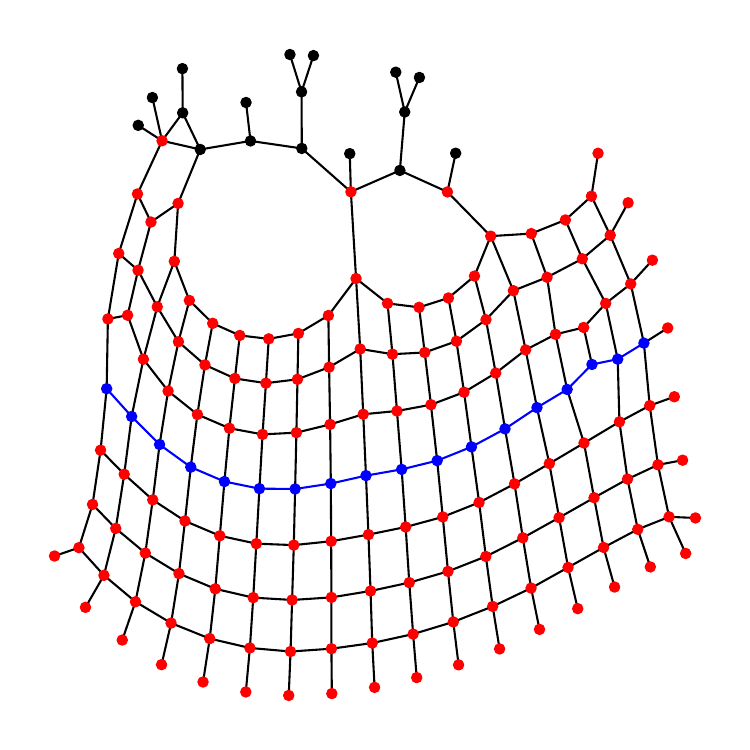}}
        \caption{$k=4$\\$C_4(P^\star) = 134$, $|P^\star| = 18$}
        \label{fig:san-francisco-4}
    \end{subfigure}%
    ~ 
    \begin{subfigure}[c]{0.32\textwidth}
        \resizebox{\textwidth}{!}{\includegraphics{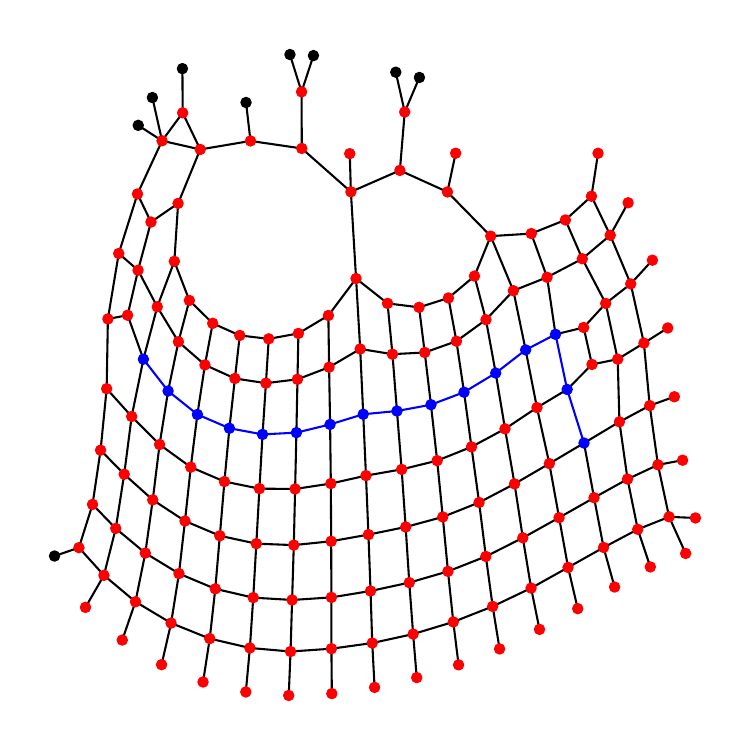}}
        \caption{$k=5$\\$C_5(P^\star) = 144$, $|P^\star| = 16$}
        \label{fig:san-francisco-5}
    \end{subfigure}%
    ~ 
    \begin{subfigure}[c]{0.32\textwidth}
        \resizebox{\textwidth}{!}{\includegraphics{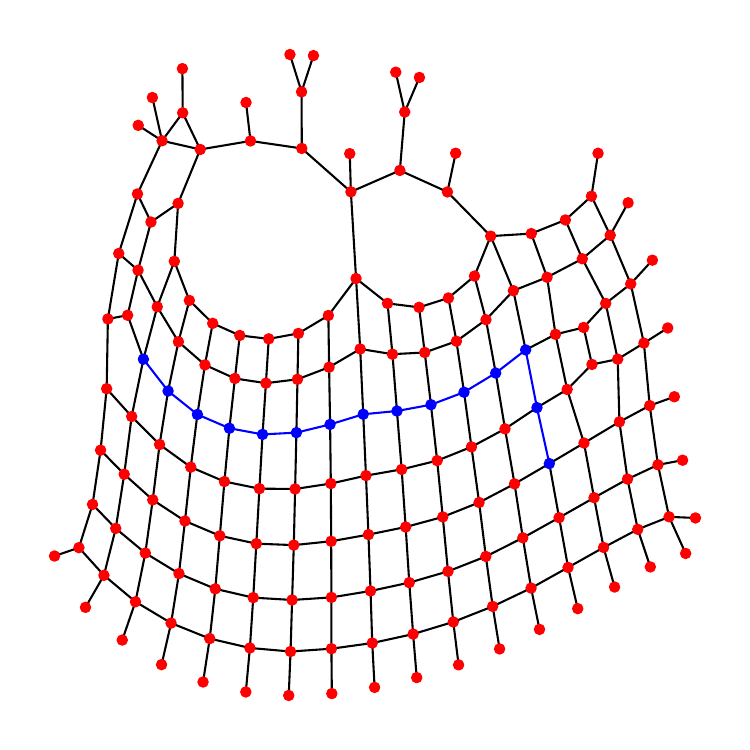}}
        \caption{$k=6$\\$C_6(P^\star) = 154$, $|P^\star| = 15$}
        \label{fig:san-francisco-6}
    \end{subfigure}%
    \vspace{1em}
    \begin{subfigure}[c]{0.32\textwidth}
        \resizebox{\textwidth}{!}{\includegraphics{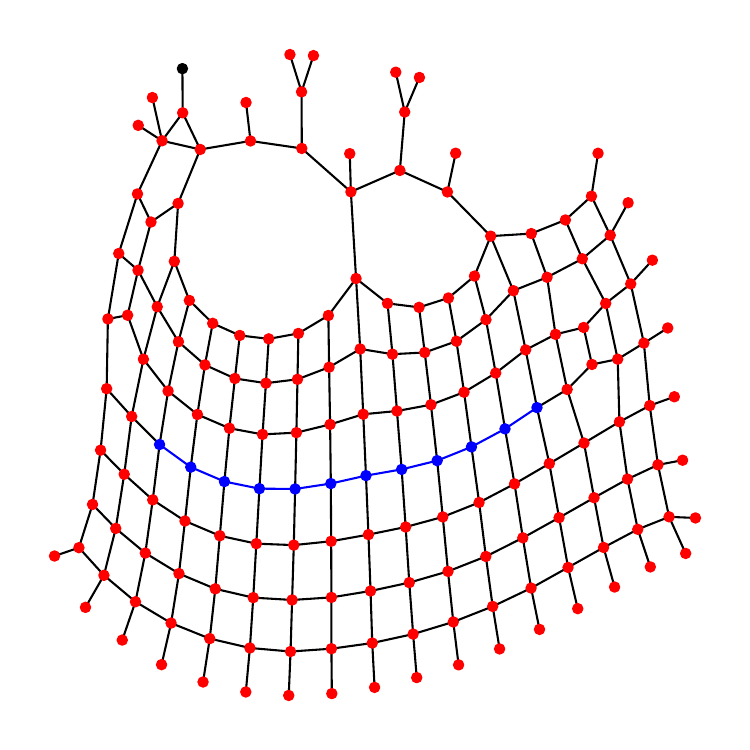}}
        \caption{$k=7$\\$C_7(P^\star) = 156$, $|P^\star| = 12$}
        \label{fig:san-francisco-7}
    \end{subfigure}%
    ~ 
    \begin{subfigure}[c]{0.32\textwidth}
        \resizebox{\textwidth}{!}{\includegraphics{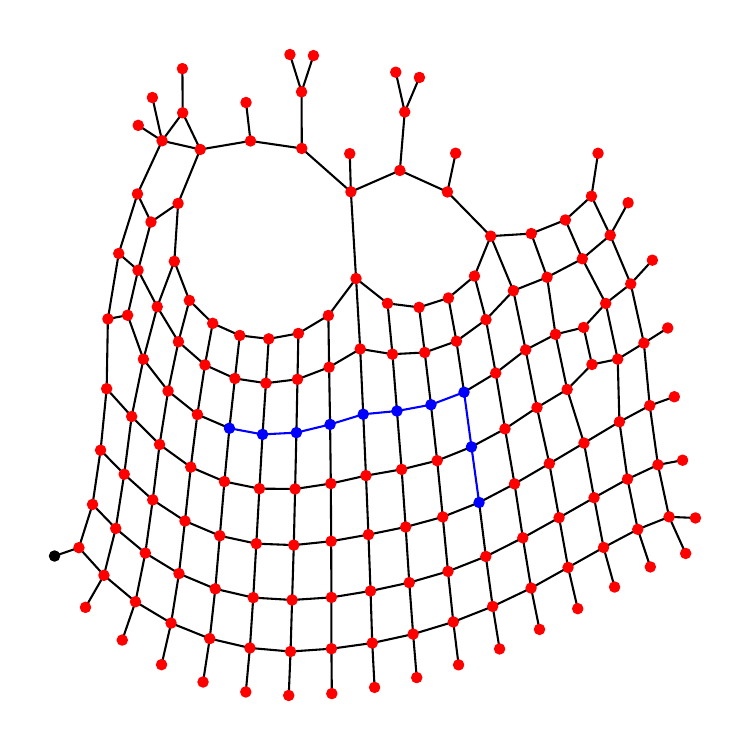}}
        \caption{$k=8$\\$C_8(P^\star) = 158$, $|P^\star| = 10$}
        \label{fig:san-francisco-8}
    \end{subfigure}%
    ~ 
    \begin{subfigure}[c]{0.32\textwidth}
        \resizebox{\textwidth}{!}{\includegraphics{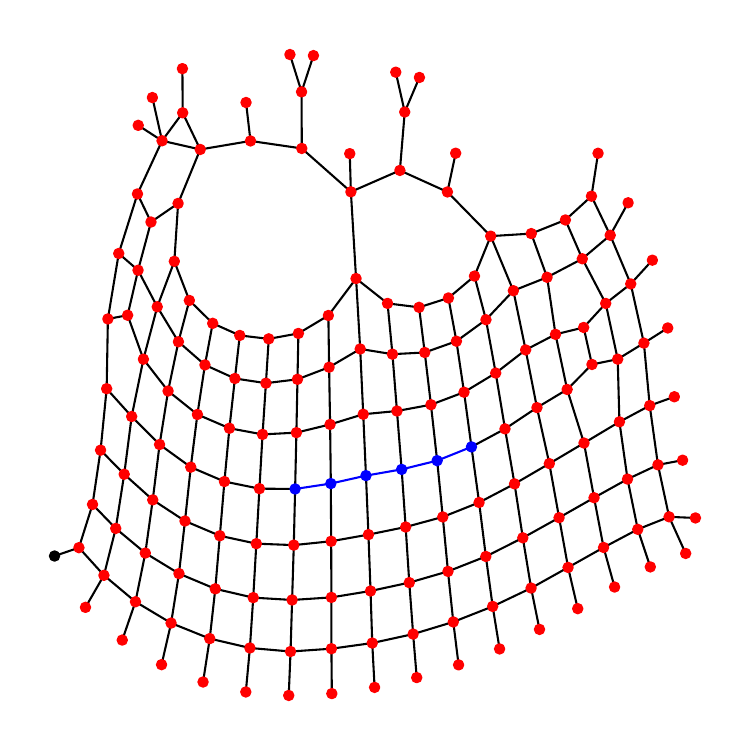}}
        \caption{$k=9$\\$C_9(P^\star) = 162$, $|P^\star| = 6$}
        \label{fig:san-francisco-9}
    \end{subfigure}
    \caption{$k$-step-central shortest paths for San Francisco (diameter: 21).}
    \label{fig:san-francisco}
\end{figure}

\begin{figure}[!htb]
    \centering
    \begin{tikzpicture}[font=\small,]
    \begin{groupplot}[
        group style={
            group size=2 by 1,
            horizontal sep=1em,
            yticklabels at=edge left,
            ylabels at=edge left,
            xlabels at=edge bottom,
        },
        x grid style={darkgray},
        xlabel={$k$},
        xmin=0, xmax=10,
        xtick style={color=black},
        y grid style={darkgray},
        ylabel={Relative length/neighbourhood of path},
        ymin=0, ymax=1.05,
        ytick style={color=black},
        ytick={0,0.2,...,1},
        legend columns=-1,
        legend style={/tikz/every even column/.append style={column sep=0.5cm}}
    ]
    \nextgroupplot[title={Relative length of path},legend to name=legendpos2]
        \addplot [draw=black, mark=o]
        table{%
        x  y
        1 0.9883720930232558
        2 0.9651162790697675
        3 0.9534883720930233
        4 0.9186046511627907
        5 0.8953488372093024
        6 0.7674418604651163
        7 0.7674418604651163
        };
    \addlegendentry{Ahmedabad}
    \addplot [draw=red, mark=x]
        table{%
        x  y
        1 0.9803921568627451
        2 0.9607843137254902
        3 0.9411764705882353
        4 0.9019607843137255
        5 0.9019607843137255
        6 0.8235294117647058
        7 0.8235294117647058
        };
    \addlegendentry{Cairo}
    \addplot [draw=blue, mark=triangle]
        table{%
        x  y
        1 0.9090909090909091
        2 0.9393939393939394
        3 0.9393939393939394
        4 0.9090909090909091
        5 0.9090909090909091
        6 0.7575757575757576
        7 0.7272727272727273
        8 0.696969696969697
        9 0.696969696969697
        };
    \addlegendentry{London}
    \addplot [draw=orange, mark=o, mark options={solid}, dashed]
        table{%
        x  y
        1 0.9523809523809523
        2 0.9047619047619048
        3 0.8571428571428571
        4 0.8571428571428571
        5 0.7619047619047619
        6 0.7142857142857143
        7 0.5714285714285714
        8 0.47619047619047616
        9 0.2857142857142857
        };
    \addlegendentry{San Francisco}
    \addplot [draw=green, mark=x, mark options={solid}, dashed]
        table{%
        x  y
        1 0.9487179487179487
        2 0.9743589743589743
        3 0.9487179487179487
        4 0.8974358974358975
        5 0.9487179487179487
        6 0.8974358974358975
        7 0.8205128205128205
        8 0.7692307692307693
        9 0.7435897435897436
        };
    \addlegendentry{Venice}
    \coordinate (c1) at (rel axis cs:0,1);
    \nextgroupplot[title={Relative size of neighbourhood}]
    \addplot [draw=red, mark=x]
        table{%
        x  y
        1 0.03832752613240418
        2 0.08606271777003484
        3 0.13658536585365855
        4 0.18954703832752615
        5 0.23902439024390243
        6 0.2951219512195122
        7 0.3480836236933798

        };
    \addplot [draw=red, mark=x]
        table{%
        x  y
        1 0.05213903743315508
        2 0.11831550802139038
        3 0.19117647058823528
        4 0.2660427807486631
        5 0.3522727272727273
        6 0.4358288770053476
        7 0.5167112299465241

        };
    \addplot [draw=blue, mark=triangle]
        table{%
        x  y
        1 0.10450819672131148
        2 0.21311475409836064
        3 0.32991803278688525
        4 0.4323770491803279
        5 0.5348360655737705
        6 0.6270491803278688
        7 0.7131147540983607
        8 0.7848360655737705
        9 0.8463114754098361

        };
    \addplot [draw=orange, mark=o, mark options={solid}, dashed]
        table{%
        x  y
        1 0.24260355029585798
        2 0.46745562130177515
        3 0.6686390532544378
        4 0.7928994082840237
        5 0.8520710059171598
        6 0.9112426035502958
        7 0.9230769230769231
        8 0.9349112426035503
        9 0.9585798816568047

        };

    \addplot [draw=brown, mark=triangle, mark options={solid}, dashed]
        table{%
        x  y
        1 0.06521739130434782
        2 0.12880434782608696
        3 0.1782608695652174
        4 0.23478260869565218
        5 0.2956521739130435
        6 0.3472826086956522
        7 0.3978260869565217

        };
        \coordinate (c2) at (rel axis cs:1,1);
    \end{groupplot}
    \coordinate (c3) at ($(c1)!.5!(c2)$);
    \node[below] at (c3 |- current bounding box.south)
      {\pgfplotslegendfromname{legendpos2}};
\end{tikzpicture}
    \caption{Effect of varying $k$ on the relative path length and size of the neighbourhood. The length of the path is divided by the diameter of the graph, while the size of the neighbourhood is divided by $|V|$.}
    \label{fig:streets-analysis}
\end{figure}

In both instances shown in Figures~\ref{fig:london} and \ref{fig:san-francisco}, the optimal paths vary drastically as $k$ changes, as evidenced by shifts in their location. For example, in the London instance, the optimal path shifts from a Northwest-Southeast routing to a Northeast-Southwest routing as $k$ increases from two to three. Similarly, in San Francisco, the optimal path shifts from a route covering the western and southern regions to an East-West route passing through the centre of the city as $k$ increases.

In San Francisco and other grid-structured cities (such as Barcelona, New York, Savannah), we observe that when $k$ is small, the $k$-step reach-central shortest path follows longer, windier routes that serve nodes on the periphery of the network, akin to ring routes. As the path's reach increases with $k$, the optimal path tends to be shorter, shift inward, and become more `central' in the network. Figures~\ref{fig:streets-analysis} and \ref{fig:streets-analysis-appendix} make the general effect of varying $k$ clear. For example, when $k$ is doubled from 1 to 2, the size of the neighbourhood doubles, even though the paths do not become longer. For smaller values of $k$ (less than 5), increasing $k$ leads to a linear increase in the size of the neighbourhood and a reduction in path length. For example, increasing $k$ from 1 to 5 results, on average, in a five-fold increase in the size of the neighbourhood and a 30\% reduction in the length of the optimal path. This result holds even with larger baseline values of $k$. Increasing the reach of a path yields more extensive coverage, but the required path length decreases. This observation demonstrates the importance of providing and improving supporting infrastructure when planning mass transit routes. When cities become more accessible, the reach of a transit route can grow, even with a shorter route, illustrating the need to balance resources between establishing the route itself and supporting accessibility projects, especially when the costs and risks of both options are considered. Construction of mass transit routes is considerably more expensive and riskier than projects that improve active transport infrastructure, implying that marginal reductions in route length may be more significant than the costs of improving accessibility.

Our results for the San Francisco network also offer insight into the subtleties of our chosen definition of $k$-step reach centrality. As described in Sections~\ref{sec:problem-definition} and \ref{sec:closeness}, there are two definitions of $k$-step reach centrality, open and closed. We use the open version in Algorithm~\ref{alg:algorithm}. From Figure~\ref{fig:san-francisco-6}, $C_6(P^\star) + |P^\star| = 154 + 15 = 169 = |V|$, so the 6-step-central shortest path and its neighbourhood cover the whole network. This implies that the closeness-central shortest path has a closeness centrality score of at most 6 (it may be smaller, as we are using the open neighbourhood definition). However, when $k$ is increased to seven, the centrality of the 7-step-central shortest path is higher, but $C_7(P^\star) + |P^\star| = 168 < |V|$. Again, this is because the centrality measure we employ does not include the path in the neighbourhood.

Using these results, we were also able to upper-bound the closeness centrality of the closeness-central path for seven additional cities. The bounds are presented in Table~\ref{tab:closeness-ub}, along with the computation time. Solutions for some of these larger instances are provided in Appendix~\ref{sec:appendix-figures}.

\begin{table}[!htb]
    \centering
    \small
    \begin{tabular}{lrrrrrrr}
        \toprule
        Instance            & $|V|$ & $|E|$ & diam & $\tilde{k}$ & $|P|$ & $C_{\tilde{k}}(P)$ & Time (s) \\ \midrule
        Barcelona           & 210   & 323   & 24   & 7           & 11    & 199                &   789.19 \\
        Irvine1             & 32    & 36    & 8    & 3           & 4     & 28                 &     0.03 \\
        Irvine2$\,^\dagger$ & 217   & 222   & 23   & 8           & 8     & 170                &    26.04 \\
        Los Angeles         & 240   & 339   & 22   & 9           & 6     & 234                &   331.13 \\
        New York            & 248   & 418   & 24   & 9           & 10    & 238                &  1152.60 \\
        Paris               & 335   & 494   & 28   & 9           & 16    & 319                &   513.80 \\
        San Francisco       & 169   & 271   & 21   & 6           & 15    & 154                &  1079.24 \\
        Walnut Creek        & 169   & 196   & 22   & 8           & 9     & 160                &    30.10 \\
        Washington          & 192   & 302   & 21   & 8           & 9     & 183                &   254.62 \\ \bottomrule
    \end{tabular}
    \caption{Upper bounds for the closeness-central shortest path for Urban Streets networks. $\tilde{k}$ is the upper bound for Problem~\eqref{eq:closest-path-problem}. $^\dagger$ To compute the result for the Irvine2 instance, we used its largest connected component.}
    \label{tab:closeness-ub}    
\end{table}

Finally, these figures also demonstrate the effect of varying $k$ in the $k$-step reach centrality. When $k$ is small, the $k$-step reach centrality acts as a local measure of centrality, emphasising the immediate neighbours of the path. As $k$ increases, the measure becomes more global. This aligns with our earlier observation that the problem of finding the closeness-central shortest path can be solved by solving the $k$-step reach-central shortest path problem multiple times.

\section{Conclusion}\label{sec:conclusion}

In this paper, we study the problem of finding the $k$-step-central shortest path. We provide a polynomial-time algorithm for this problem on unweighted graphs and show that it is NP-hard on graphs with weighted edges. We also demonstrate that our algorithm can be used to find the closeness-central shortest path because, as $k$ grows, our centrality measure becomes more global and can mimic closeness-centrality. We complement our theoretical analysis with an extensive set of numerical experiments that demonstrate the efficiency of our algorithm on graphs with up to 2000 nodes.

Our numerical experiments also yield practical insights that urban planners can use when designing direct mass-transit systems. \textit{Shorter direct transit routes with fewer stops and stations can provide broader coverage than longer, winding routes as cities become more accessible.} Increasing the number of blocks a commuter is willing to travel to a transport hub results in a linear increase in a route's reachability, even for shorter, more compact routes. This demonstrates that urban infrastructure projects must not be considered in isolation if maximising ridership is the primary objective. Improving active transport infrastructure, such as walking paths and dedicated bike lanes, can greatly enhance a transit route's overall accessibility and ridership at a significantly lower cost than constructing longer transit routes.

Future research in this area can examine alternative approaches to the closeness-central shortest path problem, such as integer programming formulations or approximation and heuristic algorithms. Additionally, weights can be assigned to nodes and edges to better model population density and topography.

\bibliographystyle{apalike}
\bibliography{refs} 

@article{Matsypura2023,
    title = {Finding the most degree-central walks and paths in a graph: Exact and heuristic approaches},
    journal = {European Journal of Operational Research},
    volume = {308},
    number = {3},
    pages = {1021-1036},
    year = {2023},
    issn = {0377-2217},
    doi = {10.1016/j.ejor.2022.12.014},
    author = {Dmytro Matsypura and Alexander Veremyev and Eduardo L. Pasiliao and Oleg A. Prokopyev}
}

@article{PhosavanhMatsypura2025,
title = {Centrality of shortest paths: algorithms and complexity results},
journal = {INFORMS Journal on Computing},
year = {2025},
doi = {10.1287/ijoc.2024.0945},
author = {Johnson Phosavanh and Dmytro Matsypura},
note = {Advance online publication.},
}

@article{BarabasiAlbert1999,
    language = {eng},
    number = {5439},
    pages = {509-512},
    publisher = {American Society for the Advancement of Science},
    title = {Emergence of Scaling in Random Networks},
    volume = {286},
    year = {1999},
    author = {Barabási, Albert-László and Albert, Réka},
    address = {Washington, DC},
    copyright = {Copyright 1999 American Association for the Advancement of Science},
    issn = {0036-8075},
    journal = {Science (American Association for the Advancement of Science)},
    doi = {10.1126/science.286.5439.509}
}

@article{WattsStrogatz1998,
    language = {eng},
    number = {6684},
    pages = {440-442},
    publisher = {Nature Publishing},
    title = {Collective dynamics of `small-world' networks},
    volume = {393},
    year = {1998},
    author = {Watts, Duncan J and Strogatz, Steven H},
    address = {London},
    copyright = {1998 INIST-CNRS},
    issn = {0028-0836},
    journal = {Nature (London)},
    doi = {10.1038/30918}
}

@article{CrucittiPaolo2006,
author = {Crucitti, Paolo and Latora, Vito and Porta, Sergio},
address = {United States},
issn = {1539-3755},
journal = {Physical review. E, Statistical, nonlinear, and soft matter physics},
language = {eng},
number = {3},
pages = {036125-},
title = {Centrality measures in spatial networks of urban streets},
volume = {73},
year = {2006},
}

@techreport{TransportChoice,
  title                    = {Improving Transport Choice — Guidelines for planning and development},
  author                   = {{NSW Government}},
  institution              = {NSW Department of Urban Affairs and Planning},
  year                     = {2001},
  location                 = {New South Wales, Australia},
  type                     = {Government report}
}

@article{Everett1999,
    author = {M. G. Everett and S. P. Borgatti},
    title = {The centrality of groups and classes},
    journal = {The Journal of Mathematical Sociology},
    volume = {23},
    number = {3},
    pages = {181--201},
    year = {1999},
    publisher = {Routledge},
    doi = {10.1080/0022250X.1999.9990219}
}

@article{deSa2015,
    ISSN = {00411655, 15265447},
    URL = {http://www.jstor.org/stable/43666754},
    author = {Elisangela Martins de Sá and Ivan Contreras and Jean-François Cordeau and Ricardo Saraiva de Camargo and Gilberto de Miranda},
    journal = {Transportation Science},
    number = {3},
    pages = {500--518},
    publisher = {INFORMS},
    title = {The Hub Line Location Problem},
    urldate = {2023-06-22},
    volume = {49},
    year = {2015},
    doi = {10.1287/trsc.2014.0576}
}

@article{Camus2024,
    title={A Survey on Optimization Studies of Group Centrality Metrics}, 
    author={Mustafa Can Camur and Chrysafis Vogiatzis},
    year={2024},
    journal = {Networks},
    volume = {2024},
    pages = {1-18},
    doi = {10.1002/net.22248},
    URL = {10.1002/net.22248},
    publisher = {Wiley},
}

@book{GareyJohnson1979,
    lccn = {78012361},
    publisher = {W. H. Freeman},
    series = {A Series of books in the mathematical sciences.},
    title = {{Computers and intractability: a guide to the theory of NP-completeness}},
    year = {1979},
    author = {Garey, Michael R. and Johnson, David S.},
    address = {San Francisco},
    booktitle = {{Computers and intractability: A guide to the theory of NP-completeness}},
    isbn = {0716710447},
    language = {eng},
}

@book{Vazirani2003,
    author = {Vazirani, Vijay V.},
    address = {Berlin, Heidelberg},
    booktitle = {Approximation Algorithms},
    edition = {2nd},
    isbn = {3-662-04565-6},
    language = {eng},
    publisher = {Springer Berlin Heidelberg},
    title = {Approximation Algorithms },
    year = {2003},
}

@article{Slater1982,
 ISSN = {00411655, 15265447},
 author = {Peter J. Slater},
 journal = {Transportation Science},
 number = {1},
 pages = {1--18},
 publisher = {INFORMS},
 title = {Locating Central Paths in a Graph},
 urldate = {2024-08-27},
 volume = {16},
 year = {1982},
 doi = {10.1287/trsc.16.1.1}
}

@article{hedetniemi1981jordan,
  author  = {Hedetniemi, S. Mitchell and Cockayne, E. J. and Hedetniemi, S. T.},
  title   = {{Linear Algorithms for Finding the Jordan Center and Path Center of a Tree}},
  journal = {Transportation Science},
  year    = {1981},
  volume  = {15},
  number  = {2},
  pages   = {98--114},
  publisher = {INFORMS},
  doi     = {10.1287/trsc.15.2.98}
}

@article{Gomez2023,
title = {Path eccentricity of graphs},
journal = {Discrete Applied Mathematics},
volume = {337},
pages = {1-13},
year = {2023},
issn = {0166-218X},
doi = {10.1016/j.dam.2023.04.012},
author = {Renzo G\'{o}mez and Juan Guti\'{e}rrez}
}

@article{dawande2012,
author = {Dawande, Milind and Mookerjee, Vijay and Sriskandarajah, Chelliah and Zhu, Yunxia},
title = {Structural Search and Optimization in Social Networks},
journal = {INFORMS Journal on Computing},
volume = {24},
number = {4},
pages = {611-623},
year = {2012},
doi = {10.1287/ijoc.1110.0473},
URL = {10.1287/ijoc.1110.0473},
eprint = {10.1287/ijoc.1110.0473}
}

@article{Borgatti2006,
author = {Borgatti, Stephen P.},
issn = {1381-298X},
journal = {Computational and Mathematical Organization Theory},
language = {eng ; jpn},
number = {1},
pages = {21-34},
title = {Identifying sets of key players in a social network},
volume = {12},
year = {2006},
}

@article{Camur2021,
    author = {Camur, Mustafa Can and Sharkey, Thomas and Vogiatzis, Chrysafis},
    title = {The Star Degree Centrality Problem: A Decomposition Approach},
    journal = {INFORMS Journal on Computing},
    volume = {34},
    number = {1},
    pages = {93-112},
    year = {2021},
    doi = {10.1287/ijoc.2021.1074},
    URL = {10.1287/ijoc.2021.1074}
}

@article{Vogiatzis2014,
    author = {Vogiatzis, Chrysafis and Veremyev, Alexander and Pasiliao, Eduardo and Pardalos, Panos},
    year = {2014},
    month = {04},
    pages = {1-19},
    title = {An integer programming approach for finding the most and the least central cliques},
    volume = {9},
    journal = {Optimization Letters},
    doi = {10.1007/s11590-014-0782-2}
}

@article{Vogiatzis2019,
    author = {Vogiatzis, Chrysafis and Camur, Mustafa Can},
    title = {Identification of Essential Proteins Using Induced Stars in Protein–Protein Interaction Networks},
    journal = {INFORMS Journal on Computing},
    volume = {31},
    number = {4},
    pages = {703-718},
    year = {2019},
    doi = {10.1287/ijoc.2018.0872},
    URL = {10.1287/ijoc.2018.0872}
}

@techreport{NSWTransportAccessibility,
    title        = {Public transport accessibility - Built environment indicator},
    author       = {{NSW Government}},
    institution  = {NSW Government},
    year         = {2023},
    location     = {New South Wales, Australia},
    type         = {Government report},
    url          = {https://www.movementandplace.nsw.gov.au/sites/default/files/2023-02/BEI%204_Public%20transport%20accessibility.pdf}
}

@article{Taylor2024Distances,
    author = {Taylor, Michael A. P. and Somenahalli, Sekhar},
    copyright = {2024 The Author(s). Published with license by Taylor & Francis Group, LLC 2024},
    issn = {1556-8318},
    journal = {International Journal of Sustainable Transportation},
    language = {eng},
    number = {7},
    pages = {576-588},
    publisher = {Taylor & Francis},
    title = {{Distributions of walking access to public transport in Melbourne, Australia - Evidence on acceptable and tolerable walking distances}},
    volume = {18},
    year = {2024},
}

@article{puerto2018updatedreview,
  author  = {Puerto, Justo and Ricca, Federica and Scozzari, Andrea},
  title   = {Extensive Facility Location Problems on Networks: An Updated Review},
  journal = {Transactions in Operations Research},
  year    = {2018},
  volume  = {26},
  number  = {2},
  pages   = {187--226},
  doi     = {10.1007/s11750-018-0476-5}
}

@article{Zeng2019coveringpath,
    author = {Liwei Zeng and Sunil Chopra and Karen Smilowitz},
    journal = {Transportation Science},
    number = {6},
    pages = {1656--1672},
    publisher = {INFORMS},
    title = {The Covering Path Problem on a Grid},
    urldate = {2026-06-03},
    volume = {53},
    year = {2019}
}

@article{Current1989coveringTSP,
    author = {John R. Current and David A. Schilling},
    journal = {Transportation Science},
    number = {3},
    pages = {208--213},
    publisher = {INFORMS},
    title = {The Covering Salesman Problem},
    urldate = {2026-06-03},
    volume = {23},
    year = {1989}
}

\newpage
\appendix

\section*{Appendix}
\section{Additional results for Urban Streets}\label{sec:appendix-tables}

\begin{table}[!htb]
    \centering
    \small
\begin{tabular}{lrrrrrrr}
\toprule
  ~                        & Barcelona  & Bologna  & Brasilia  & Irvine1  & Irvine2 & Los Angeles & New Delhi \\ \cmidrule(lr){2-8}
  $|V|$                    & 210        & 541      & 179       & 32       & 217     & 240         & 252       \\
  $|E|$                    & 323        & 771      & 230       & 36       & 222     & 339         & 328       \\
  diam                     & 24         & 38       & 22        & 8        & 23      & 22          & 25        \\
  $|\mathcal{SP}(G)|/U(G)$ & 65.03      & 10.88    & 2.15      & 1.30     & 1.31    & 10.80       & 2.95      \\ \midrule \multicolumn{1}{c}{} & \multicolumn{7}{c}{$k=1$} \\ \cmidrule(lr){2-8}
  diam centrality          & 43         & 55       & 26        & 12       & 28      & 38          & 33        \\
  path length              & 22         & 35       & 19        & 6        & 21      & 20          & 23        \\
  path centrality          & 44         & 57       & 28        & 14       & 30      & 40          & 35        \\
  time                     & 4.07       & 71.49    & 1.88      & 0.01     & 2.02    & 5.01        & 5.70      \\ \midrule \multicolumn{1}{c}{} & \multicolumn{7}{c}{$k=2$} \\ \cmidrule(lr){2-8}
  diam centrality          & 87         & 116      & 50        & 18       & 58      & 83          & 64        \\
  path length              & 22         & 33       & 15        & 4        & 17      & 18          & 19        \\
  path centrality          & 88         & 120      & 56        & 24       & 65      & 86          & 69        \\
  time                     & 9.94       & 99.73    & 2.47      & 0.01     & 2.38    & 9.05        & 7.79      \\ \midrule \multicolumn{1}{c}{} & \multicolumn{7}{c}{$k=3$} \\ \cmidrule(lr){2-8}
  diam centrality          & 126        & 180      & 74        & 23       & 88      & 132         & 93        \\
  path length              & 18         & 31       & 15        & 3        & 15      & 18          & 19        \\
  path centrality          & 130        & 187      & 85        & 28       & 101     & 134         & 100       \\
  time                     & 26.39      & 161.11   & 3.27      & 0.01     & 2.97    & 17.73       & 11.08     \\ \midrule \multicolumn{1}{c}{} & \multicolumn{7}{c}{$k=4$} \\ \cmidrule(lr){2-8}
  diam centrality          & 149        & 238      & 94        & 23       & 112     & 169         & 120       \\
  path length              & 16         & 30       & 13        & 1        & 13      & 16          & 18        \\
  path centrality          & 157        & 248      & 109       & 30       & 128     & 175         & 132       \\
  time                     & 62.59      & 271.94   & 3.91      & 0.01     & 3.57    & 30.72       & 14.93     \\ \midrule \multicolumn{1}{c}{} & \multicolumn{7}{c}{$k=5$} \\ \cmidrule(lr){2-8}
  diam centrality          & 168        & 283      & 106       & 23       & 130     & 195         & 149       \\
  path length              & 15         & 27       & 12        & 0        & 11      & 14          & 17        \\
  path centrality          & 177        & 296      & 128       & 31       & 141     & 203         & 165       \\
  time                     & 126.64     & 435.37   & 4.38      & 0.02     & 3.85    & 44.16       & 18.63     \\ 
\bottomrule
\end{tabular}
    \caption{Further results for Urban Streets graphs.}
    \label{tab:urban-streets-appendix1}
\end{table}

\begin{table}[!htb]
    \centering
    \small
\begin{tabular}{lrrrrrrrr}
\toprule
  ~                        & New York  & Paris  & Richmond  & Savannah & Seoul    & Vienna  & Walnut Creek  & Washington \\ \cmidrule(lr){2-9}
  $|V|$                    & 248       & 335    & 697       & 584      & 869      & 467     & 169           & 192        \\
  $|E|$                    & 418       & 494    & 1084      & 958      & 1307     & 691     & 196           & 302        \\
  diam                     & 24        & 28     & 49        & 39       & 50       & 32      & 22            & 21         \\
  $|\mathcal{SP}(G)|/U(G)$ & 40.34     & 5.35   & 14.39     & 76.10    & 22.20    & 14.43   & 2.16          & 22.85      \\ \midrule \multicolumn{1}{c}{} & \multicolumn{8}{c}{$k=1$} \\ \cmidrule(lr){2-9}
  diam centrality          & 45        & 44     & 79        & 71       & 75       & 48      & 28            & 42         \\
  path length              & 22        & 25     & 46        & 36       & 48       & 30      & 19            & 19         \\
  path centrality          & 46        & 48     & 81        & 73       & 77       & 50      & 31            & 44         \\
  time                     & 5.90      & 14.12  & 165.85    & 106.51   & 378.42   & 44.98   & 1.67          & 2.86       \\ \midrule \multicolumn{1}{c}{} & \multicolumn{8}{c}{$k=2$} \\ \cmidrule(lr){2-9}
  diam centrality          & 84        & 90     & 165       & 136      & 156      & 106     & 53            & 73         \\
  path length              & 22        & 24     & 45        & 37       & 46       & 28      & 18            & 19         \\
  path centrality          & 86        & 96     & 168       & 139      & 158      & 108     & 60            & 75         \\
  time                     & 11.81     & 19.78  & 232.30    & 185.92   & 501.88   & 67.24   & 2.26          & 6.53       \\ \midrule \multicolumn{1}{c}{} & \multicolumn{8}{c}{$k=3$} \\ \cmidrule(lr){2-9}
  diam centrality          & 123       & 141    & 255       & 211      & 246      & 157     & 76            & 105        \\
  path length              & 22        & 23     & 43        & 36       & 44       & 28      & 17            & 17         \\
  path centrality          & 126       & 147    & 259       & 213      & 251      & 161     & 89            & 109        \\
  time                     & 25.06     & 32.06  & 397.84    & 391.88   & 794.65   & 117.37  & 3.23          & 15.31      \\ \midrule \multicolumn{1}{c}{} & \multicolumn{8}{c}{$k=4$} \\ \cmidrule(lr){2-9}
  diam centrality          & 152       & 179    & 347       & 272      & 329      & 207     & 98            & 126        \\
  path length              & 18        & 23     & 41        & 34       & 42       & 26      & 15            & 16         \\
  path centrality          & 156       & 191    & 352       & 292      & 336      & 212     & 113           & 135        \\
  time                     & 49.58     & 48.90  & 683.11    & 801.47   & 1347.74  & 210.35  & 4.20          & 28.73      \\ \midrule \multicolumn{1}{c}{} & \multicolumn{8}{c}{$k=5$} \\ \cmidrule(lr){2-9}
  diam centrality          & 171       & 213    & 415       & 325      & 404      & 256     & 117           & 143        \\
  path length              & 16        & 20     & 40        & 36       & 41       & 24      & 14            & 13         \\
  path centrality          & 181       & 229    & 421       & 370      & 413      & 263     & 131           & 153        \\
  time                     & 86.73     & 66.45  & 1094.52   & 1530.80  & 2258.93  & 344.19  & 4.80          & 43.51      \\ 
\bottomrule
\end{tabular}
    \caption{Further results for Urban Streets graphs.}
    \label{tab:urban-streets-appendix2}
\end{table}

\clearpage

\section{Additional figures for Urban Streets}\label{sec:appendix-figures}

\begin{figure}[!htb]
    \centering
    \includestandalone[width=\linewidth]{figures-tikz/analysis-streets-appendix}
    \caption{Effect of varying $k$ on the relative path length and size of the neighbourhood. The length of the path is divided by the diameter of the graph, while the size of the neighbourhood is divided by $|V|$.}
    \label{fig:streets-analysis-appendix}
\end{figure}

\begin{figure*}[!htb]
    \centering
    \captionsetup{justification=centering}
    \begin{subfigure}[c]{0.32\textwidth}
        \resizebox{\textwidth}{!}{\includegraphics{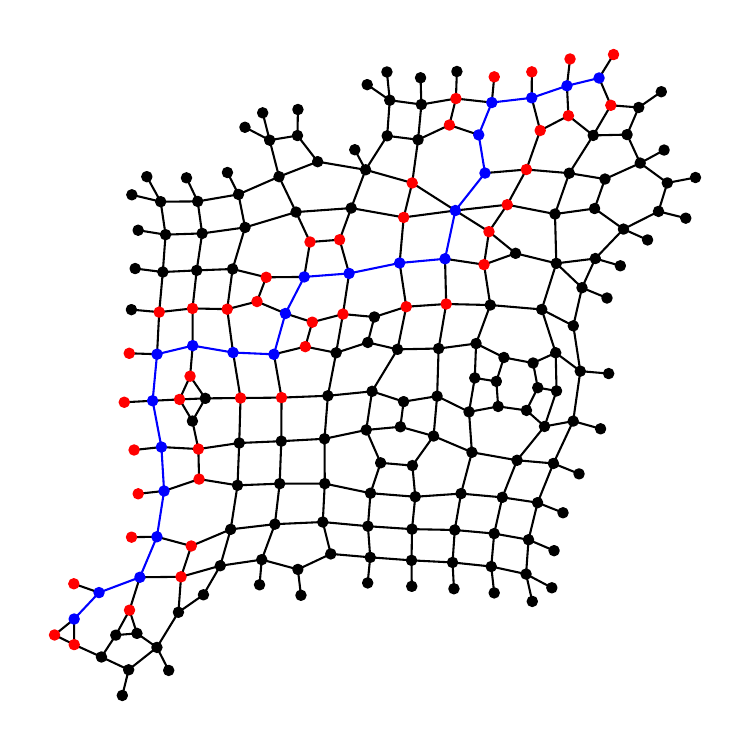}}
        \caption{$k=1$\\$C_1(P^\star) = 44$, $|P^\star| = 23$}
        \label{fig:barcelona-1}
    \end{subfigure}%
    ~ 
    \begin{subfigure}[c]{0.32\textwidth}
        \resizebox{\textwidth}{!}{\includegraphics{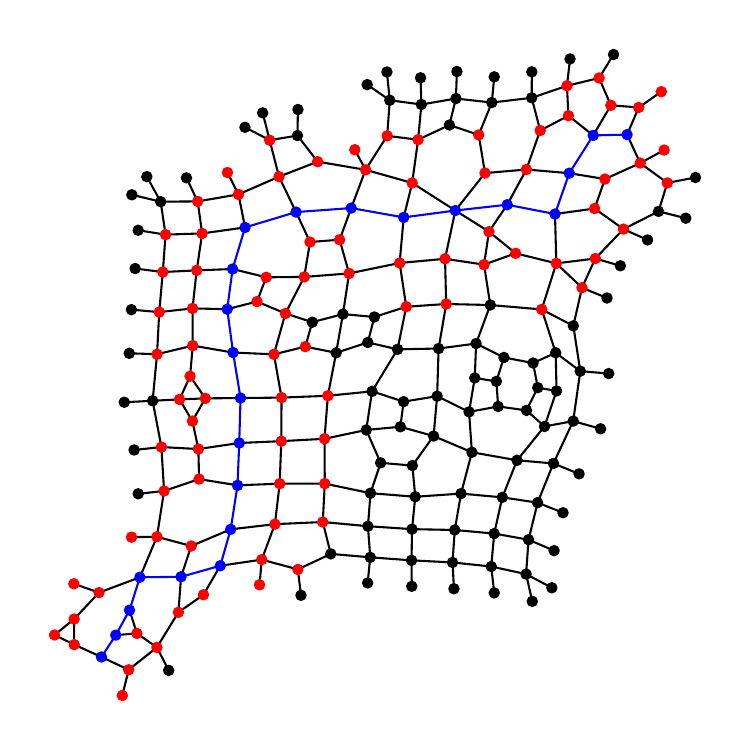}}
        \caption{$k=2$\\$C_2(P^\star) = 88$, $|P^\star| = 23$}
        \label{fig:barcelona-2}
    \end{subfigure}%
    ~ 
    \begin{subfigure}[c]{0.32\textwidth}
        \resizebox{\textwidth}{!}{\includegraphics{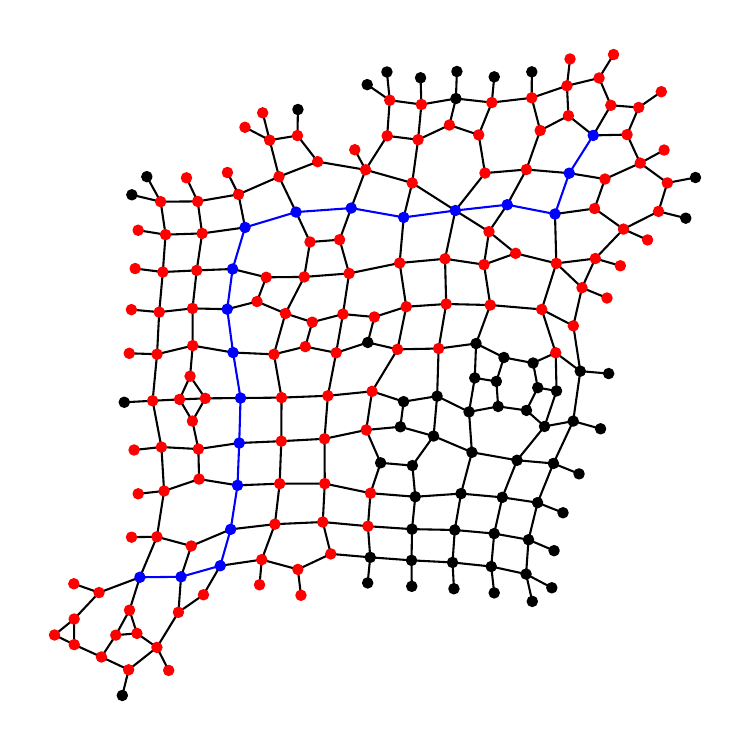}}
        \caption{$k=3$\\$C_3(P^\star) = 130$, $|P^\star| = 19$}
        \label{fig:barcelona-3}
    \end{subfigure}%
    \vspace{1em}
    \begin{subfigure}[c]{0.32\textwidth}
        \resizebox{\textwidth}{!}{\includegraphics{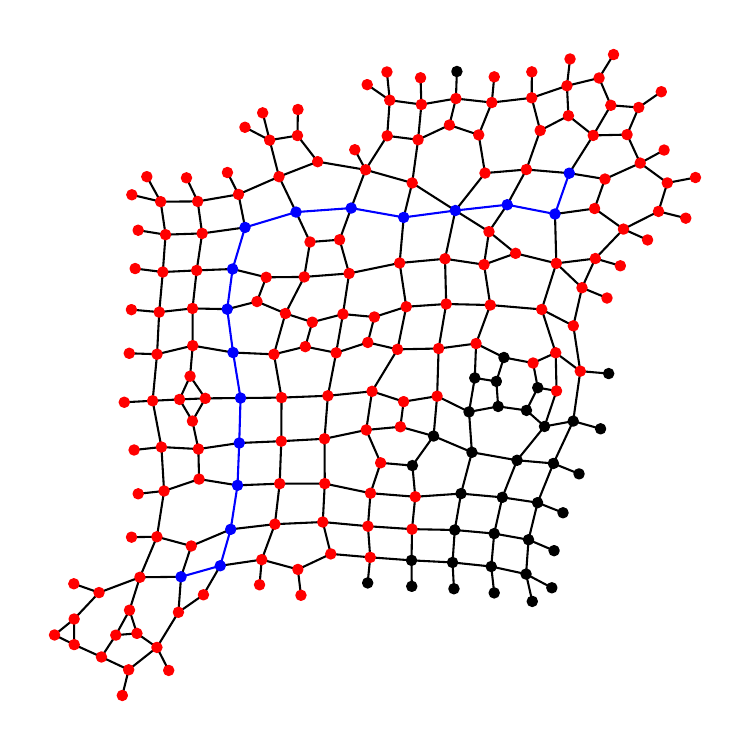}}
        \caption{$k=4$\\$C_4(P^\star) = 157$, $|P^\star| = 17$}
        \label{fig:barcelona-4}
    \end{subfigure}%
    ~ 
    \begin{subfigure}[c]{0.32\textwidth}
        \resizebox{\textwidth}{!}{\includegraphics{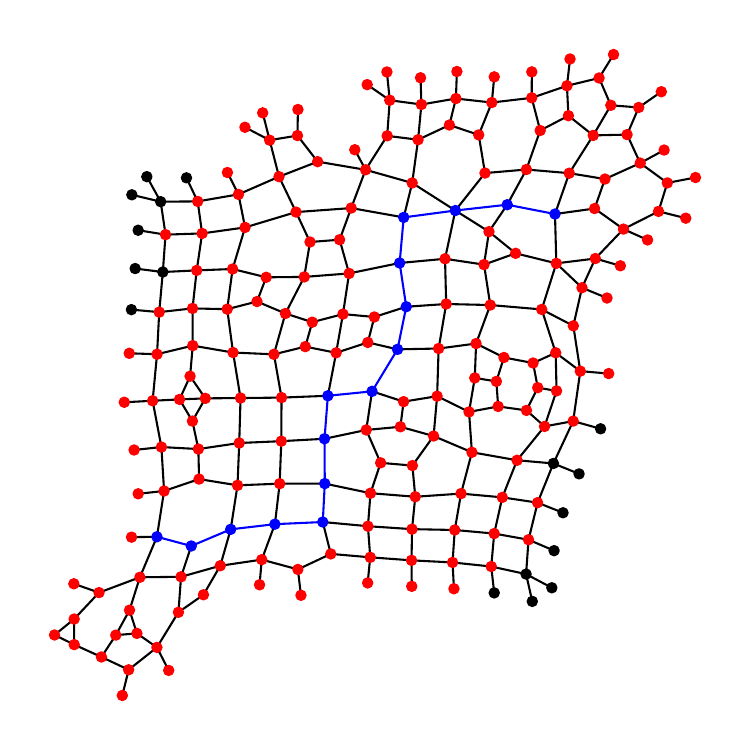}}
        \caption{$k=5$\\$C_5(P^\star) = 177$, $|P^\star| = 16$}
        \label{fig:barcelona-5}
    \end{subfigure}%
    ~ 
    \begin{subfigure}[c]{0.32\textwidth}
        \resizebox{\textwidth}{!}{\includegraphics{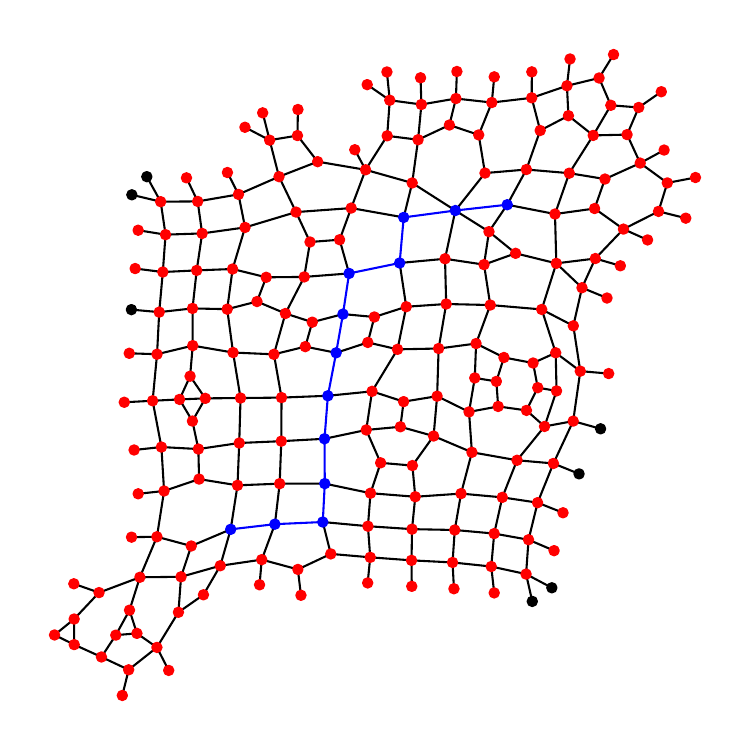}}
        \caption{$k=6$\\$C_6(P^\star) = 190$, $|P^\star| = 13$}
        \label{fig:barcelona-6}
    \end{subfigure}%
    \vspace{1em}
    \begin{subfigure}[c]{0.32\textwidth}
        \resizebox{\textwidth}{!}{\includegraphics{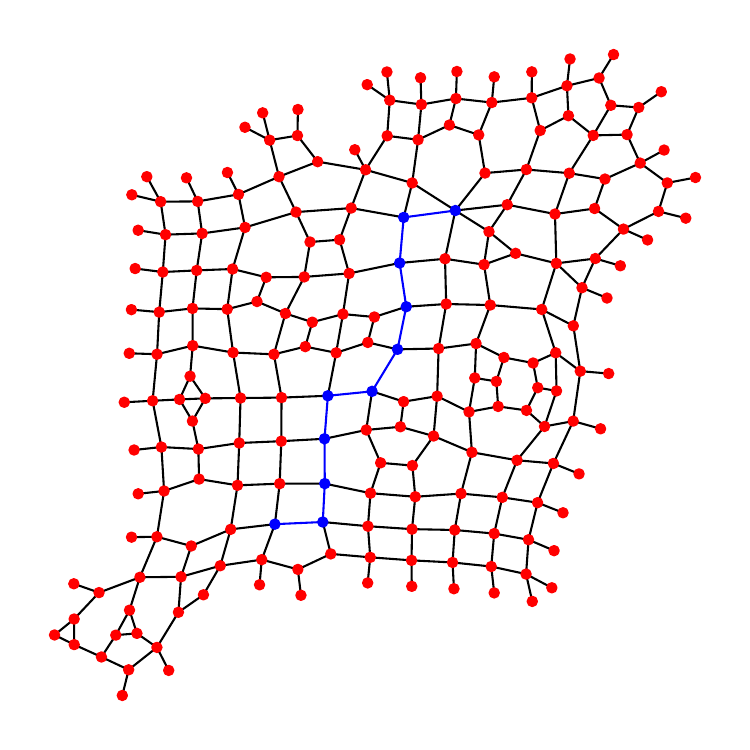}}
        \caption{$k=7$\\$C_7(P^\star) = 199$, $|P^\star| = 11$}
        \label{fig:barcelona-7}
    \end{subfigure}%
    ~ 
    \begin{subfigure}[c]{0.32\textwidth}
        \resizebox{\textwidth}{!}{\includegraphics{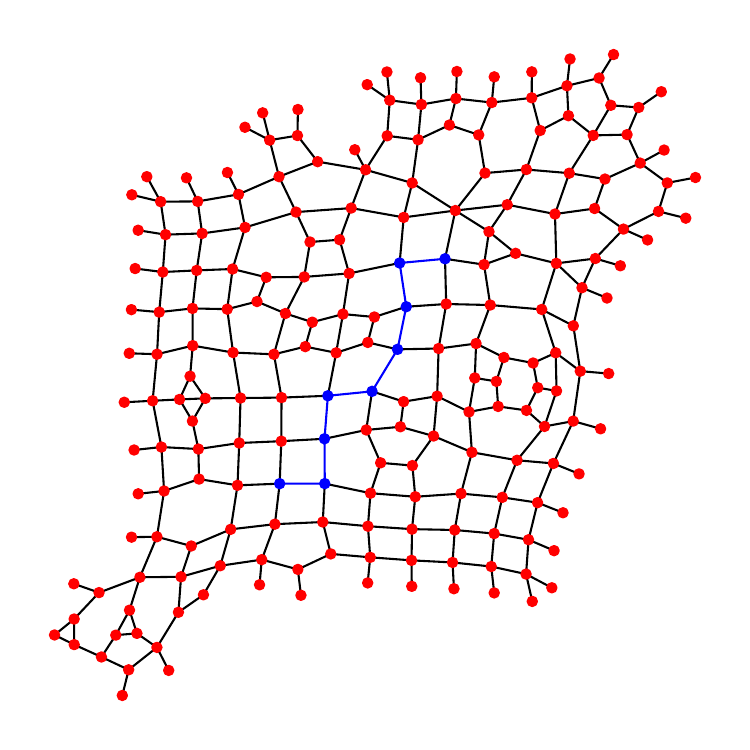}}
        \caption{$k=8$\\$C_8(P^\star) = 201$, $|P^\star| = 9$}
        \label{fig:barcelona-8}
    \end{subfigure}%
    ~ 
    \begin{subfigure}[c]{0.32\textwidth}
        \resizebox{\textwidth}{!}{\includegraphics{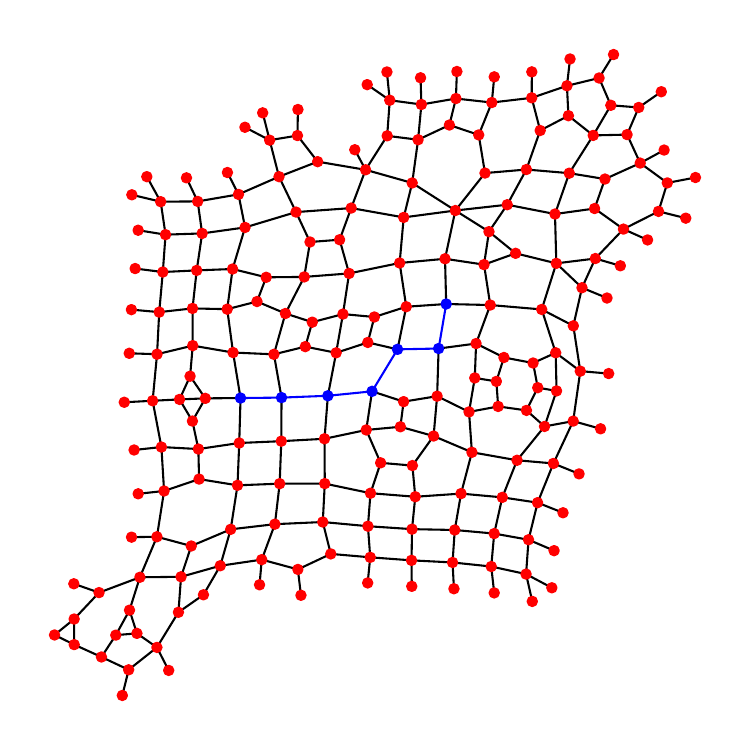}}
        \caption{$k=9$\\$C_9(P^\star) = 203$, $|P^\star| = 7$}
        \label{fig:barcelona-9}
    \end{subfigure}
    \caption{Most $k$-step reach-central shortest paths for Barcelona (diameter: 24).}
    \label{fig:barcelona}
\end{figure*}

\begin{figure*}[!htb]
    \centering
    \captionsetup{justification=centering}
    \begin{subfigure}[c]{0.32\textwidth}
        \resizebox{\textwidth}{!}{\includegraphics{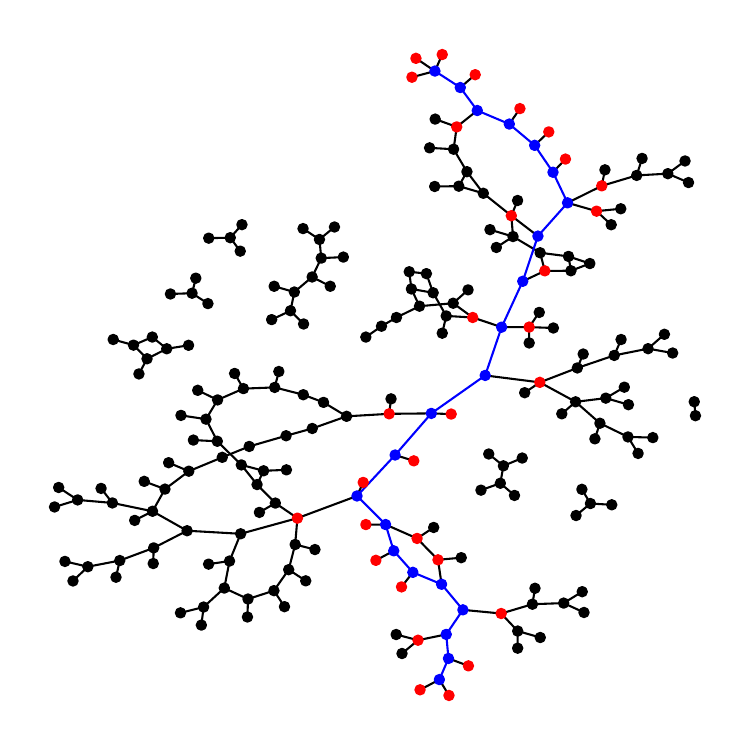}}
        \caption{$k=1$\\$C_1(P^\star) = 30$, $|P^\star| = 22$}
        \label{fig:irvine2-1}
    \end{subfigure}%
    ~ 
    \begin{subfigure}[c]{0.32\textwidth}
        \resizebox{\textwidth}{!}{\includegraphics{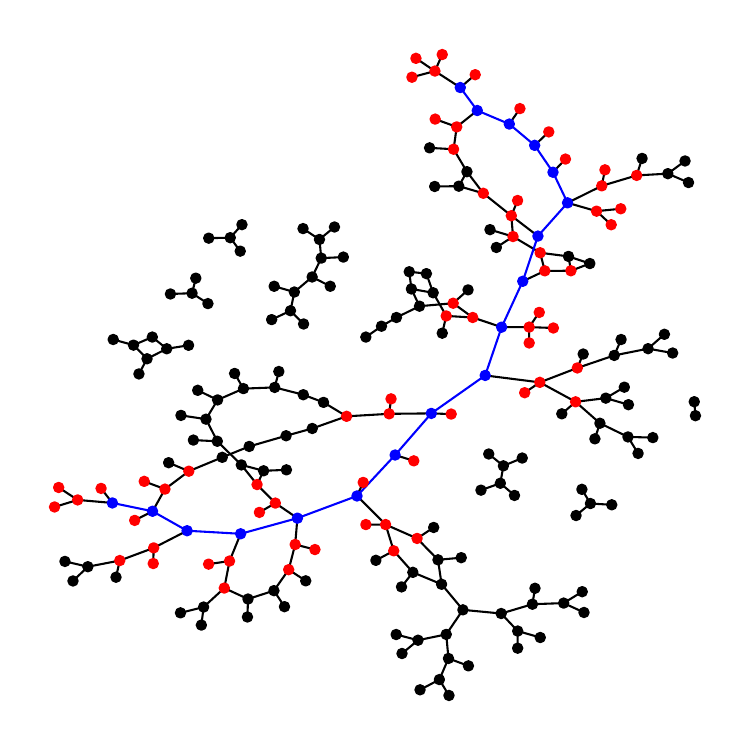}}
        \caption{$k=2$\\$C_2(P^\star) = 65$, $|P^\star| = 18$}
        \label{fig:irvine2-2}
    \end{subfigure}%
    ~ 
    \begin{subfigure}[c]{0.32\textwidth}
        \resizebox{\textwidth}{!}{\includegraphics{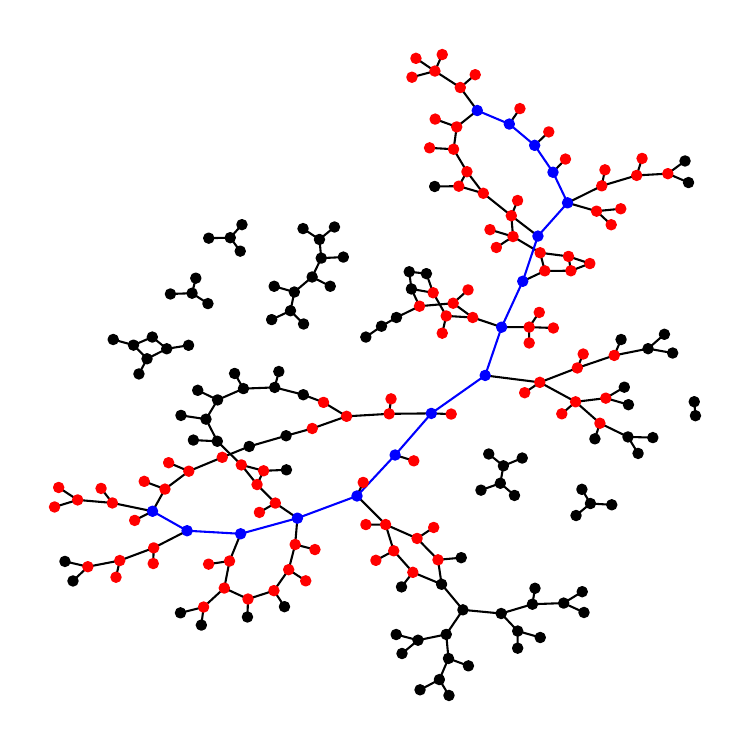}}
        \caption{$k=3$\\$C_3(P^\star) = 101$, $|P^\star| = 16$}
        \label{fig:irvine2-3}
    \end{subfigure}%
    \vspace{1em}
    \begin{subfigure}[c]{0.32\textwidth}
        \resizebox{\textwidth}{!}{\includegraphics{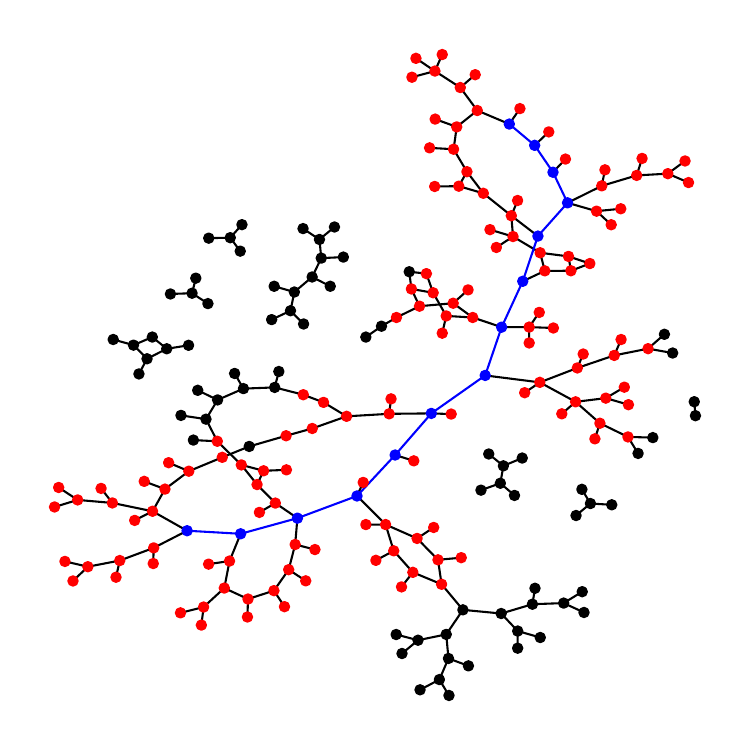}}
        \caption{$k=4$\\$C_4(P^\star) = 128$, $|P^\star| = 14$}
        \label{fig:irvine2-4}
    \end{subfigure}%
    ~ 
    \begin{subfigure}[c]{0.32\textwidth}
        \resizebox{\textwidth}{!}{\includegraphics{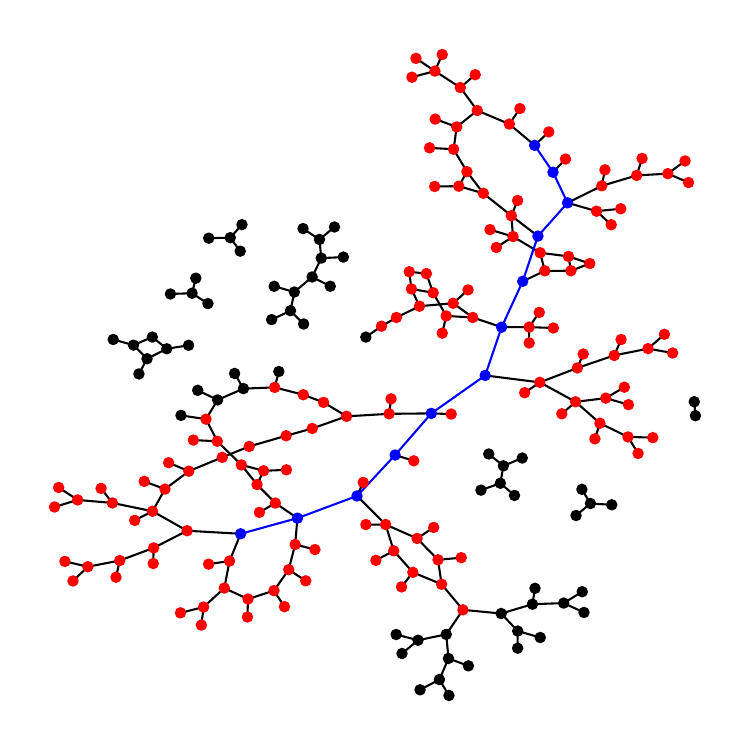}}
        \caption{$k=5$\\$C_5(P^\star) = 141$, $|P^\star| = 12$}
        \label{fig:irvine2-5}
    \end{subfigure}%
    ~ 
    \begin{subfigure}[c]{0.32\textwidth}
        \resizebox{\textwidth}{!}{\includegraphics{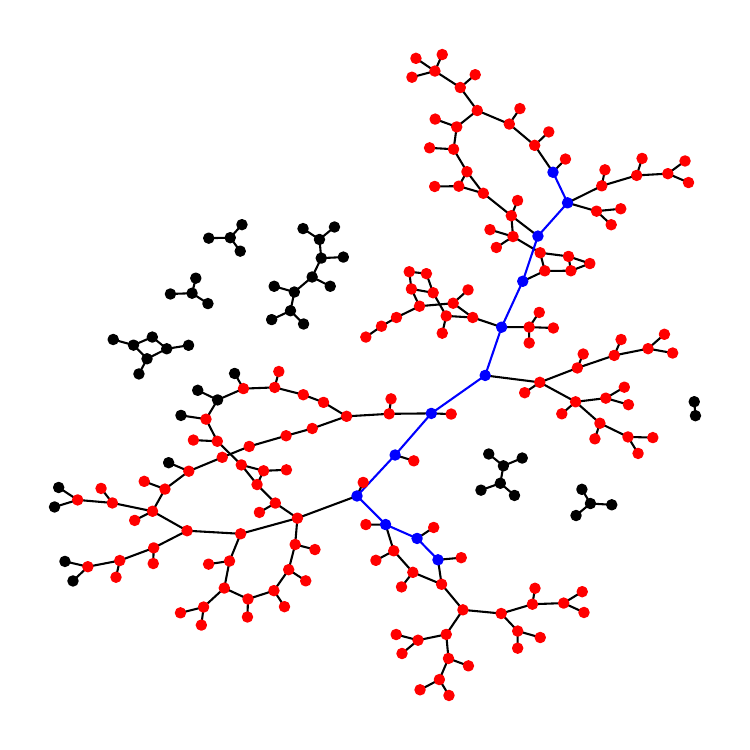}}
        \caption{$k=6$\\$C_6(P^\star) = 157$, $|P^\star| = 12$}
        \label{fig:irvine2-6}
    \end{subfigure}%
    \vspace{1em}
    \begin{subfigure}[c]{0.32\textwidth}
        \resizebox{\textwidth}{!}{\includegraphics{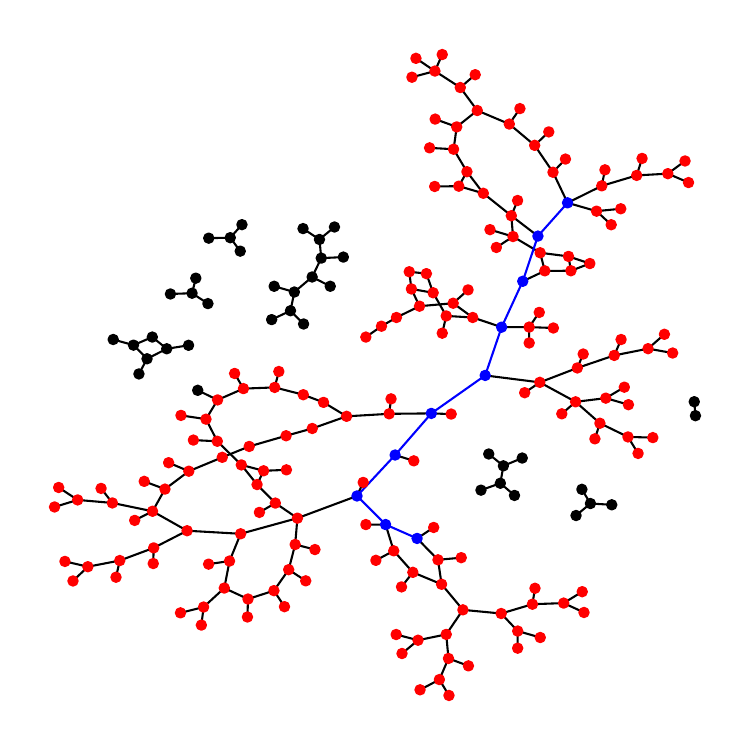}}
        \caption{$k=7$\\$C_7(P^\star) = 167$, $|P^\star| = 10$}
        \label{fig:irvine2-7}
    \end{subfigure}%
    ~ 
    \begin{subfigure}[c]{0.32\textwidth}
        \resizebox{\textwidth}{!}{\includegraphics{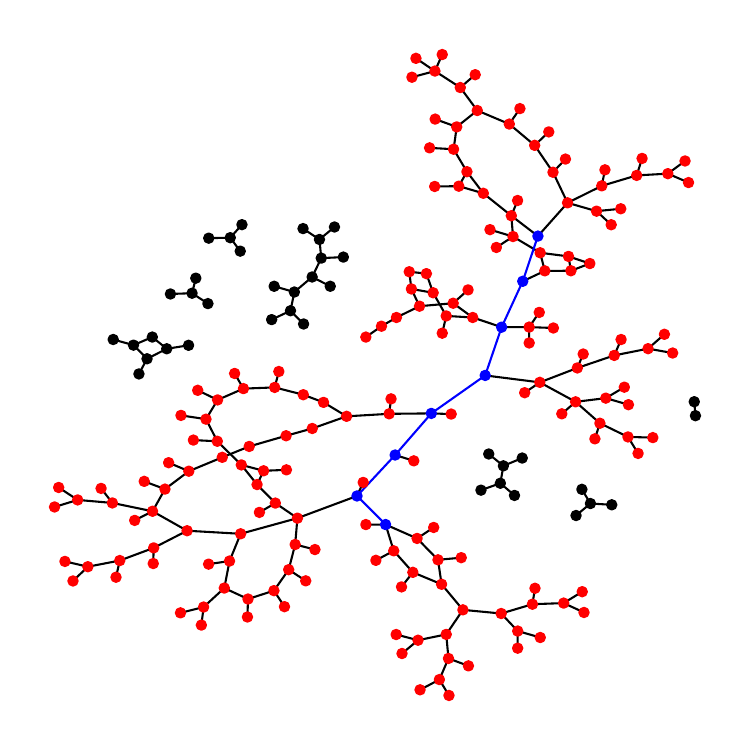}}
        \caption{$k=8$\\$C_8(P^\star) = 170$, $|P^\star| = 8$}
        \label{fig:irvine2-8}
    \end{subfigure}%
    ~ 
    \begin{subfigure}[c]{0.32\textwidth}
        \resizebox{\textwidth}{!}{\includegraphics{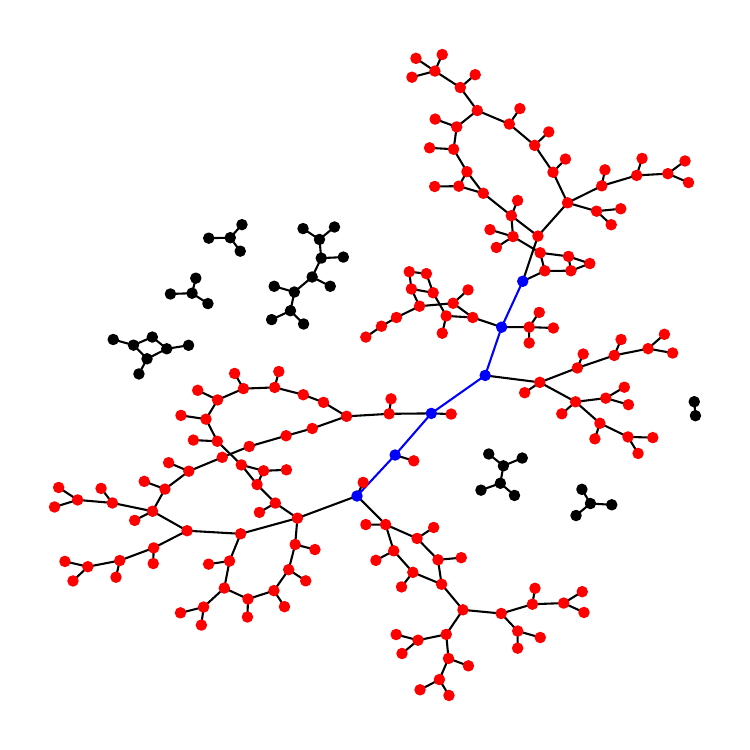}}
        \caption{$k=9$\\$C_9(P^\star) = 172$, $|P^\star| = 6$}
        \label{fig:irvine2-9}
    \end{subfigure}
    \caption{$k$-step-central shortest paths for Irvine2 (diameter: 23).}
    \label{fig:irvine2}
\end{figure*}

\begin{figure*}[!htb]
    \centering
    \captionsetup{justification=centering}
    \begin{subfigure}[c]{0.32\textwidth}
        \resizebox{\textwidth}{!}{\includegraphics{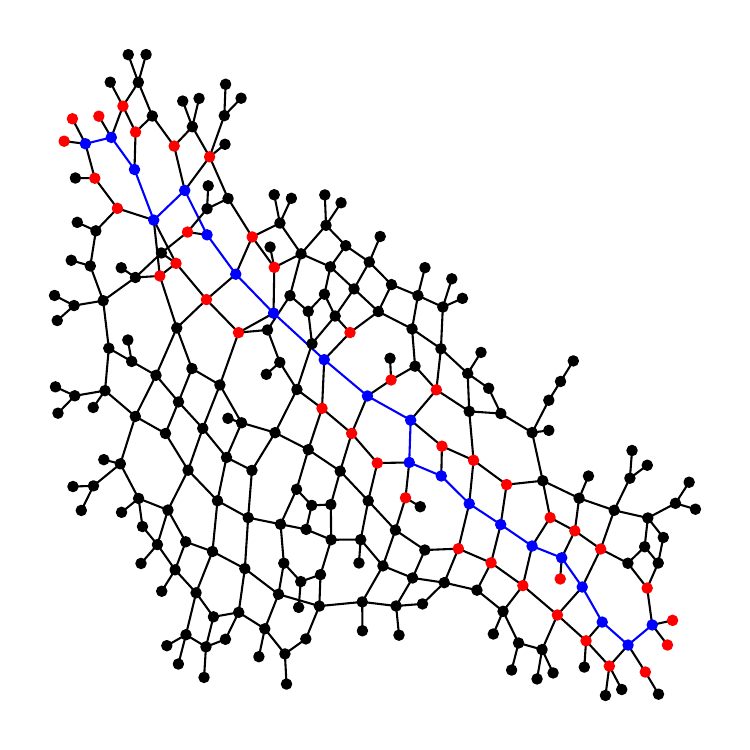}}
        \caption{$k=1$\\$C_1(P^\star) = 40$, $|P^\star| = 21$}
        \label{fig:los-angeles-1}
    \end{subfigure}%
    ~ 
    \begin{subfigure}[c]{0.32\textwidth}
        \resizebox{\textwidth}{!}{\includegraphics{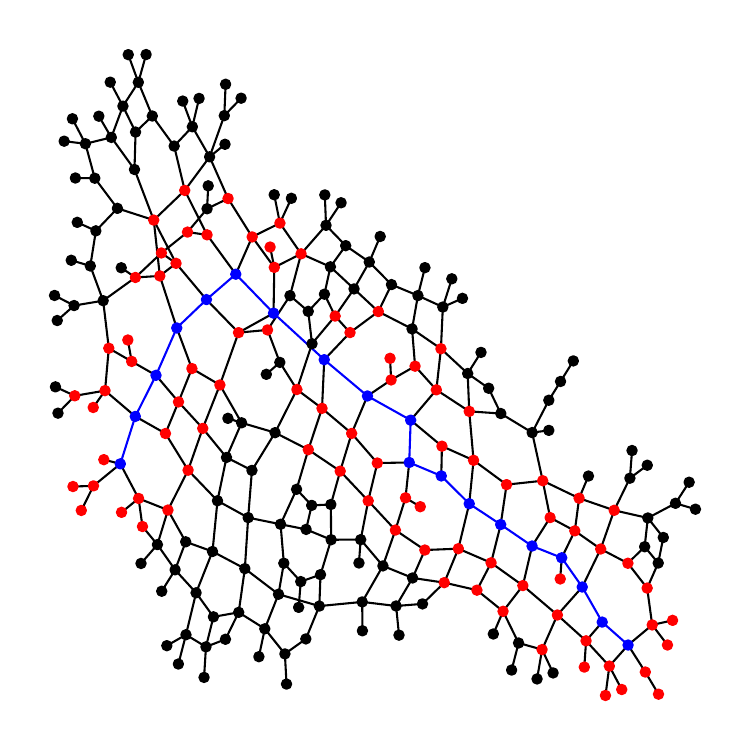}}
        \caption{$k=2$\\$C_2(P^\star) = 86$, $|P^\star| = 19$}
        \label{fig:los-angeles-2}
    \end{subfigure}%
    ~ 
    \begin{subfigure}[c]{0.32\textwidth}
        \resizebox{\textwidth}{!}{\includegraphics{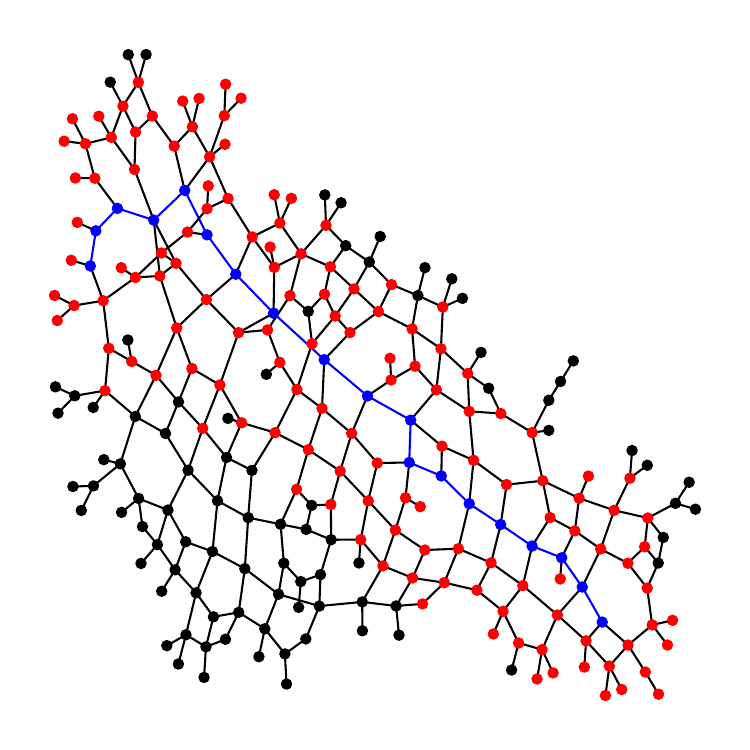}}
        \caption{$k=3$\\$C_3(P^\star) = 134$, $|P^\star| = 19$}
        \label{fig:los-angeles-3}
    \end{subfigure}%
    \vspace{1em}
    \begin{subfigure}[c]{0.32\textwidth}
        \resizebox{\textwidth}{!}{\includegraphics{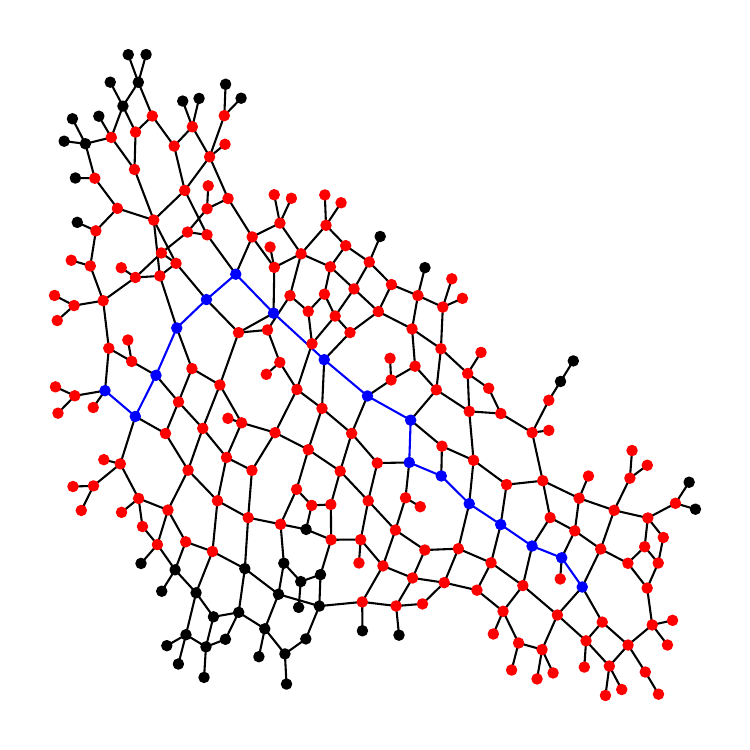}}
        \caption{$k=4$\\$C_4(P^\star) = 175$, $|P^\star| = 17$}
        \label{fig:los-angeles-4}
    \end{subfigure}%
    ~ 
    \begin{subfigure}[c]{0.32\textwidth}
        \resizebox{\textwidth}{!}{\includegraphics{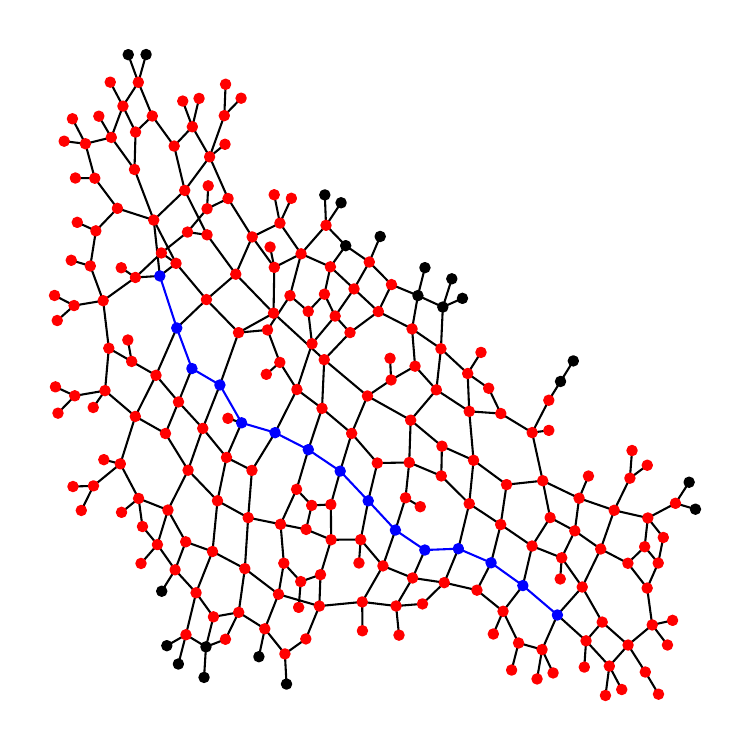}}
        \caption{$k=5$\\$C_5(P^\star) = 203$, $|P^\star| = 15$}
        \label{fig:los-angeles-5}
    \end{subfigure}%
    ~ 
    \begin{subfigure}[c]{0.32\textwidth}
        \resizebox{\textwidth}{!}{\includegraphics{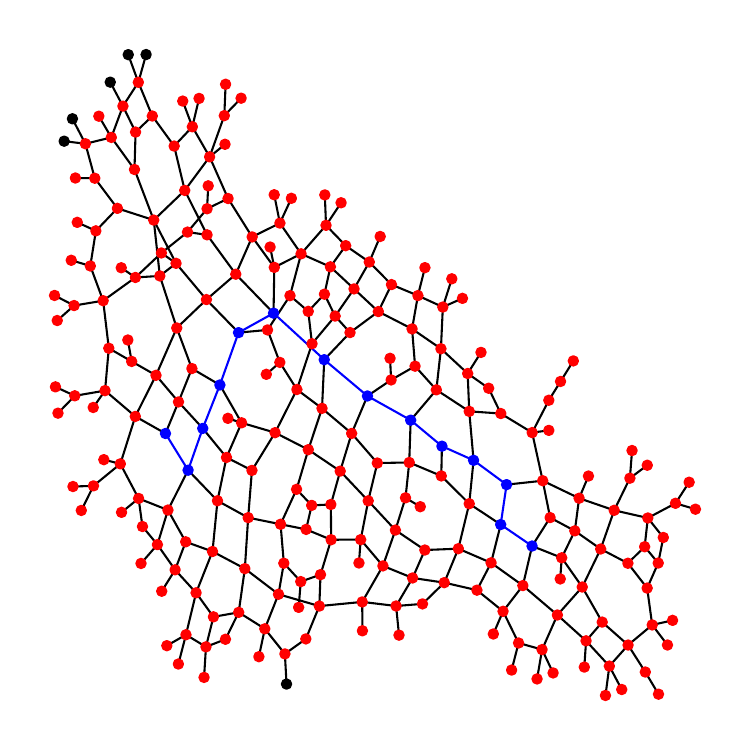}}
        \caption{$k=6$\\$C_6(P^\star) = 220$, $|P^\star| = 14$}
        \label{fig:los-angeles-6}
    \end{subfigure}%
    \vspace{1em}
    \begin{subfigure}[c]{0.32\textwidth}
        \resizebox{\textwidth}{!}{\includegraphics{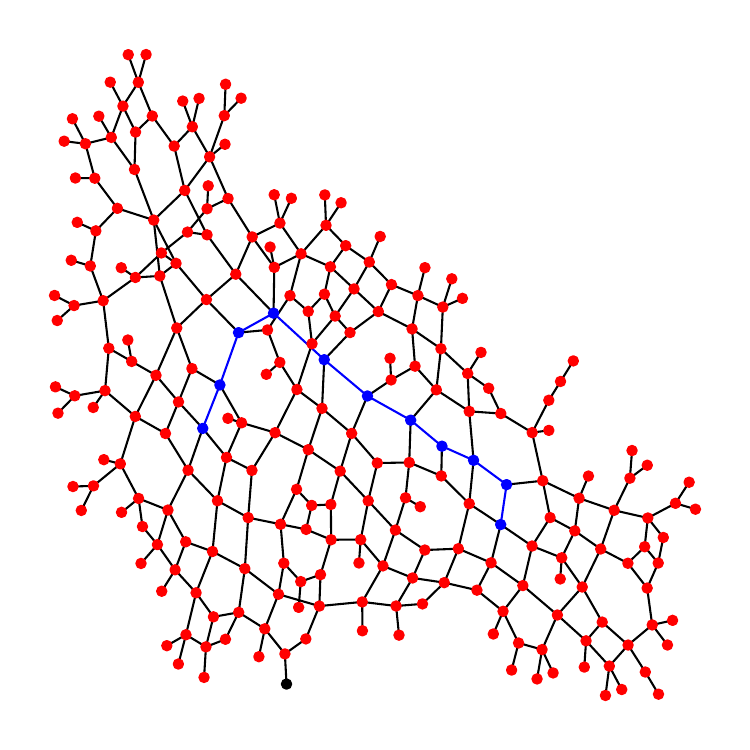}}
        \caption{$k=7$\\$C_7(P^\star) = 228$, $|P^\star| = 11$}
        \label{fig:los-angeles-7}
    \end{subfigure}%
    ~ 
    \begin{subfigure}[c]{0.32\textwidth}
        \resizebox{\textwidth}{!}{\includegraphics{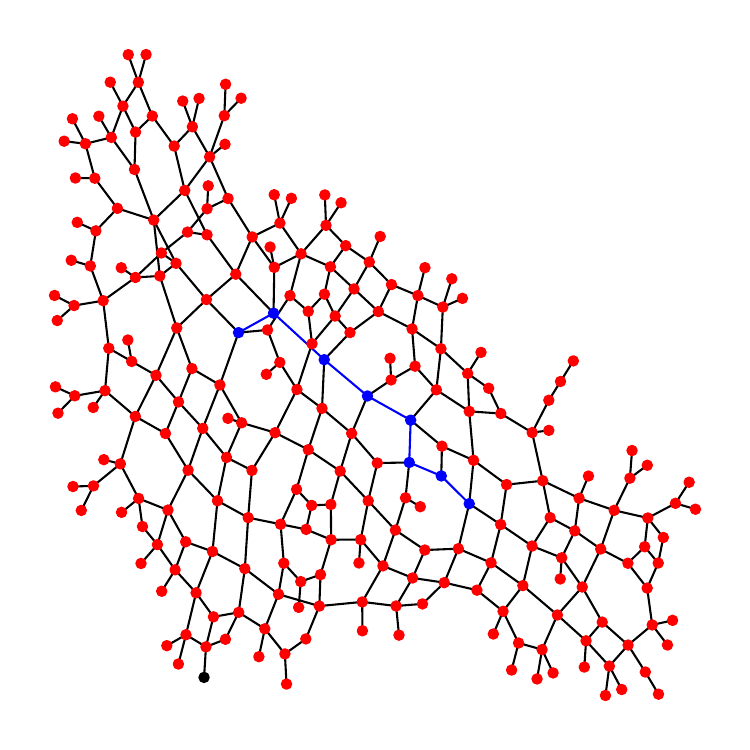}}
        \caption{$k=8$\\$C_8(P^\star) = 231$, $|P^\star| = 8$}
        \label{fig:los-angeles-8}
    \end{subfigure}%
    ~ 
    \begin{subfigure}[c]{0.32\textwidth}
        \resizebox{\textwidth}{!}{\includegraphics{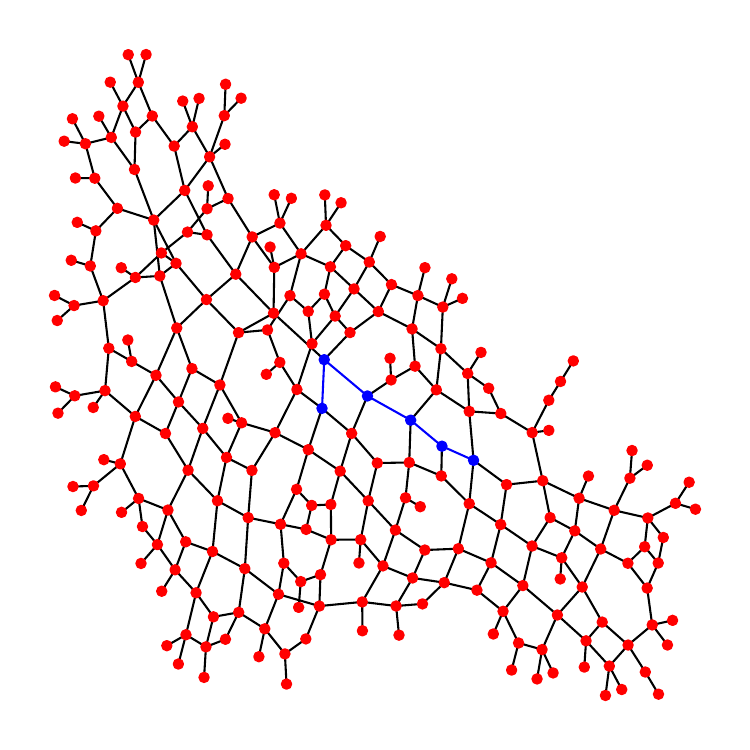}}
        \caption{$k=9$\\$C_9(P^\star) = 234$, $|P^\star| = 6$}
        \label{fig:los-angeles-9}
    \end{subfigure}
    \caption{$k$-step-central shortest paths for Los Angeles (diameter: 22).}
    \label{fig:los-angeles}
\end{figure*}

\begin{figure*}[!htb]
    \centering
    \captionsetup{justification=centering}
    \begin{subfigure}[c]{0.32\textwidth}
        \resizebox{\textwidth}{!}{\includegraphics{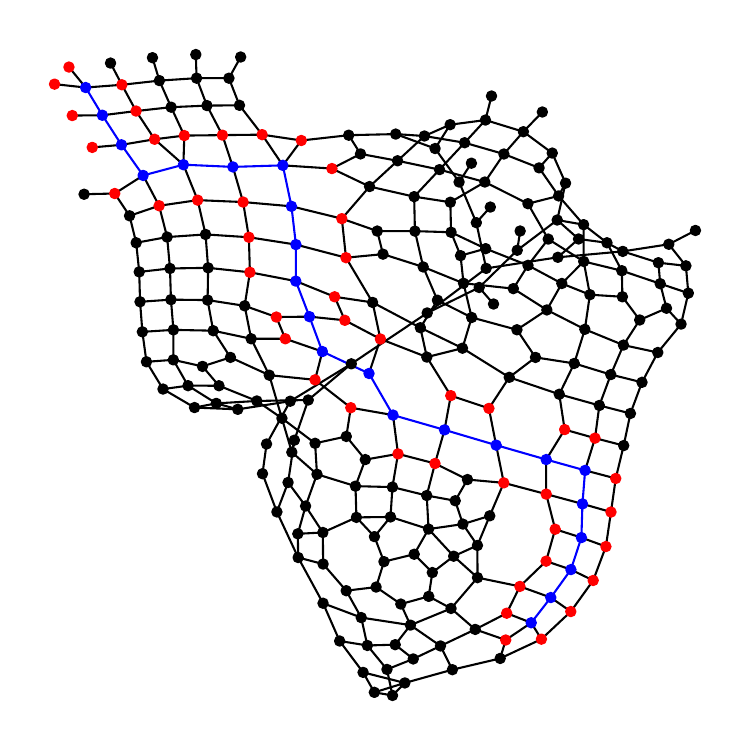}}
        \caption{$k=1$\\$C_1(P^\star) = 46$, $|P^\star| = 23$}
        \label{fig:new-york-1}
    \end{subfigure}%
    ~ 
    \begin{subfigure}[c]{0.32\textwidth}
        \resizebox{\textwidth}{!}{\includegraphics{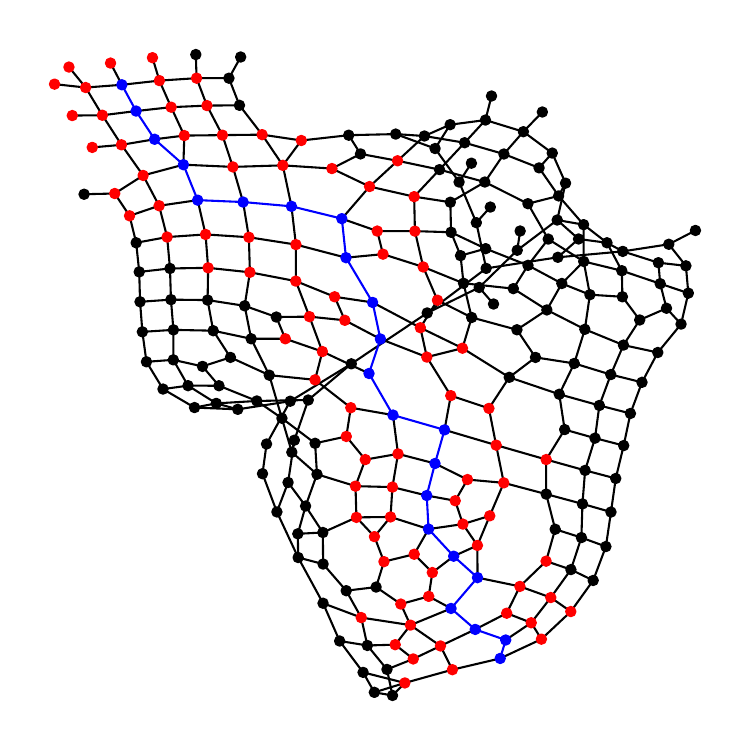}}
        \caption{$k=2$\\$C_2(P^\star) = 86$, $|P^\star| = 23$}
        \label{fig:new-york-2}
    \end{subfigure}%
    ~ 
    \begin{subfigure}[c]{0.32\textwidth}
        \resizebox{\textwidth}{!}{\includegraphics{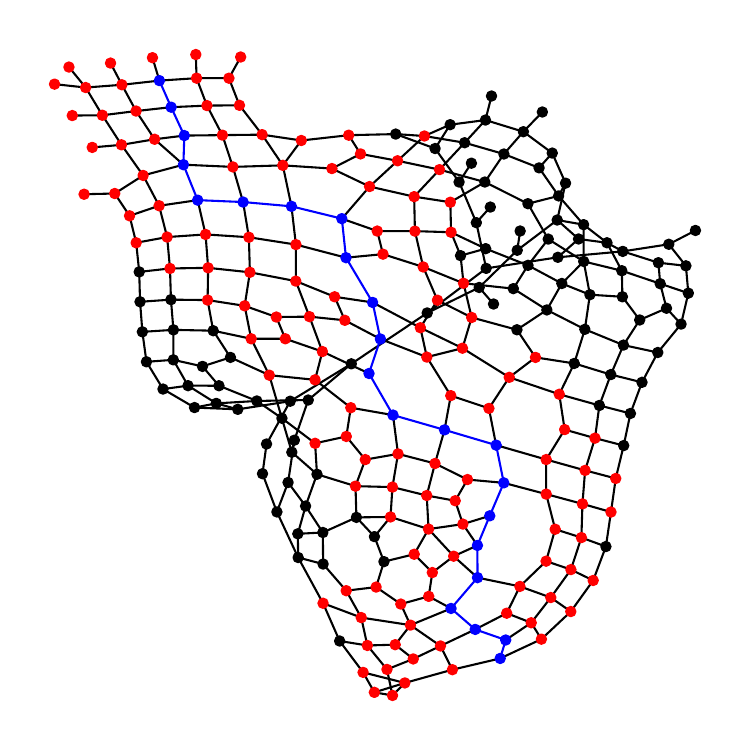}}
        \caption{$k=3$\\$C_3(P^\star) = 126$, $|P^\star| = 23$}
        \label{fig:new-york-3}
    \end{subfigure}%
    \vspace{1em}
    \begin{subfigure}[c]{0.32\textwidth}
        \resizebox{\textwidth}{!}{\includegraphics{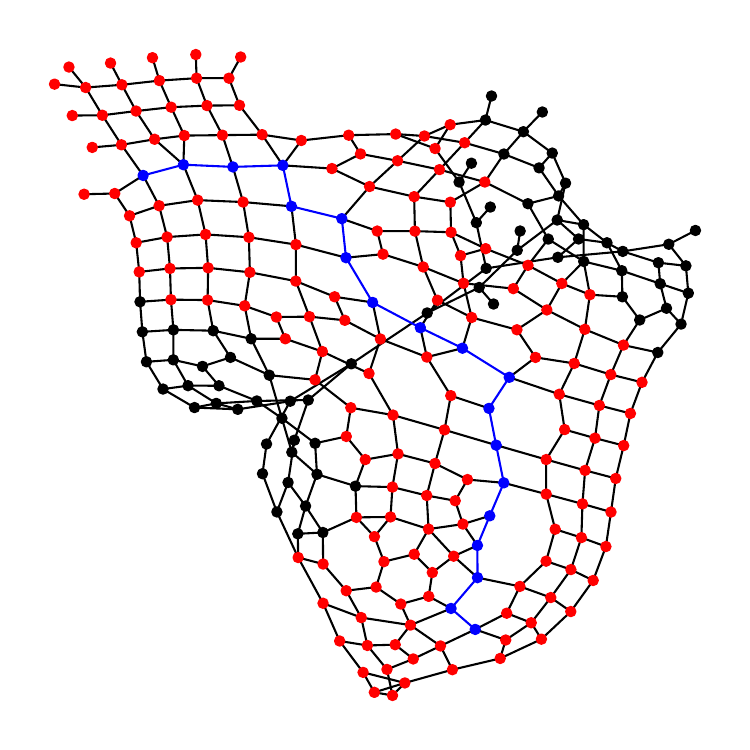}}
        \caption{$k=4$\\$C_4(P^\star) = 156$, $|P^\star| = 19$}
        \label{fig:new-york-4}
    \end{subfigure}%
    ~ 
    \begin{subfigure}[c]{0.32\textwidth}
        \resizebox{\textwidth}{!}{\includegraphics{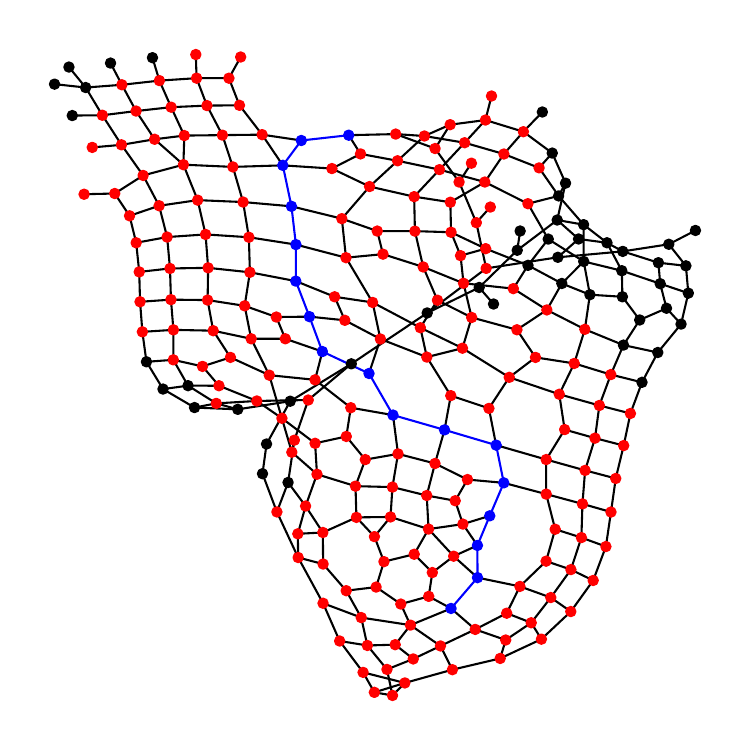}}
        \caption{$k=5$\\$C_5(P^\star) = 181$, $|P^\star| = 17$}
        \label{fig:new-york-5}
    \end{subfigure}%
    ~ 
    \begin{subfigure}[c]{0.32\textwidth}
        \resizebox{\textwidth}{!}{\includegraphics{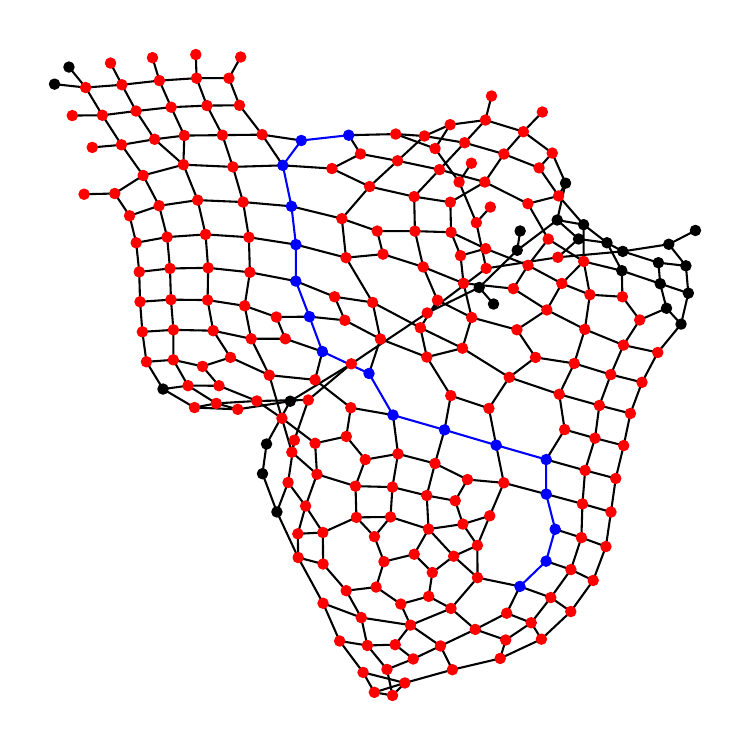}}
        \caption{$k=6$\\$C_6(P^\star) = 205$, $|P^\star| = 17$}
        \label{fig:new-york-6}
    \end{subfigure}%
    \vspace{1em}
    \begin{subfigure}[c]{0.32\textwidth}
        \resizebox{\textwidth}{!}{\includegraphics{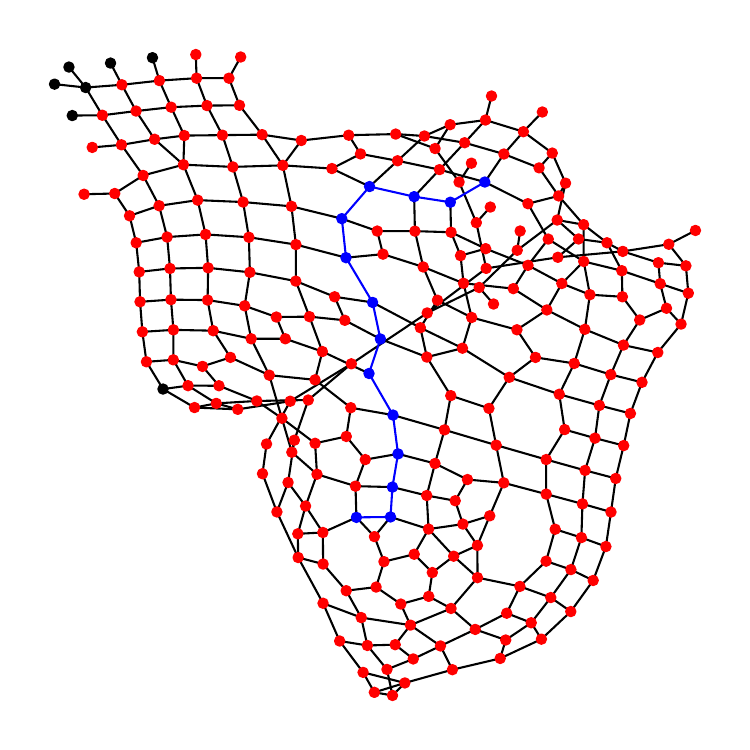}}
        \caption{$k=7$\\$C_7(P^\star) = 227$, $|P^\star| = 14$}
        \label{fig:new-york-7}
    \end{subfigure}%
    ~ 
    \begin{subfigure}[c]{0.32\textwidth}
        \resizebox{\textwidth}{!}{\includegraphics{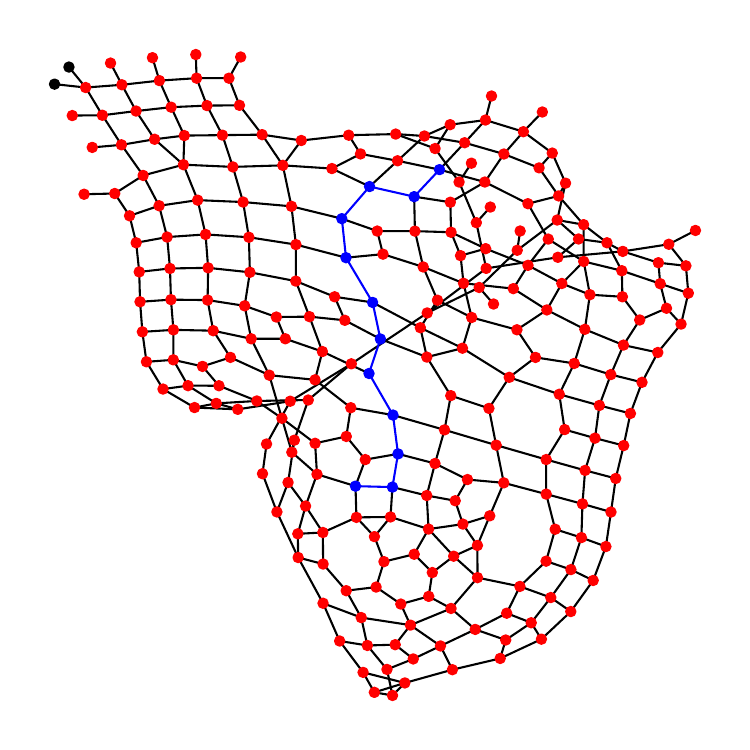}}
        \caption{$k=8$\\$C_8(P^\star) = 234$, $|P^\star| = 12$}
        \label{fig:new-york-8}
    \end{subfigure}%
    ~ 
    \begin{subfigure}[c]{0.32\textwidth}
        \resizebox{\textwidth}{!}{\includegraphics{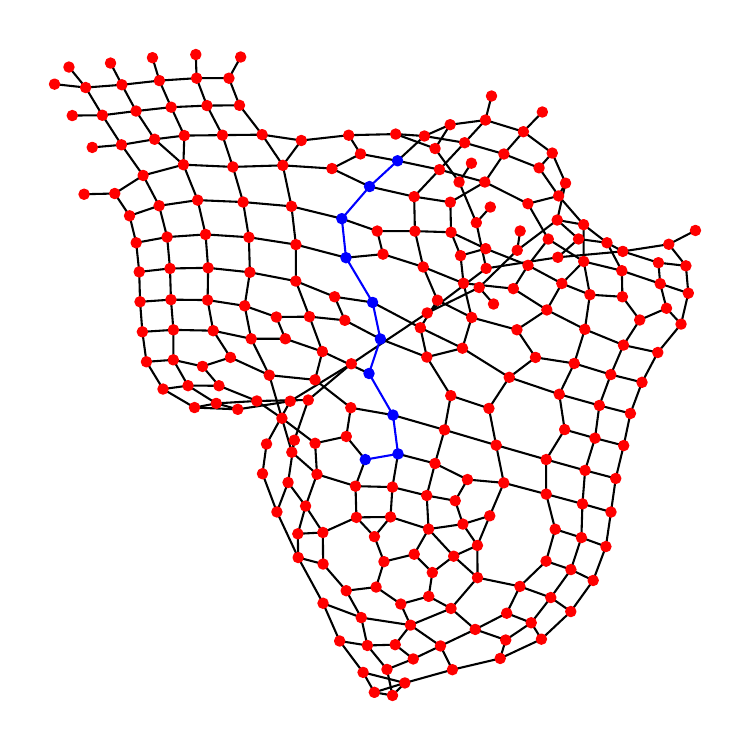}}
        \caption{$k=9$\\$C_9(P^\star) = 238$, $|P^\star| = 10$}
        \label{fig:new-york-9}
    \end{subfigure}
    \caption{$k$-step-central shortest paths for New York (diameter: 24).}
    \label{fig:new-york}
\end{figure*}

\begin{figure*}[!htb]
    \centering
    \captionsetup{justification=centering}
    \begin{subfigure}[c]{0.32\textwidth}
        \resizebox{\textwidth}{!}{\includegraphics{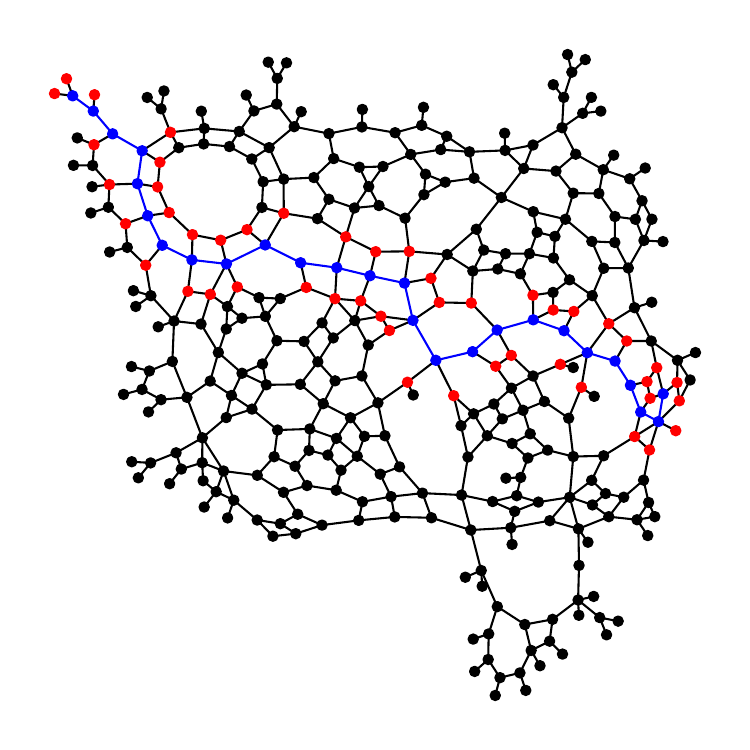}}
        \caption{$k=1$\\$C_1(P^\star) = 48$, $|P^\star| = 26$}
        \label{fig:paris-1}
    \end{subfigure}%
    ~ 
    \begin{subfigure}[c]{0.32\textwidth}
        \resizebox{\textwidth}{!}{\includegraphics{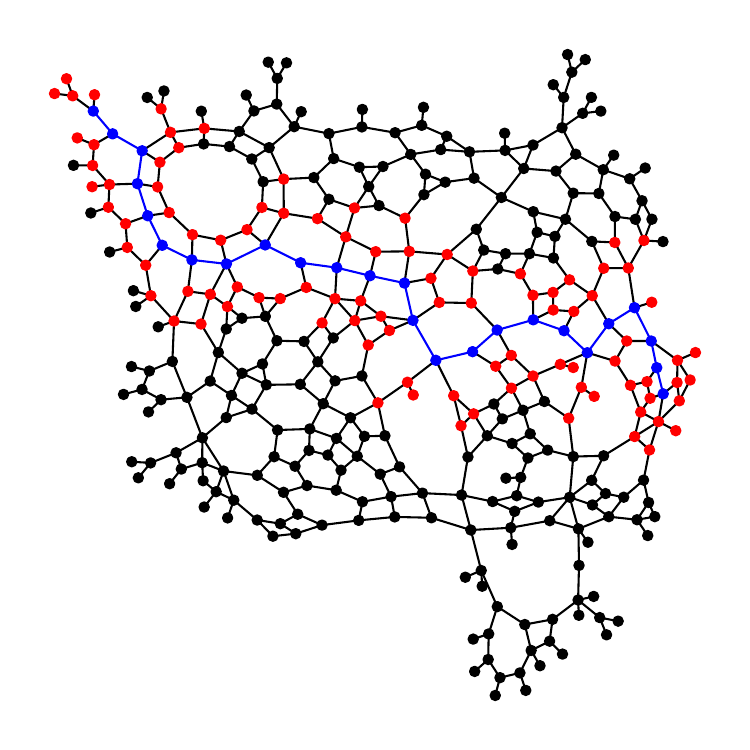}}
        \caption{$k=2$\\$C_2(P^\star) = 96$, $|P^\star| = 25$}
        \label{fig:paris-2}
    \end{subfigure}%
    ~ 
    \begin{subfigure}[c]{0.32\textwidth}
        \resizebox{\textwidth}{!}{\includegraphics{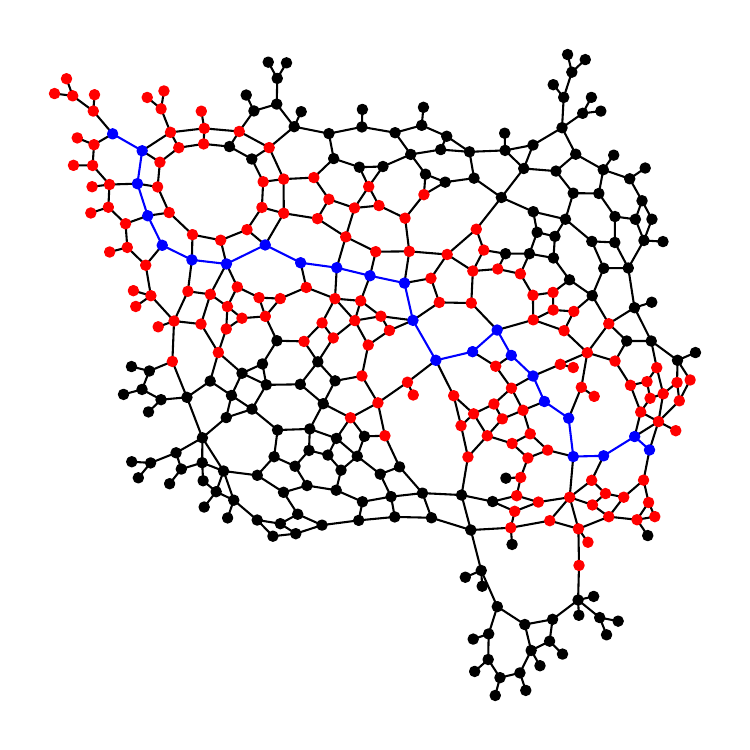}}
        \caption{$k=3$\\$C_3(P^\star) = 147$, $|P^\star| = 24$}
        \label{fig:paris-3}
    \end{subfigure}%
    \vspace{1em}
    \begin{subfigure}[c]{0.32\textwidth}
        \resizebox{\textwidth}{!}{\includegraphics{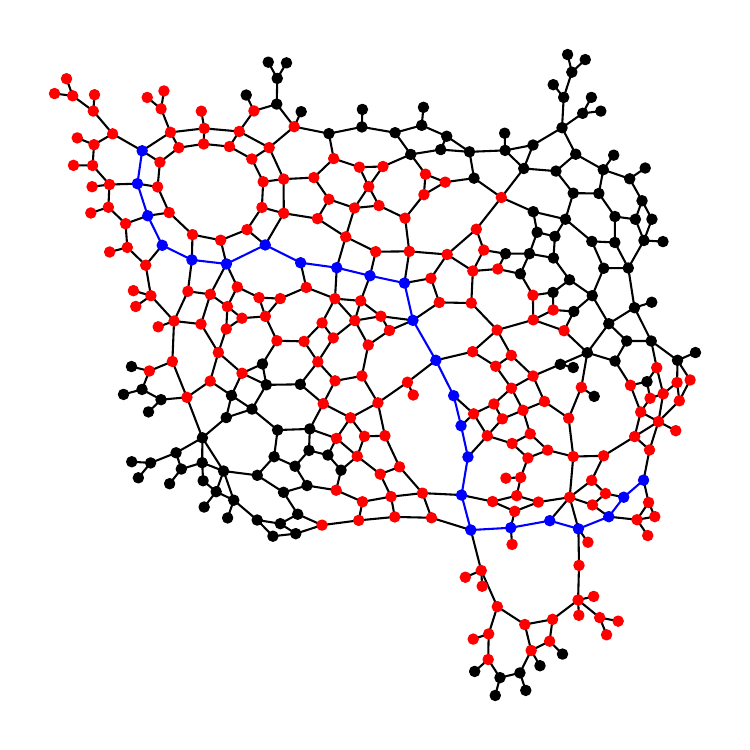}}
        \caption{$k=4$\\$C_4(P^\star) = 191$, $|P^\star| = 24$}
        \label{fig:paris-4}
    \end{subfigure}%
    ~ 
    \begin{subfigure}[c]{0.32\textwidth}
        \resizebox{\textwidth}{!}{\includegraphics{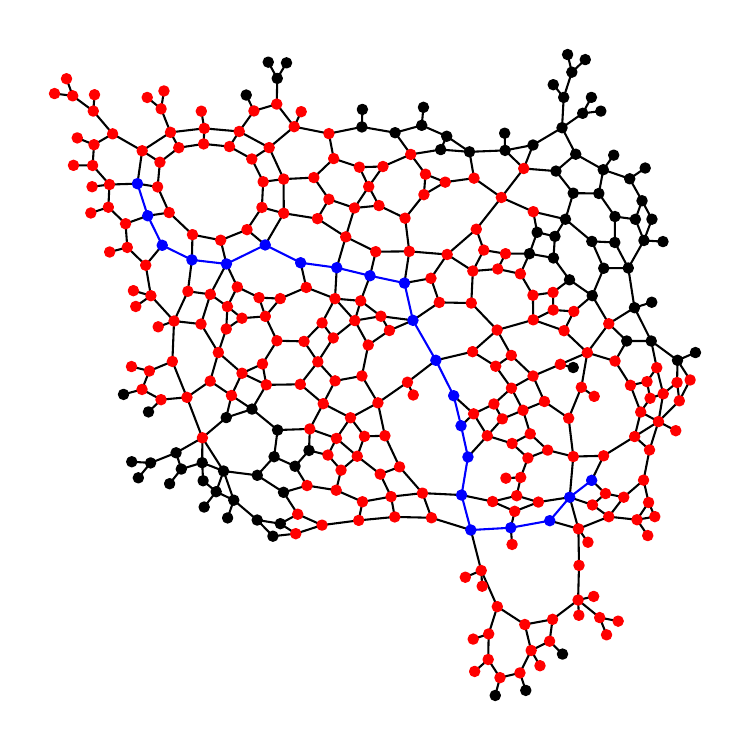}}
        \caption{$k=5$\\$C_5(P^\star) = 229$, $|P^\star| = 21$}
        \label{fig:paris-5}
    \end{subfigure}%
    ~ 
    \begin{subfigure}[c]{0.32\textwidth}
        \resizebox{\textwidth}{!}{\includegraphics{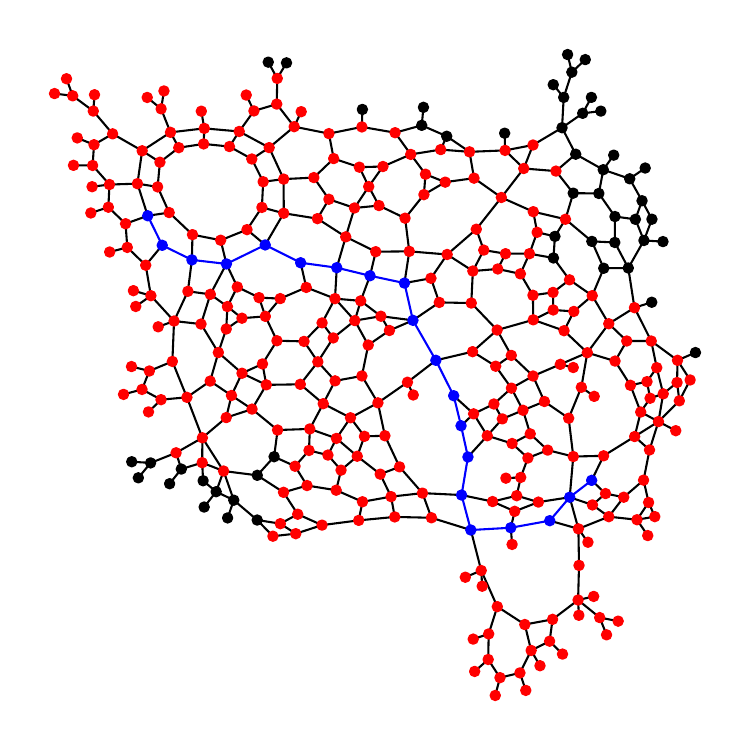}}
        \caption{$k=6$\\$C_6(P^\star) = 265$, $|P^\star| = 20$}
        \label{fig:paris-6}
    \end{subfigure}%
    \vspace{1em}
    \begin{subfigure}[c]{0.32\textwidth}
        \resizebox{\textwidth}{!}{\includegraphics{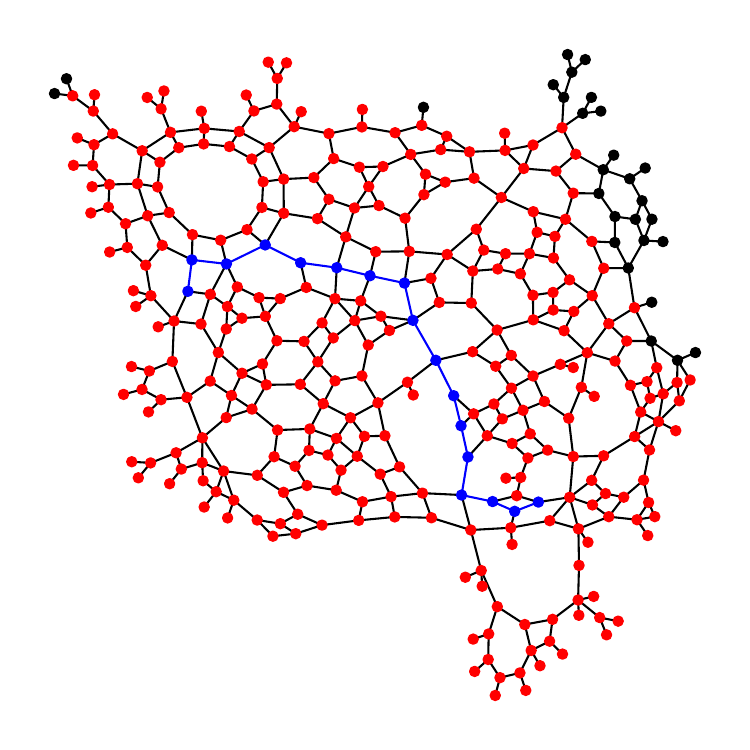}}
        \caption{$k=7$\\$C_7(P^\star) = 290$, $|P^\star| = 17$}
        \label{fig:paris-7}
    \end{subfigure}%
    ~ 
    \begin{subfigure}[c]{0.32\textwidth}
        \resizebox{\textwidth}{!}{\includegraphics{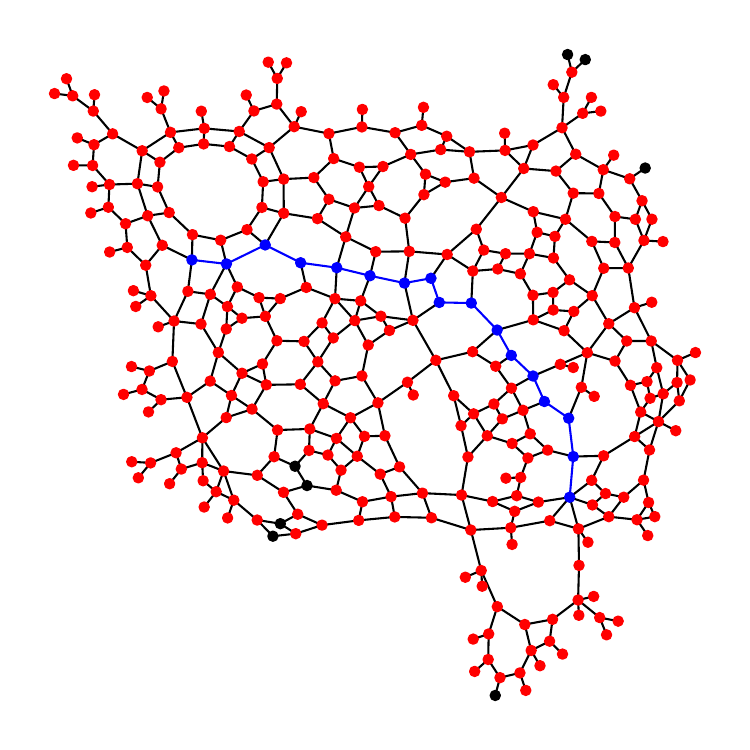}}
        \caption{$k=8$\\$C_8(P^\star) = 310$, $|P^\star| = 17$}
        \label{fig:paris-8}
    \end{subfigure}%
    ~ 
    \begin{subfigure}[c]{0.32\textwidth}
        \resizebox{\textwidth}{!}{\includegraphics{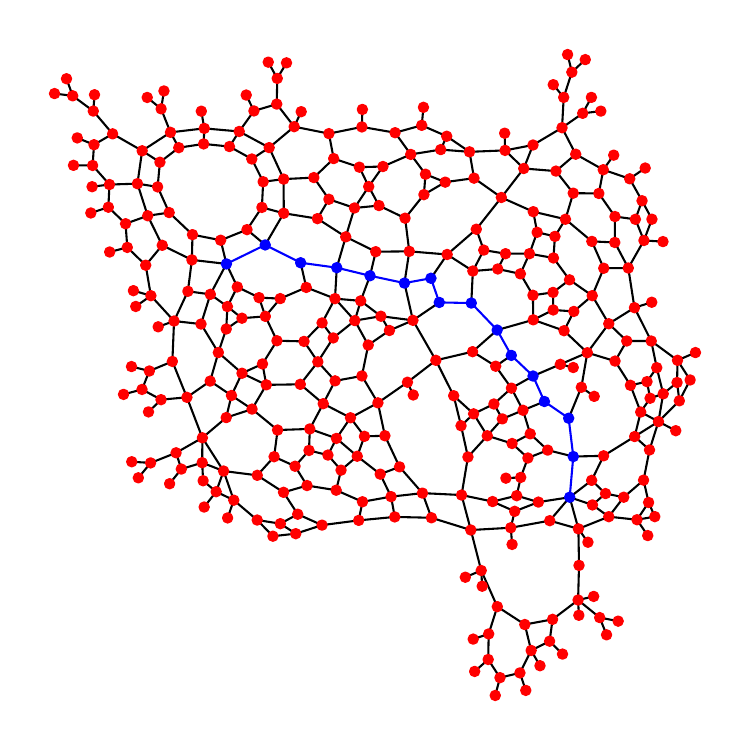}}
        \caption{$k=9$\\$C_9(P^\star) = 319$, $|P^\star| = 16$}
        \label{fig:paris-9}
    \end{subfigure}
    \caption{$k$-step-central shortest paths for Paris (diameter: 28).}
    \label{fig:paris}
\end{figure*}

\begin{figure}[!htb]
    \centering
    \captionsetup{justification=centering}
    \begin{subfigure}[c]{0.32\textwidth}
        \resizebox{\textwidth}{!}{\includegraphics{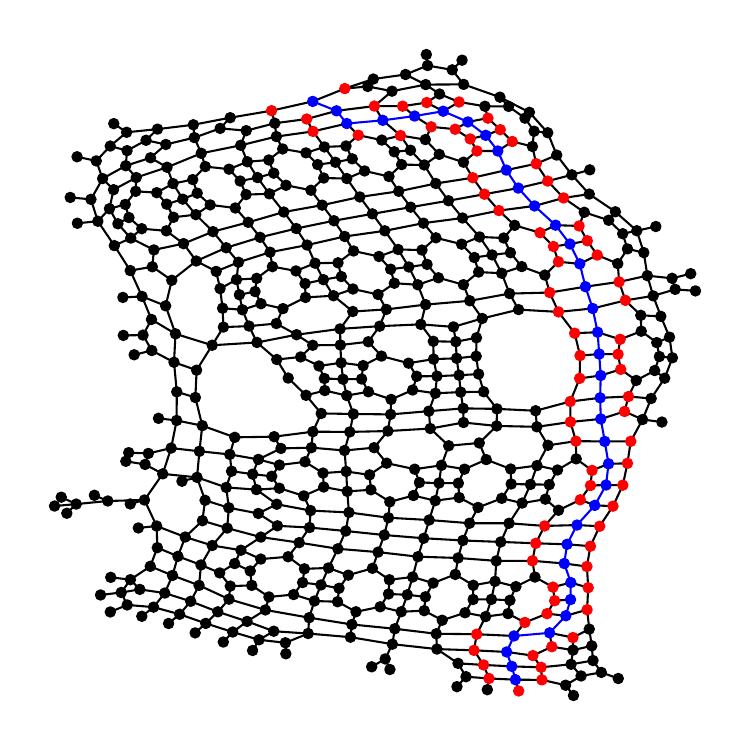}}
        \caption{$k=1$\\$C_1(P^\star) = 73$, $|P^\star| = 37$}
        \label{fig:savannah-1}
    \end{subfigure}%
    ~ 
    \begin{subfigure}[c]{0.32\textwidth}
        \resizebox{\textwidth}{!}{\includegraphics{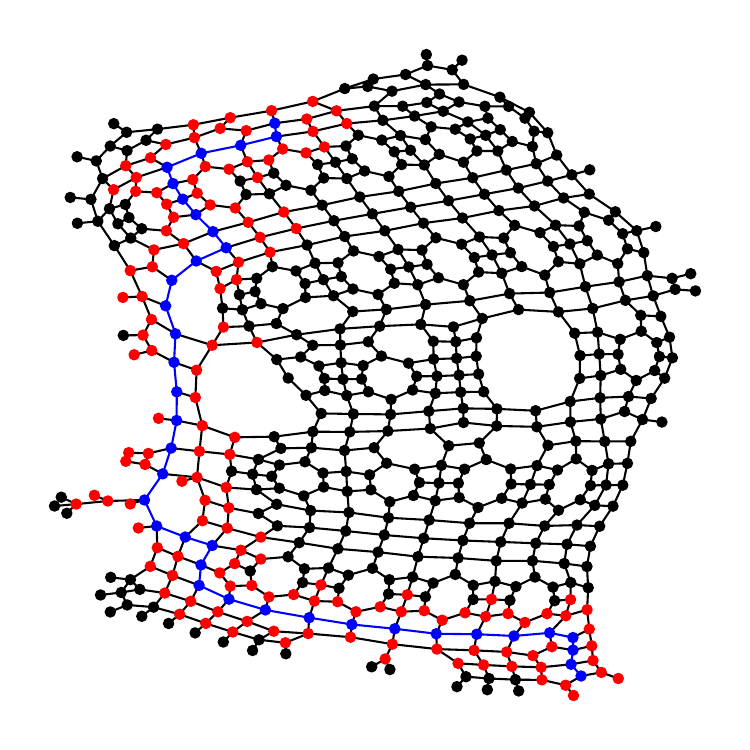}}
        \caption{$k=2$\\$C_2(P^\star) = 139$, $|P^\star| = 38$}
        \label{fig:savannah-2}
    \end{subfigure}%
    ~ 
    \begin{subfigure}[c]{0.32\textwidth}
        \resizebox{\textwidth}{!}{\includegraphics{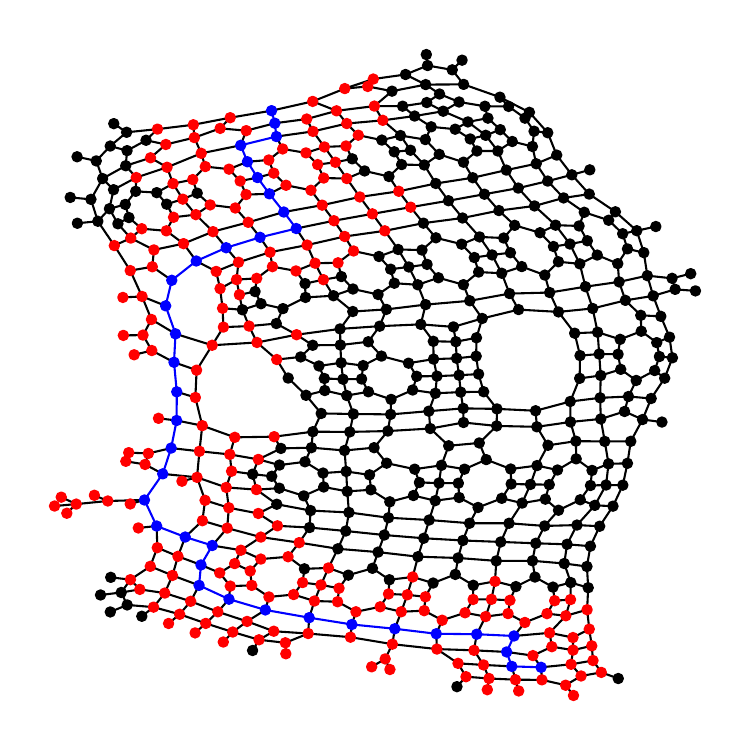}}
        \caption{$k=3$\\$C_3(P^\star) = 213$, $|P^\star| = 37$}
        \label{fig:savannah-3}
    \end{subfigure}%
    \vspace{1em}
    \begin{subfigure}[c]{0.32\textwidth}
        \resizebox{\textwidth}{!}{\includegraphics{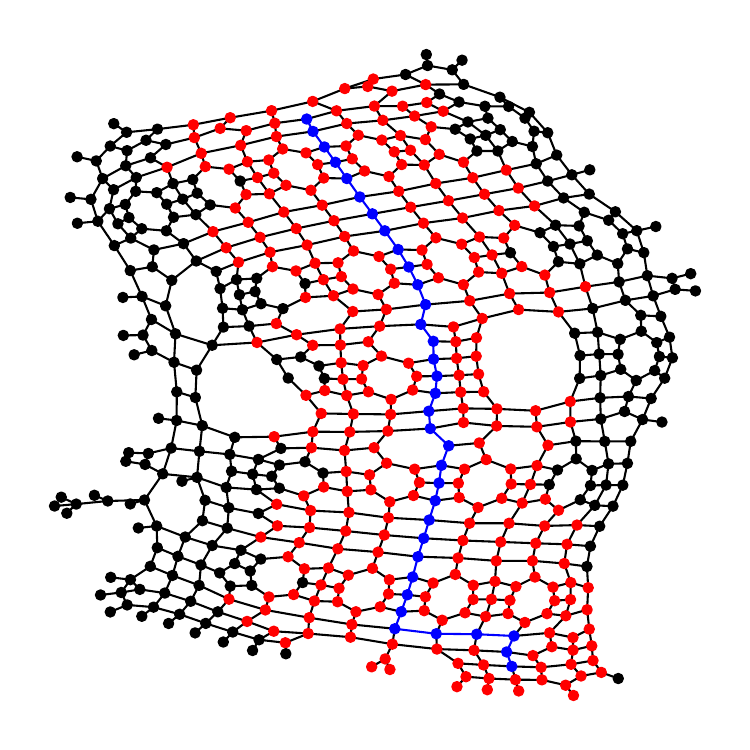}}
        \caption{$k=4$\\$C_4(P^\star) = 292$, $|P^\star| = 35$}
        \label{fig:savannah-4}
    \end{subfigure}%
    ~ 
    \begin{subfigure}[c]{0.32\textwidth}
        \resizebox{\textwidth}{!}{\includegraphics{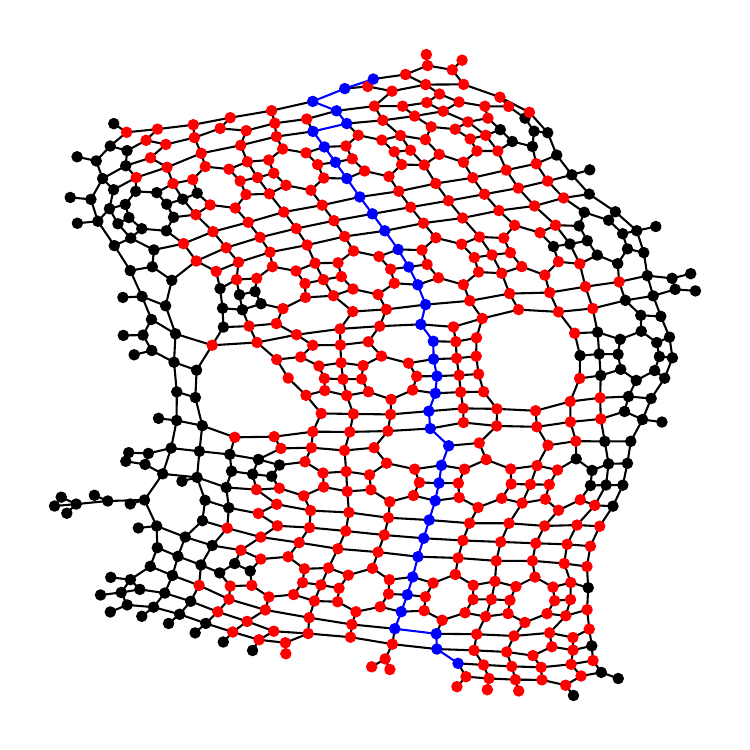}}
        \caption{$k=5$\\$C_5(P^\star) = 370$, $|P^\star| = 37$}
        \label{fig:savannah-5}
    \end{subfigure}%
    ~ 
    \begin{subfigure}[c]{0.32\textwidth}
        \resizebox{\textwidth}{!}{\includegraphics{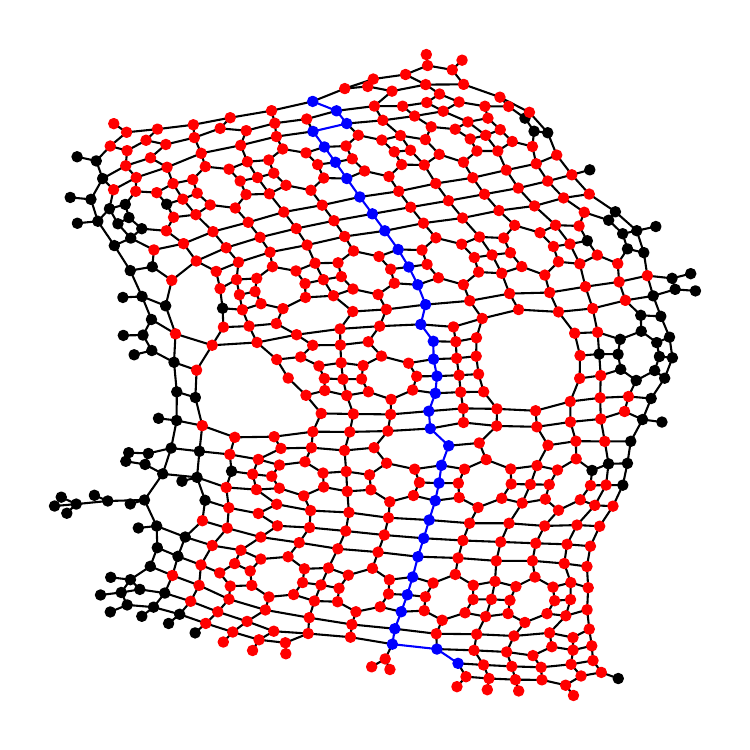}}
        \caption{$k=6$\\$C_6(P^\star) = 440$, $|P^\star| = 35$}
        \label{fig:savannah-6}
    \end{subfigure}%
    \vspace{1em}
    \begin{subfigure}[c]{0.32\textwidth}
        \resizebox{\textwidth}{!}{\includegraphics{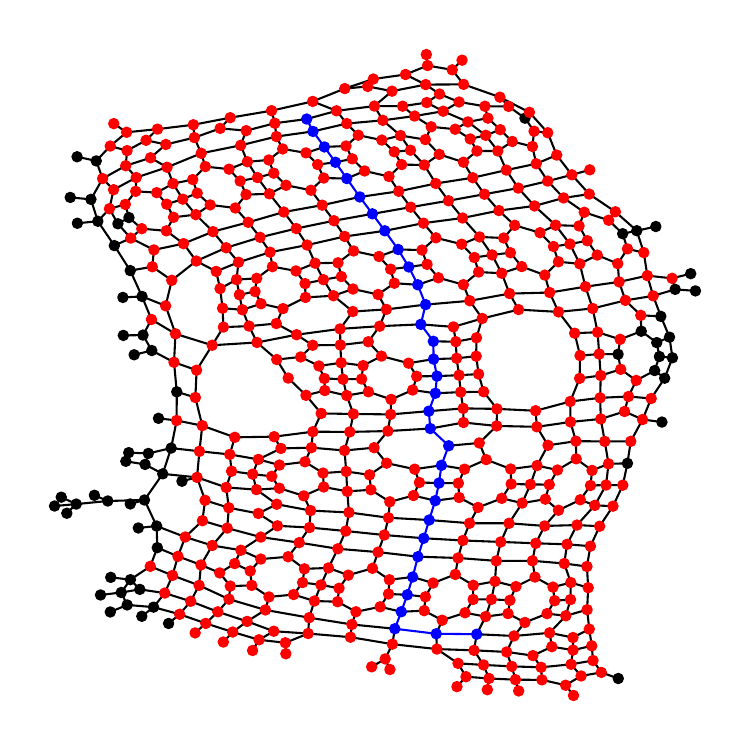}}
        \caption{$k=7$\\$C_7(P^\star) = 487$, $|P^\star| = 32$}
        \label{fig:savannah-7}
    \end{subfigure}%
    ~ 
    \begin{subfigure}[c]{0.32\textwidth}
        \resizebox{\textwidth}{!}{\includegraphics{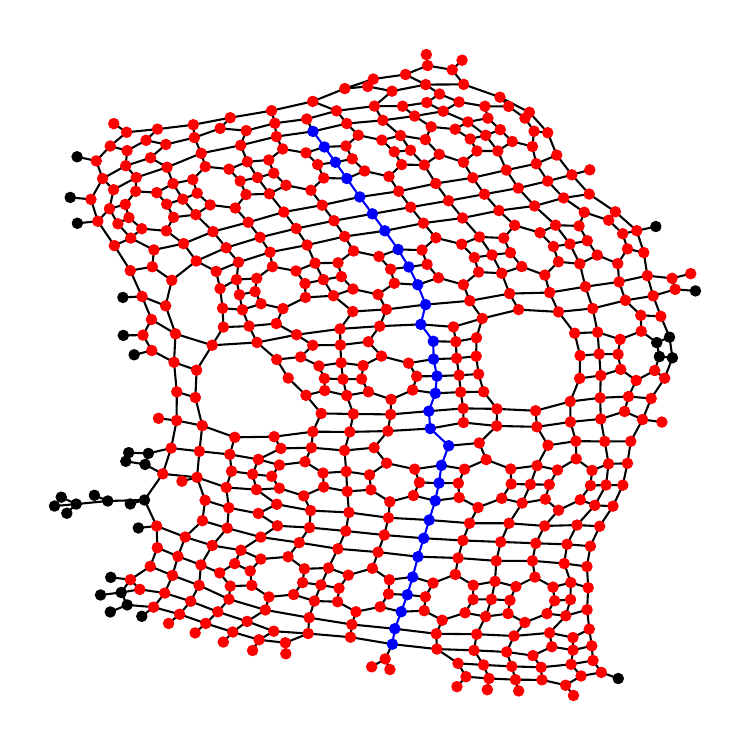}}
        \caption{$k=8$\\$C_8(P^\star) = 522$, $|P^\star| = 30$}
        \label{fig:savannah-8}
    \end{subfigure}%
    ~ 
    \begin{subfigure}[c]{0.32\textwidth}
        \resizebox{\textwidth}{!}{\includegraphics{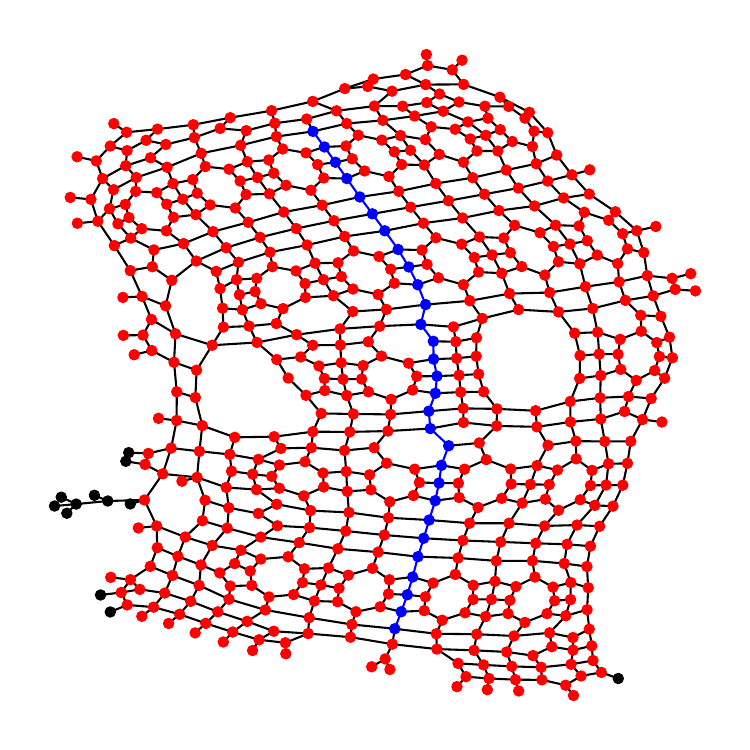}}
        \caption{$k=9$\\$C_9(P^\star) = 543$, $|P^\star| = 29$}
        \label{fig:savannah-9}
    \end{subfigure}
    \caption{$k$-step-central shortest paths for Savannah (diameter: 39).}
    \label{fig:savannah}
\end{figure}

\begin{figure*}[!htb]
    \centering
    \captionsetup{justification=centering}
    \begin{subfigure}[c]{0.32\textwidth}
        \resizebox{\textwidth}{!}{\includegraphics{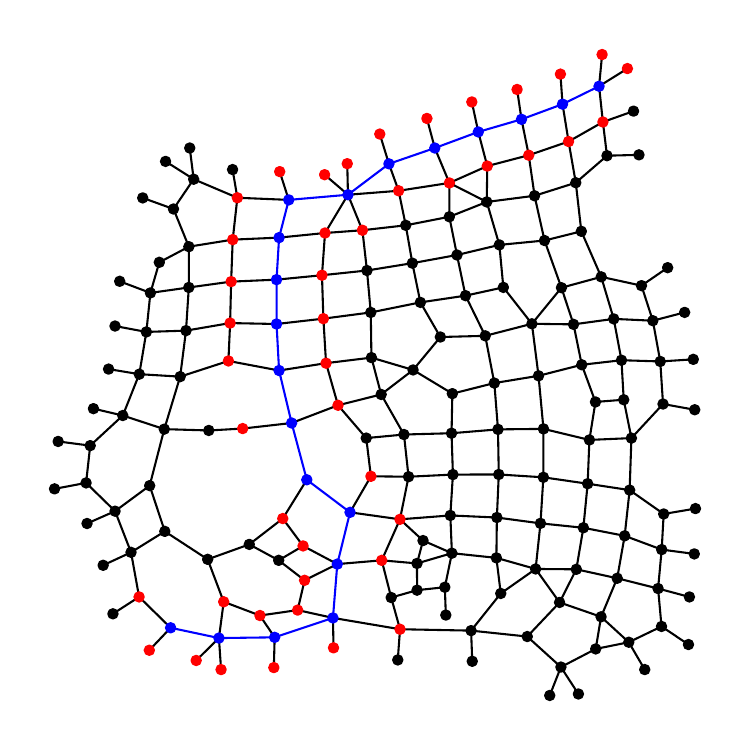}}
        \caption{$k=1$\\$C_1(P^\star) = 44$, $|P^\star| = 20$}
        \label{fig:washington-1}
    \end{subfigure}%
    ~ 
    \begin{subfigure}[c]{0.32\textwidth}
        \resizebox{\textwidth}{!}{\includegraphics{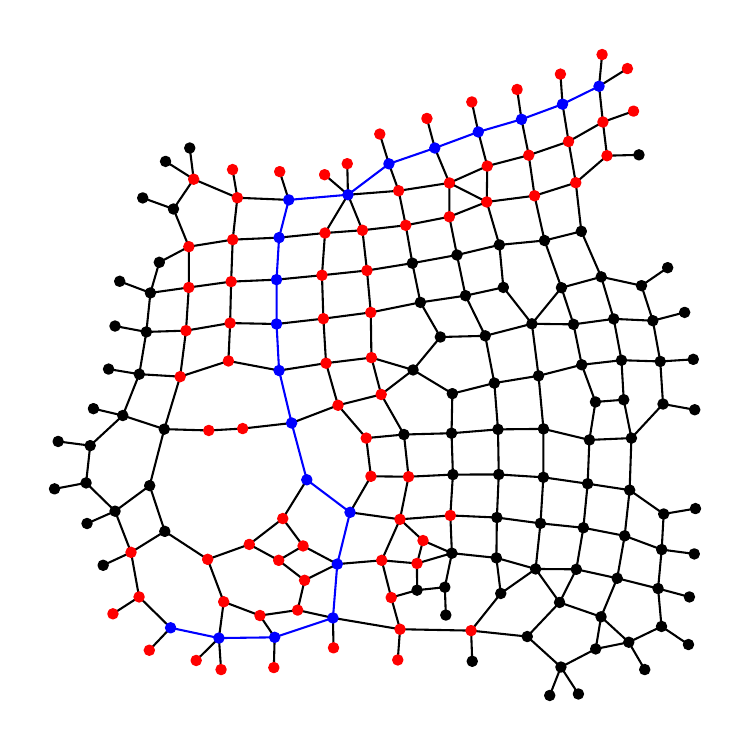}}
        \caption{$k=2$\\$C_2(P^\star) = 75$, $|P^\star| = 20$}
        \label{fig:washington-2}
    \end{subfigure}%
    ~ 
    \begin{subfigure}[c]{0.32\textwidth}
        \resizebox{\textwidth}{!}{\includegraphics{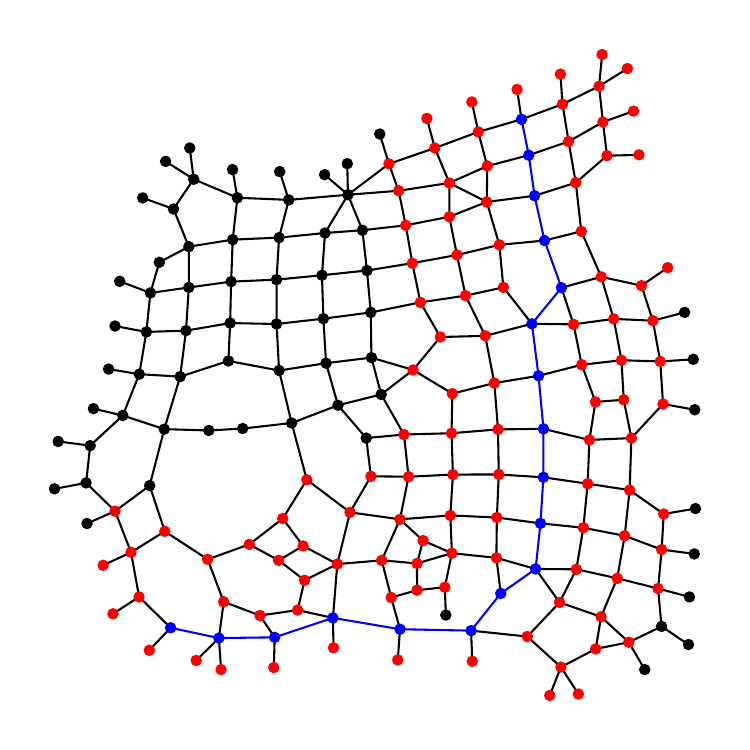}}
        \caption{$k=3$\\$C_3(P^\star) = 109$, $|P^\star| = 18$}
        \label{fig:washington-3}
    \end{subfigure}%
    \vspace{1em}
    \begin{subfigure}[c]{0.32\textwidth}
        \resizebox{\textwidth}{!}{\includegraphics{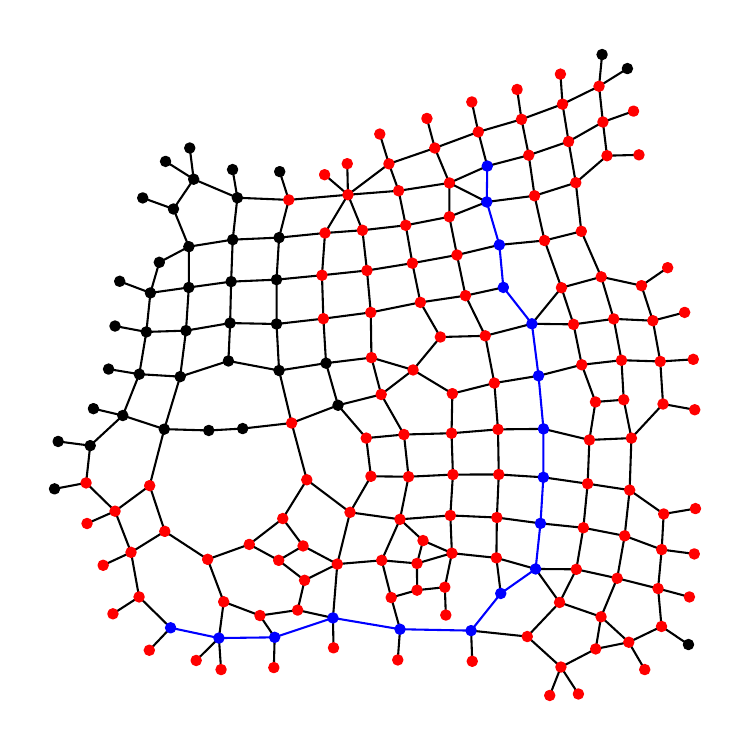}}
        \caption{$k=4$\\$C_4(P^\star) = 135$, $|P^\star| = 17$}
        \label{fig:washington-4}
    \end{subfigure}%
    ~ 
    \begin{subfigure}[c]{0.32\textwidth}
        \resizebox{\textwidth}{!}{\includegraphics{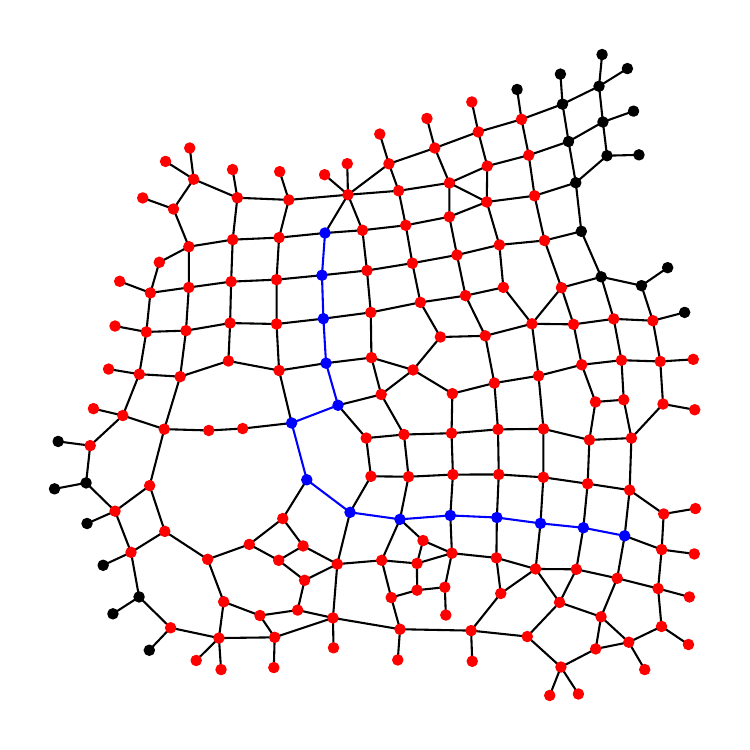}}
        \caption{$k=5$\\$C_5(P^\star) = 153$, $|P^\star| = 14$}
        \label{fig:washington-5}
    \end{subfigure}%
    ~ 
    \begin{subfigure}[c]{0.32\textwidth}
        \resizebox{\textwidth}{!}{\includegraphics{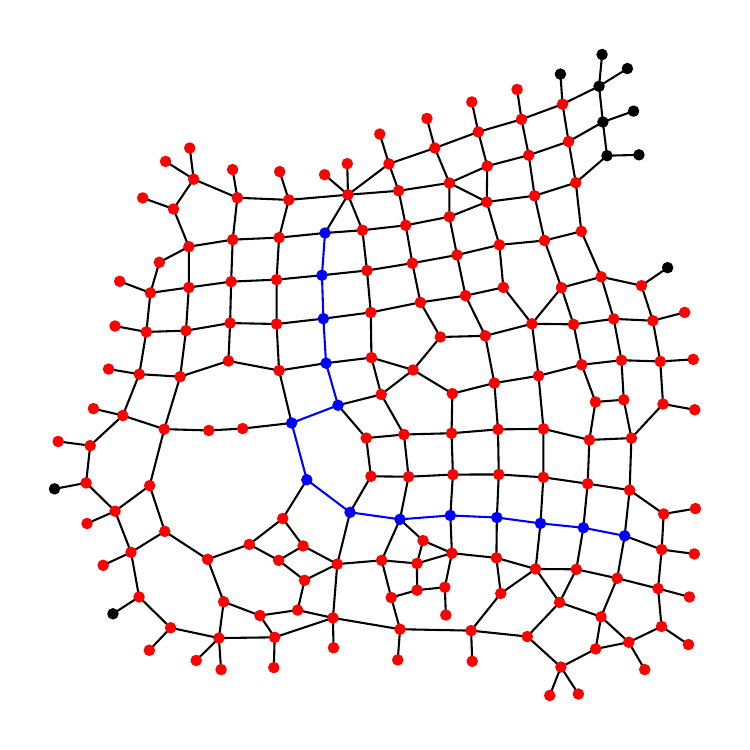}}
        \caption{$k=6$\\$C_6(P^\star) = 167$, $|P^\star| = 14$}
        \label{fig:washington-6}
    \end{subfigure}%
    \vspace{1em}
    \begin{subfigure}[c]{0.32\textwidth}
        \resizebox{\textwidth}{!}{\includegraphics{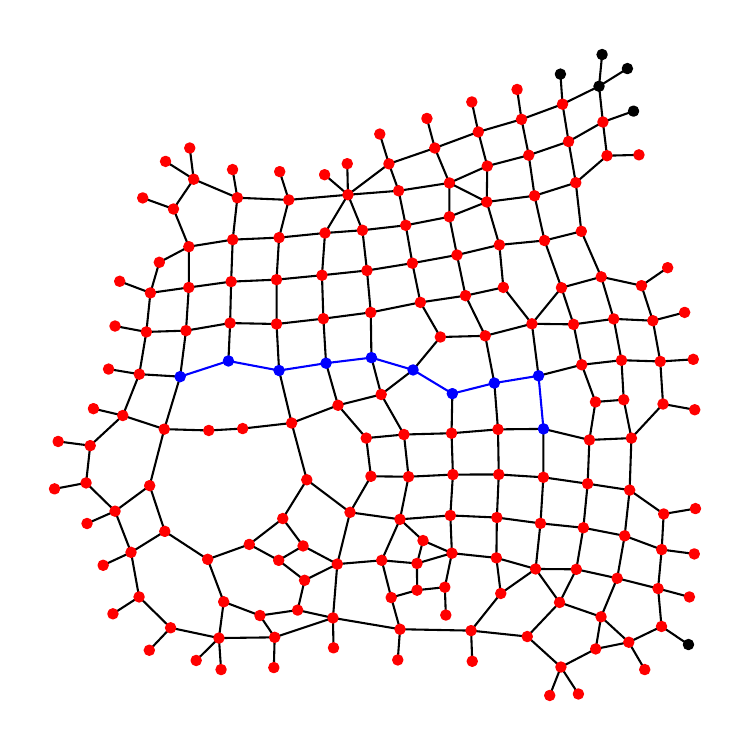}}
        \caption{$k=7$\\$C_7(P^\star) = 176$, $|P^\star| = 10$}
        \label{fig:washington-7}
    \end{subfigure}%
    ~ 
    \begin{subfigure}[c]{0.32\textwidth}
        \resizebox{\textwidth}{!}{\includegraphics{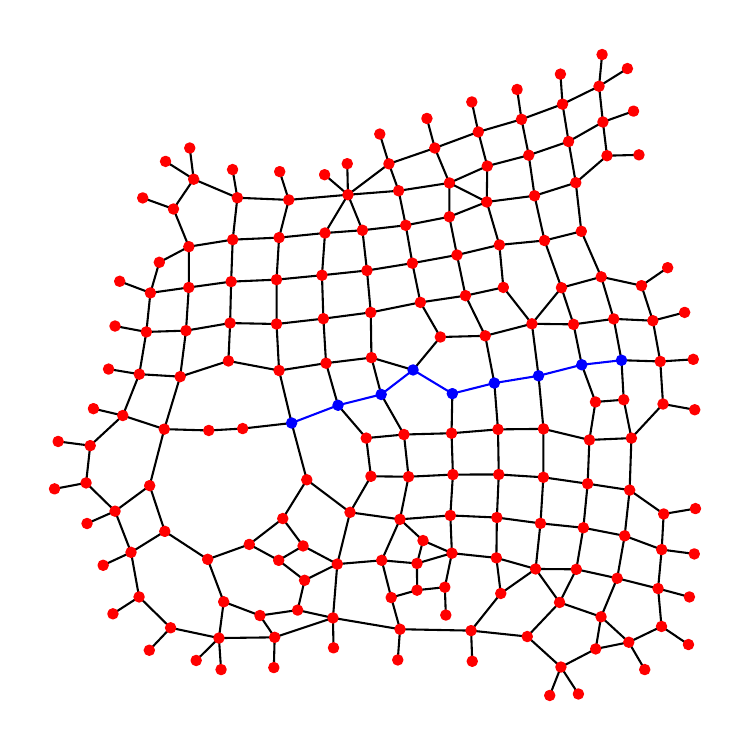}}
        \caption{$k=8$\\$C_8(P^\star) = 183$, $|P^\star| = 9$}
        \label{fig:washington-8}
    \end{subfigure}%
    ~ 
    \begin{subfigure}[c]{0.32\textwidth}
        \resizebox{\textwidth}{!}{\includegraphics{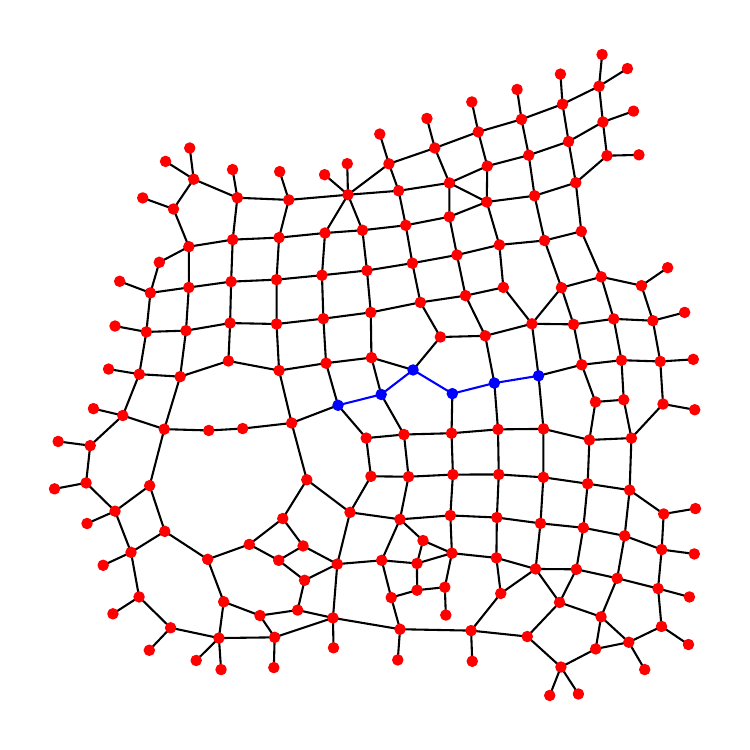}}
        \caption{$k=9$\\$C_9(P^\star) = 186$, $|P^\star| = 6$}
        \label{fig:washington-9}
    \end{subfigure}
    \caption{$k$-step-central shortest paths for Washington (diameter: 21).}
    \label{fig:washington}
    
\end{figure*}

\end{document}


\begin{tikzpicture}[font=\small,]
    \begin{groupplot}[
        group style={
            group size=2 by 1,
            horizontal sep=1em,
            yticklabels at=edge left,
            ylabels at=edge left,
            xlabels at=edge bottom,
        },
        x grid style={darkgray},
        xlabel={$k$},
        xmin=0, xmax=10,
        xtick style={color=black},
        y grid style={darkgray},
        ylabel={Relative length/neighbourhood of path},
        ymin=0, ymax=1.05,
        ytick style={color=black},
        ytick={0,0.2,...,1},
        legend columns=4,
        legend style={/tikz/every even column/.append style={column sep=0.5cm}}
    ]
    \nextgroupplot[title={Relative length of path},legend to name=legendpos1]
    \addplot [draw=black, mark=o, mark options={solid}, solid]
        table{%
        x  y
        1 0.9583333333333334
        2 0.9583333333333334
        3 0.7916666666666666
        4 0.7083333333333334
        5 0.6666666666666666
        6 0.5416666666666666
        7 0.4583333333333333
        8 0.375
        9 0.2916666666666667
        };
    \addlegendentry{Barcelona}
    \addplot [draw=red, mark=x, mark options={solid}, solid]
        table{%
        x  y
        1 0.9473684210526315
        2 0.8947368421052632
        3 0.8421052631578947
        4 0.8157894736842105
        5 0.7368421052631579
        6 0.7105263157894737
        7 0.631578947368421
        8 0.631578947368421
        9 0.631578947368421
        };
    \addlegendentry{Bologna}
    \addplot [draw=blue, mark=triangle, mark options={solid}, solid]
        table{%
        x  y
        1 0.9090909090909091
        2 0.7272727272727273
        3 0.7272727272727273
        4 0.6363636363636364
        5 0.5909090909090909
        6 0.5454545454545454
        7 0.45454545454545453
        8 0.45454545454545453
        9 0.4090909090909091
        };
    \addlegendentry{Brasilia}
    \addplot [draw=orange, mark=diamond, mark options={solid}, solid]
        table{%
        x  y
        1 0.875
        2 0.625
        3 0.5
        4 0.25
        5 0.125
        6 0.125
        7 0.125
        8 0.125
        9 0.125
        };
    \addlegendentry{Irvine1}
    \addplot [draw=green, mark=o, mark options={solid}, dashed]
        table{%
        x  y
        1 0.9565217391304348
        2 0.782608695652174
        3 0.6956521739130435
        4 0.6086956521739131
        5 0.5217391304347826
        6 0.5217391304347826
        7 0.43478260869565216
        8 0.34782608695652173
        9 0.2608695652173913
        };
    \addlegendentry{Irvine2}
    \addplot [draw=brown, mark=x, mark options={solid}, dashed]
        table{%
        x  y
        1 0.9545454545454546
        2 0.8636363636363636
        3 0.8636363636363636
        4 0.7727272727272727
        5 0.6818181818181818
        6 0.6363636363636364
        7 0.5
        8 0.36363636363636365
        9 0.2727272727272727
        };
    \addlegendentry{Los Angeles}
    \addplot [draw=violet, mark=triangle, mark options={solid}, dashed]
        table{%
        x  y
        1 0.96
        2 0.8
        3 0.8
        4 0.76
        5 0.72
        6 0.68
        7 0.56
        8 0.48
        9 0.36
        };
    \addlegendentry{New Delhi}
    \addplot [draw=cyan, mark=diamond, mark options={solid}, dashed]
        table{%
        x  y
        1 0.9583333333333334
        2 0.9583333333333334
        3 0.9583333333333334
        4 0.7916666666666666
        5 0.7083333333333334
        6 0.7083333333333334
        7 0.5833333333333334
        8 0.5
        9 0.4166666666666667
        };
    \addlegendentry{New York}
    \addplot [draw=olive, mark=o, mark options={solid}, dotted]
        table{%
        x  y
        1 0.9285714285714286
        2 0.8928571428571429
        3 0.8571428571428571
        4 0.8571428571428571
        5 0.75
        6 0.7142857142857143
        7 0.6071428571428571
        8 0.6071428571428571
        9 0.5714285714285714
        };
    \addlegendentry{Paris}
    \addplot [draw=purple, mark=x, mark options={solid}, dotted]
        table{%
        x  y
        1 0.9591836734693877
        2 0.9387755102040817
        3 0.8979591836734694
        4 0.8571428571428571
        5 0.8367346938775511
        6 0.5714285714285714
        7 0.5918367346938775
        8 0.6122448979591837
        9 0.6122448979591837
        };
    \addlegendentry{Richmond}
    \addplot [draw=teal, mark=triangle, mark options={solid}, dotted]
        table{%
        x  y
        1 0.98
        2 0.94
        3 0.9
        4 0.86
        5 0.84
        6 0.82
        7 0.8
        8 0.78
        9 0.78
        };
    \addlegendentry{Savannah}
    \addplot [draw=brown, mark=diamond, mark options={solid}, dotted]
        table{%
        x  y
        1 0.9313725490196079
        2 0.9215686274509803
        3 0.8921568627450981
        4 0.6470588235294118
        5 0.6274509803921569
        6 0.6176470588235294
        7 0.6078431372549019
        };
    \addlegendentry{Seoul}
    \addplot [draw=lime, mark=o, mark options={solid}, dashdotted]
        table{%
        x  y
        1 0.96875
        2 0.90625
        3 0.90625
        4 0.84375
        5 0.78125
        6 0.71875
        7 0.65625
        8 0.65625
        9 0.625
        };
    \addlegendentry{Vienna}
    \addplot [draw=magenta, mark=x, mark options={solid}, dashdotted]
        table{%
        x  y
        1 0.9090909090909091
        2 0.8636363636363636
        3 0.8181818181818182
        4 0.7272727272727273
        5 0.6818181818181818
        6 0.5909090909090909
        7 0.5
        8 0.4090909090909091
        9 0.3181818181818182
        };
    \addlegendentry{Walnut Creek}
    \addplot [draw=gray, mark=triangle, mark options={solid}, dashdotted]
        table{%
        x  y
        1 0.9523809523809523
        2 0.9523809523809523
        3 0.8571428571428571
        4 0.8095238095238095
        5 0.6666666666666666
        6 0.6666666666666666
        7 0.47619047619047616
        8 0.42857142857142855
        9 0.2857142857142857
        };
    \addlegendentry{Washington}
    \coordinate (c1) at (rel axis cs:0,1);
    \nextgroupplot[title={Relative size of neighbourhood}]
    \addplot [draw=black, mark=o, mark options={solid}, solid]
        table{%
        x  y
        1 0.20952380952380953
        2 0.41904761904761906
        3 0.6190476190476191
        4 0.7476190476190476
        5 0.8428571428571429
        6 0.9047619047619048
        7 0.9476190476190476
        8 0.9571428571428572
        9 0.9666666666666667
        };
    \addplot [draw=red, mark=x, mark options={solid}, solid]
        table{%
        x  y
        1 0.10536044362292052
        2 0.22181146025878004
        3 0.3456561922365989
        4 0.4584103512014787
        5 0.5471349353049908
        6 0.6303142329020333
        7 0.7060998151571165
        8 0.7911275415896488
        9 0.8502772643253235
        };
    \addplot [draw=blue, mark=triangle, mark options={solid}, solid]
        table{%
        x  y
        1 0.1564245810055866
        2 0.3128491620111732
        3 0.4748603351955307
        4 0.6089385474860335
        5 0.7150837988826816
        6 0.7821229050279329
        7 0.8547486033519553
        8 0.9050279329608939
        9 0.9385474860335196
        };
    \addplot [draw=orange, mark=diamond, mark options={solid}, solid]
        table{%
        x  y
        1 0.4375
        2 0.75
        3 0.875
        4 0.9375
        5 0.96875
        6 0.96875
        7 0.96875
        8 0.96875
        9 0.96875
        };
    \addplot [draw=green, mark=o, mark options={solid}, dashed]
        table{%
        x  y
        1 0.1382488479262673
        2 0.2995391705069124
        3 0.46543778801843316
        4 0.5898617511520737
        5 0.6497695852534562
        6 0.7235023041474654
        7 0.7695852534562212
        8 0.783410138248848
        9 0.7926267281105991
        };
    \addplot [draw=brown, mark=x, mark options={solid}, dashed]
        table{%
        x  y
        1 0.16666666666666666
        2 0.35833333333333334
        3 0.5583333333333333
        4 0.7291666666666666
        5 0.8458333333333333
        6 0.9166666666666666
        7 0.95
        8 0.9625
        9 0.975
        };
    \addplot [draw=violet, mark=triangle, mark options={solid}, dashed]
        table{%
        x  y
        1 0.1388888888888889
        2 0.27380952380952384
        3 0.3968253968253968
        4 0.5238095238095238
        5 0.6547619047619048
        6 0.7658730158730159
        7 0.8531746031746031
        8 0.9087301587301587
        9 0.9404761904761905
        };
    \addplot [draw=cyan, mark=diamond, mark options={solid}, dashed]
        table{%
        x  y
        1 0.18548387096774194
        2 0.3467741935483871
        3 0.5080645161290323
        4 0.6290322580645161
        5 0.7298387096774194
        6 0.8266129032258065
        7 0.9153225806451613
        8 0.9435483870967742
        9 0.9596774193548387
        };
    \addplot [draw=olive, mark=o, mark options={solid}, dotted]
        table{%
        x  y
        1 0.14328358208955225
        2 0.2865671641791045
        3 0.4388059701492537
        4 0.5701492537313433
        5 0.6835820895522388
        6 0.7910447761194029
        7 0.8656716417910447
        8 0.9253731343283582
        9 0.9522388059701492
        };
    \addplot [draw=purple, mark=x, mark options={solid}, dotted]
        table{%
        x  y
        1 0.11621233859397417
        2 0.24103299856527977
        3 0.3715925394548063
        4 0.5050215208034433
        5 0.6040172166427547
        6 0.703012912482066
        7 0.8034433285509326
        8 0.8737446197991392
        9 0.926829268292683
        };
    \addplot [draw=teal, mark=triangle, mark options={solid}, dotted]
        table{%
        x  y
        1 0.125
        2 0.238013698630137
        3 0.3647260273972603
        4 0.5
        5 0.6335616438356164
        6 0.7534246575342466
        7 0.833904109589041
        8 0.8938356164383562
        9 0.9297945205479452
        };
    \addplot [draw=brown, mark=diamond, mark options={solid}, dotted]
        table{%
        x  y
        1 0.08860759493670886
        2 0.18181818181818182
        3 0.28883774453394706
        4 0.3866513233601841
        5 0.475258918296893
        6 0.5581127733026467
        7 0.6317606444188723
        8 0.7031070195627158
        9 0.759493670886076
        };
    \addplot [draw=lime, mark=o, mark options={solid}, dashdotted]
        table{%
        x  y
        1 0.10706638115631692
        2 0.23126338329764454
        3 0.34475374732334046
        4 0.4539614561027837
        5 0.563169164882227
        6 0.6638115631691649
        7 0.7473233404710921
        8 0.8137044967880086
        9 0.8715203426124197
        };
    \addplot [draw=magenta, mark=x, mark options={solid}, dashdotted]
        table{%
        x  y
        1 0.1834319526627219
        2 0.35502958579881655
        3 0.5266272189349113
        4 0.6686390532544378
        5 0.7751479289940828
        6 0.8520710059171598
        7 0.9112426035502958
        8 0.9467455621301775
        9 0.9585798816568047
        };
    \addplot [draw=gray, mark=triangle, mark options={solid}, dashdotted]
        table{%
        x  y
        1 0.22916666666666666
        2 0.390625
        3 0.5677083333333334
        4 0.703125
        5 0.796875
        6 0.8697916666666666
        7 0.9166666666666666
        8 0.953125
        9 0.96875
        };
        \coordinate (c2) at (rel axis cs:1,1);
    \end{groupplot}
    \coordinate (c3) at ($(c1)!.5!(c2)$);
    \node[below] at (c3 |- current bounding box.south)
      {\pgfplotslegendfromname{legendpos1}};
\end{tikzpicture}